\documentclass[12pt]{amsart}
\usepackage[utf8]{inputenc}
\usepackage{amssymb,amsmath,amsfonts,latexsym,verbatim,mathtools,tabularx,booktabs,caption}
\usepackage[dvipsnames]{xcolor}

\usepackage{tikz}
\usetikzlibrary{decorations.pathreplacing, arrows.meta}

\usepackage{pgfplots}
\pgfplotsset{compat=1.18}

\usepackage{subcaption}

\usepackage{hyperref}

\hypersetup{
pdfborderstyle={},
allcolors=accentcolor,
    colorlinks=true,
    linkcolor=blue,
    citecolor=blue,
    filecolor=magenta,      
    urlcolor=cyan,
    pdftitle={Common ratio effect},
    pdfpagemode=FullScreen,
    }

\def\mnoss{{MNOSS} }
\def\mnossc{ MNOSS}

\def\Re{\mathbf{R}}

\def\E{\mathbf{E}}
\def\Cov{\mathbf{Cov}}
\def\Pr{\mathbf{Pr}}

\def\ep{\varepsilon}

\def\phi{\varphi}

\def\then{\Longrightarrow}

\newcommand{\df}[1]{\textit{\textbf{#1}}}

\newtheorem{theorem}{Theorem}

\newtheorem{proposition}[theorem]{Proposition}
\newtheorem{prps}[theorem]{Proposition}

\theoremstyle{definition}

\theoremstyle{remark}

\newtheorem{remark}{Remark}

\theoremstyle{definition}

\newtheorem{exm}{Example}
\newtheorem{example}[exm]{Example}

\newtheoremstyle{named}{}{}{\itshape}{}{\bfseries}{:}{.5em}{#1\thmnote{#3}}
\theoremstyle{named}

\usepackage{tikz}
\usetikzlibrary{trees}
\usetikzlibrary{patterns}

\usepackage[authoryear]{natbib}\setcitestyle{square}

	\usepackage[left=1.25in,top=1.25in,right=1.25in,bottom=1.25in]{geometry}
	\usepackage[onehalfspacing]{setspace}

\title[Allais by paired choices vs.\ valuations]{Robust Testing Of the Allais Paradox By Paired Choices vs.\ Paired Valuations}
\author[Echenique and Tserenjigmid]{Federico Echenique and Gerelt Tserenjigmid}
\thanks{Echenique: Department of Economics, UC Berkeley, \texttt{fede@econ.berkeley.edu} \\ Tserenjigmid: Department of Economics, UC Santa Cruz,  \texttt{gtserenj@ucsc.edu}.}
\thanks{We are very grateful to Christina McGranaghan, Kirby Nielsen, Ted O'Donoghue, Jason Somerville, and Charlie Sprenger for feedback and advice on an earlier version of our paper.}

\begin{document}

\begin{abstract}
\cite*{mcgranaghan2024distinguishing} show that standard paired choice tests for the common ratio effect are structurally biased when choice is stochastic, proposing valuation tests as a robust alternative. Using valuation tests, they find no systematic evidence for the common ratio effect, seemingly overturning much of the extant literature. We evaluate this conclusion in light of stochastic choice theory. We argue that valuation tests are inherently biased and lack predictive power under standard expected utility assumptions. In contrast, we advocate for a ``strong'' paired choice test, proving it remains robustly unbiased across common models of stochastic choice. Applying this strong test to existing experimental data, we find that the common ratio effect remains highly prevalent.
\end{abstract}

\maketitle
\thispagestyle{empty}

\clearpage

\tableofcontents
\thispagestyle{empty}
\setcounter{page}{0}

\newpage

\section{Introduction}

Few behavioral phenomena have had the impact of the \cite{allais1953comportement} paradox. It has sparked a large empirical literature and motivated many theoretical models of non-expected utility \citep{machina1987choice,camerer1995individual,machina2008non, barberis2013thirty, machina2018non}. We are concerned with a basic aspect of the Allais paradox, the common ratio effect. In particular, we focus on the meaning of the common ratio effect when there is noise or randomness in agents' choices. Such randomness seems inevitable in any serious empirical discussion of the Allais paradox. 

To be concrete, consider two choice problems. First, a choice between a sure payment $A$ and a risky lottery $B$. Second, a choice between $C$ and $D$, which are scaled-down versions of $A$ and $B$ involving the same probabilities. An expected utility maximizer chooses $A$ over $B$ if and only if they choose $C$ over $D$. The Allais paradox is the systematic violation of this prediction. The \df{common ratio effect} is the behavior of choosing $A$ over $B$ and $D$ over $C$. A test for whether the probability of choosing $A$ over $B$ equals the probability of choosing $C$ over $D$ is called the \df{weak paired choice test}.

\cite{Loomes_2005} and \cite*{mcgranaghan2024distinguishing} show that randomness in choices may induce the common ratio effect even for an expected utility agent. In other words, the weak paired choice test that defines the common ratio effect is biased against expected utility when choices are random.

This bias has motivated \cite{mcgranaghan2024distinguishing} (MNOSS) to propose \textbf{paired valuation tests} instead of paired choices (essentially eliciting certainty equivalents, or scaled certainty equivalents, for lotteries $B$ and $D$). There is a mean and sign version of the valuation test (see Section~\ref{sec:valtests}). \mnoss demonstrate that these tests can perform better than the weak paired choice test when choices are stochastic, and run new experiments to elicit valuation data. Their analysis using valuation tests shows little support for the common ratio effect using aggregate data, and ``substantial heterogeneity'' at the individual level. They find that individual evidence for the common ratio effect in valuations is predictive of common ratio effects in paired choices, which validates their methodology. Consequently, they conclude, there is no support for a systematic common ratio effect.

The results in \mnoss thus seem to overturn the majority of the existing empirical evidence, which has overwhelmingly favored the common ratio effect. For example, a recent survey by \cite{Blavatskyy_Panchenko_Ortmann_2023} documents that 78\% of 143 studies testing expected utility theory find evidence consistent with the common ratio effect. 

Our paper contributes to the debate by taking a close look at models of expected utility and stochastic choice. The bias of the weak paired choice test relies on an assumed model of stochastic choice. The test is biased under the stochastic choice model that adds an i.i.d. noise term to expected utility: we shall call it the i.i.d. Additive Random Expected Utility (iAREU) model. While common in the experimental literature, as a random version of expected utility, the iAREU is questionable. Under the iAREU, each realized utility function will, with probability one, be non-expected utility. The independence axiom is violated with probability one. 

A natural alternative model is the random expected utility model of \cite{gul2006random}.\footnote{For an exposition of stochastic choice, including a discussion and comparison of iAREU and random expected utility, see \cite{strzalecki2025stochastic}. We follow his terminology here and repeat some of the points he makes in comparing various models.} The random expected utility model delivers an expected utility function with probability one. Our first observation is that \emph{the weak paired choice test is unbiased} under the random expected utility model. In fact, the main axiom used by Gul and Pesendorfer rules out the common ratio effect in a weak paired choice test. 

If one abandons the weak test, a robust alternative exists. We advocate for a \textbf{strong paired choice test}: If choices are noisy, it is natural to say that an agent prefers $A$ to $B$ if they choose $A$ more than half the time. The strong test simply requires that an EU agent prefers $A$ to $B$ if and only if they prefer $C$ to $D$. We show that the strong test remains unbiased for the different models of random utility that we consider in the paper, including the iAREU.

Next, we study the proposed valuation tests and show that they can be problematic. In fact, we obtain an ``anything goes'' result for the mean valuation test under assumptions that include the iAREU model as a special case. This means that a very wide range of mean valuations is consistent with expected utility theory. For the sign test, we also show that ``anything goes'' once the error terms affecting individual choice can be arbitrarily correlated, even when each error is symmetric and has a mean of zero. See Proposition~\ref{prop:anythingoes} for a formal statement of these results.

\begin{table}[ht]
    \footnotesize
    \centering
    % This redefines the 'X' column to use 'm' (middle) instead of 'p' (top)
    \renewcommand{\tabularxcolumn}[1]{m{#1}}
    
    \begin{tabularx}{\textwidth}{l >{\centering\arraybackslash}X >{\centering\arraybackslash}X >{\centering\arraybackslash}X >{\centering\arraybackslash}X}
        \toprule
         & Weak paired choice test & Strong paired choice test & Mean valuation test & Sign valuation test \\ 
        \midrule
        Random EU & Unbiased & Unbiased & Unbiased & Unbiased \\ \addlinespace
        Fechnerian random utility & Biased & Unbiased & Biased & Unbiased$^*$ \\ \addlinespace
        Random Perception and Utility & & Unbiased$^*$ & Biased & Biased$^*$ \\ \addlinespace
        Random prospect theory & & Unbiased & Biased & Biased \\ \addlinespace
        Assumption 2b & Biased & Unbiased & Anything goes & Anything goes \\ \addlinespace
        Assumption 3 &  & Unbiased & Anything goes & Unbiased \\ \addlinespace
        Weak expected utility & & Unbiased & & \\ 
        \bottomrule
    \end{tabularx}
    \caption{Summary of results}\label{table:summary}
\end{table}

Table~\ref{table:summary} contains a summary of our results. The models listed in Table~\ref{table:summary} are described in detail in Section~\ref{sec:stochchoicemodels} below. The ``Random EU'' model is Gul and Pesendorfer's random expected utility model, while Fechnerian models are well-known generalizations of iAREU. Random perception and random prospect theory assume randomness in probabilities, either as perception or measurement error, or as a random probability weighting function. The row for ``Assumption 2b'' corresponds to the assumptions in \mnossc, under which the weak test is biased. \mnoss show that the sign test is unbiased under different assumptions, which we have called ``Assumption 3'' in the paper. Our strong test is also unbiased under either of these assumptions.

The result on the unbiasedness of the strong paired choice for the model of random prospect theory, which seems to be new, also requires assumptions that are laid out and discussed in our paper. The entries in the table that are marked with an asterisk require assumptions that are described in the paper.

Table~\ref{table:summary} highlights, on the one hand, that all the tests we have considered are unbiased for random expected utility. On the other hand, the strong paired choice test is broadly robust across the different ways in which one may model stochastic choice in the expected utility framework. If we adopt the strong paired test as our test for the common ratio effect, there is substantial support for the Allais paradox in the existing literature. The evidence on the strong paired choice test is discussed in Section~\ref{sec:empirical}. 

Finally, the strongest critique of the weak paired choice tests argues that a wide range of choice frequencies is consistent with expected utility and unrestricted preference shocks. We reproduce this critique in Section~\ref{sec:empirical}. The flip side of this critique is that allowing for such arbitrary noise leads to a test with very low power. Using simulated choices from prospect theory, in Section~\ref{sec:empirical}, we illustrate the problem of low power and compare it with the strong paired test.

\paragraph{\emph{Empirical implications}.} When we look at the data using the strong paired choice test, we find strong empirical support for the common ratio effect. In the  143 experimental studies from the recent meta-analysis by \cite{Blavatskyy_Panchenko_Ortmann_2023},  41\% of these studies display a common ratio effect, while 7\% display a reverse common ratio effect. When weighting these studies by the number of participants, over 50\% of all surveyed experiments exhibit either a common ratio effect or a reverse common ratio effect. 

Applying the strong test to the experimental data in \mnoss reveals a 10\% prevalence of the common ratio effect and a 10\% prevalence of the reverse effect. This lower detection rate reinforces one of their critiques: parameter selection affects results. Historically, the literature has tested for the common ratio effect within a selective region of the parameter space. Our analysis of their data and additional numerical exercises in Section~\ref{sec:empirical} confirms that stepping outside the traditional window diminishes the effect. Arbitrary parameters often fail to induce the behavioral tension of the common ratio effect, even under prospect theory. Understanding how and why specific parameter choices drive these preference reversals, and what the proper strategy should be for testing the common ratio effect, seems like an important question that deserves more attention, which we begin to discuss in Section~\ref{sec:empirical}.

Section~\ref{sec:takeaway} summarizes the main takeaway messages of our paper and discusses potential applications of our strong paired choice test and methodology for detecting different behavioral puzzles, such as \emph{present bias} in intertemporal choice, from choice data.

\section{Preliminaries}\label{sec:preliminaries}

We study choices between lotteries with objective risk. Given is a finite set $X\subset\Re_{+}$ of monetary prizes with $0\in X$. A \df{lottery} is a probability distribution over $X$. We denote by $\Delta(X)$ the set of all lotteries. We focus on simple lotteries with two outcomes: a monetary payoff $x> 0$ and $0$. The lottery that pays $x$ with probability $p$ and $0$ with probability $1-p$ is denoted by $(x,p)$. 

Before stating our results, we briefly review the necessary background notions from stochastic choice theory. Our terminology follows the recent monograph by \cite{strzalecki2025stochastic}. 

\subsection{Stochastic choice.}\label{sec:stochchoicemodels}

A stochastic choice function $\rho$ specifies, for any finite set $D\subseteq \Delta(X)$, the probability of choosing each lottery in $D$. Thus $\rho(\ell \mid D)$ is the probability of choosing $\ell\in D$. We focus on sets $D$ with two lotteries, which allow us to simplify the notation. When $D=\{\ell_A,\ell_B\}$, we write $\rho(\ell_A,\ell_B)$, or just $\rho(A,B)$, for the probability of choosing lottery $\ell_A$ from the doubleton set $\{\ell_A,\ell_B\}$. 

Empirical analysis of choice data often assumes that choice is random. There are at least two sources of random choice: preference heterogeneity and random preferences. First, if the choice data are collected from a population of agents, choices are stochastic due to preference heterogeneity --- different people make different choices when they face the same choice problem. Second, choice data may be stochastic due to random preferences: the same agent makes different choices when they face the same choice problem. The same agent may choose different options due to mistakes, a preference for randomization, and so on.\footnote{Individuals often make different choices in apparently identical situations, even when the interval between successive choices is very short (see \cite{tversky1969intransitivity}, \cite{ballinger1997}, \cite{hey2001does}, and \cite{agranov2017stochastic}.)} 

In deterministic settings, a typical test of expected utility theory is to check the Independence Axiom via paired choice tasks. For example, if $A=(x, 1)$ is chosen over $B=(y, p)$, but $D=(y, r p)$ is chosen over $C=(x, r)$, then we would conclude that expected utility theory is violated in favor of the common ratio effect. The question at hand is how to test expected utility theory (or the common ratio effect) when stochastic choice data is observed, where stochasticity is due to preference heterogeneity and random preferences. In particular, we observe $\rho(A, B)$ and $\rho(C, D)$. Without restrictions on (or a model of) stochastic choice data, such testing is not possible. Below, we review some of the most popular models of stochastic choice. 

\subsubsection{Random utility}
We say that a stochastic choice function $\rho$ is a \df{random utility} choice function if there exists a utility function $v:\Delta(X)\to\Re$, together with a random function $\epsilon:\Delta(X)\to\Re$, such that 
\[
\rho(\ell,\ell')=\Pr \big(v(\ell)+\epsilon(\ell)> v(\ell')+\epsilon(\ell')\big),
\] for all $\ell,\ell'\in\Delta(X)$, where $\Pr$ denotes the probability inherent in the probability model of the random function $\epsilon$.\footnote{When the set of alternatives is finite, which is always true in the experimental setting, the additive structure of error is without loss of generality (see \cite{strzalecki2025stochastic}).}

\subsubsection{Fechnerian random utility}
The stochastic choice function $\rho$ is \df{Fechnerian} if there exists a utility function $v:\Delta(X)\to\Re$, together with a strictly monotonically increasing function $F:\Re\to\Re$ that is symmetric (meaning that $F(x)=1-F(-x)$) such that 
\[
\rho(\ell,\ell')=F\big(v(\ell )-v(\ell')\big).
\] 

A random utility model given by a utility $v$ and a random error $\epsilon$, such that $\epsilon(\ell)$ are i.i.d. draws from some strictly increasing distribution function, is Fechnerian, where $F$ is the cumulative distribution function of the difference $\epsilon(\ell)-\epsilon(\ell')$.\footnote{Since $\rho(\ell_A, \ell_B)+\rho(\ell_B, \ell_A)=1$, we need to have $F(x)=1-F(-x)$. This implies $F(0)=\frac{1}{2}$.} In this case, we shall always assume (in fact, without loss of generality) that $\E[\epsilon(\ell)]=0$.

\subsubsection{i.i.d. Additive Random Expected Utility (iAREU)}\label{sec:iAREU}

A special Fechnerian model obtains when the utility function $v$ takes the expected utility form. Let the von Neumann-Morgenstern (vNM) utility function $u:X\to\Re$ be such that $v(\ell)=\E_{\ell} u(x)$; the latter expectation is taken over the random prizes in the lottery $\ell$. We shall always normalize $u$ so that $u(0)=0$ and assume that $u$ is continuous and strictly monotonically increasing. In a notational shortcut, we write $\E_{\ell} u(x)$ as $\E[u(\ell)]$.

If $\ell$ is the simple lottery $(x,p)$, then $v(\ell)=\E[u(\ell)]=p\,u(x)$. Once $v$ has an expected utility representation, the assumption that $\E[\epsilon]=0$ is restrictive. 

Note that, if the errors $\epsilon(\cdot)$ are independent and identically distributed (i.i.d),   then 
\[\rho(\ell,\ell') = 
\Pr \big(v(\ell)-v(\ell')> \epsilon(\ell')-\epsilon(\ell)\big) = F\big(\E u(\ell)- \E u(\ell')\big),
\]
where $F$ is the cumulative distribution function of the differences in errors, $\epsilon(\ell')-\epsilon(\ell)$. Such a random choice function is termed an \df{i.i.d. additive random expected utility} (iAREU).

\subsubsection{Random Expected Utility}\label{sec:REU}

In contrast to the model of additive random utility, the \df{random expected utility} model takes the function $u$ to be random. For example, if the utility of a monetary payoff $x$ is $u(x)+\epsilon(x)$, where $\epsilon:X\to\Re$ is a random function, then the utility of a simple lottery $(x,p)$ is $v(\ell)=p(u(x)+\epsilon(x))$. The stochastic choice between two simple lotteries $\ell=(x,p)$ and $\ell'=(x',p')$ is 
\[
\rho(\ell,\ell') = \Pr\big(p[u(x)+\epsilon(x)] > p'[u(x')+\epsilon(x')] \big).
\]
The random expected utility model is axiomatized by \cite{gul2006random}. The key axiom behind the model, termed Linearity, is motivated as a stochastic version of the Independence Axiom of the expected utility theory. We shall see that Linearity is intimately tied to the weak paired choice test.

\subsubsection{Random expected utility vs.\ iAREU}\label{sec:REUvsiAREU}

For our purposes, it is important to understand the choice between iAREU and random expected utility as modeling devices. \cite{strzalecki2025stochastic} provides a detailed discussion (see Section 4.6), which we proceed to summarize. The bottom line is that the iAREU model is problematic. 

First, under the model of random expected utility, the random utility has the expected utility form with probability one. Under the iAREU model, \emph{this occurs with probability zero.} In other words, the iAREU model of stochastic choice puts zero weight on a preference over lotteries that has the expected utility form. 

Second, the iAREU model can violate monotonicity in the sense of first-order stochastic dominance. Random expected utility, in contrast, always respects first-order stochastic dominance. 

Third, random expected utility will preserve the preferences over risk of the underlying von Neumann Morgenstern utility; but iAREU can subvert these. For example, it may reverse the comparison of the risk attitudes of two decision makers (see \cite{WILCOX201189} and \cite{apesteguia2018monotone}).

\begin{comment}

\subsubsection{Random probability and probability weights}\label{sec:randomweights}
Finally, we consider models in the spirit of prospect theory (\cite{kahnemann1979prospect}). The utility of a lottery that pays $x$ with probability $p$ and $0$ with probability $1-p$ is then $v(p)u(x)$ when we normalize $u(0)=0$. A random version of prospect theory takes $\tilde v(\cdot)$ and $\tilde u(\cdot)$ to be random functions, and thus, we obtain a random utility function $\tilde v(p)\tilde u(x)$ which models random deviations from expected utility theory. We focus on random functions such that the random utility $\tilde v(p)\tilde u(x)$ is centered around a deterministic expected utility $p\,u(x)$ in symmetric fashion (for a deterministic vNM function $u$) so that it deviates from expected utility theory without favoring either the common ratio effect or the reverse common ratio effect (see Sections~\ref{sec:strongpairedtest},~\ref{sec:signtest} and~\ref{sec:prelec}).

\end{comment}

\subsection{Random choice in \texorpdfstring{\cite{mcgranaghan2024distinguishing}}{McGranaghan et al}}\label{sec:mcgraassumptions}

\mnoss formulate their model using a reduced-form device. They use the iAREU model to show that weak paired choice tests are biased, but then move to a reduced-form model that is particularly well suited to analyzing valuation tests. The model is motivated as a generalization of iAREU.

We lay out their assumptions here.

\smallskip
\noindent\textbf{Assumption 1.} There is a strictly increasing function $\Gamma:\Re^2\to\Re$ with the property that $\Gamma(m, 0)=m$ for all $m$. The pair of random variables $(\ep_{\mathrm{AB}}, \ep_{\mathrm{CD}})$ is drawn from a continuous joint distribution with convex support such that
\[\rho(A, B)=\Pr(x\ge \Gamma(m^*_{\mathrm{AB}}, \ep_{\mathrm{AB}}))\]
and 
\[\rho(C, D)=\Pr(x\ge \Gamma(m^*_{\mathrm{CD}}, \ep_{\mathrm{CD}})).\]

For iAREU, it is easy to see that $\Gamma(m, \ep)=u^{-1}\big(u(m)+\ep\big)$ and $m^*_{\mathrm{AB}}=m^*_{\mathrm{CD}}=u^{-1}(p\,u(y))$.  

It is important to note that noise terms $\ep_{\mathrm{AB}}$ and $\ep_{\mathrm{CD}}$ are fundamentally different from error terms $\epsilon(\ell)$. In particular, $\epsilon(\ell)$ captures perceptual errors and preference heterogeneity/random preferences, while $\ep_{\mathrm{AB}}$ and $\ep_{\mathrm{CD}}$ are residual terms that can be derived from $\epsilon(\ell)$ given the random choice model and von Neumann-Morgenstern utility $u$ (see Section~\ref{sec:strongpairedtest} and the model in Equation~\ref{eq:perception} for an example of how these errors may be obtained from more primitive preference shocks). 

\mnoss show that the weak paired choice test is biased under either of the following two assumptions.

\bigskip
\noindent\textbf{Assumption 2a:} $\Gamma(m, \ep)=m+\ep$, $\ep_{\mathrm{AB}}\overset{d}{=} k\,\ep_{\mathrm{CD}}$ for $k>0$, and $\E[\ep_{\mathrm{AB}}]=\E[\ep_{\mathrm{CD}}]=0$. 

\bigskip
\noindent\textbf{Assumption 2b:} $\ep_{\mathrm{AB}}\overset{d}{=} k\,\ep_{\mathrm{CD}}$ for $k>0$, and $\ep_{\mathrm{AB}}$ is symmetric around zero.\bigskip

Comparing the two sets of assumptions, note that Assumption 2a imposes stronger restrictions on $\Gamma$ but weaker restrictions on noise terms, while Assumption 2b imposes no restrictions on $\Gamma$ but stronger assumptions on noise terms. Either way, the assumptions impose that $\ep_{\mathrm{AB}}\overset{d}{=} k\,\ep_{\mathrm{CD}}$ for $k>0$ and $\E[\ep_{\mathrm{AB}}]=\E[\ep_{\mathrm{CD}}]=0$. When introducing Assumption 2a, \mnoss interpret the noise terms as an additive disturbance to an underlying value. They think of the additive model as a statistical model of  valuations, not as the result of an underlying model of stochastic choice.

\mnoss also show that the sign valuation test is unbiased under the following assumption.

\medskip
\noindent\textbf{Assumption 3:} The joint distribution $(\ep_{\mathrm{AB}}, \ep_{\mathrm{CD}})$ is symmetric around some median vector $(\varepsilon', \varepsilon')$.\medskip

Assumption 3, when compared to Assumption 2b, drops the requirement $\ep_{\mathrm{AB}}\overset{d}{=} k\,\ep_{\mathrm{CD}}$ for $k>0$, but imposes joint symmetry (a central symmetry of the joint density function). Assumption 3 implies that each marginal distribution is also symmetric around $\ep'$; and the symmetry around a median $\ep'$ implies that the mean is also equal to $\ep'$.\footnote{We are assuming here that the errors have finite means.} 

Thus, all three assumptions in \mnoss imply that $\E[\ep_{\mathrm{AB}}] = \E[\ep_{\mathrm{CD}}]$. We discuss this property in Proposition~\ref{prop:pvtestbiased} and Proposition~\ref{prop:epABepCD} below.

\begin{remark}\label{rmk:assm32b}
    Observe that Assumption 3 is not weaker than 2b.  For example, suppose $(\epsilon_{AB}, \epsilon_{CD})=(X, X)$ with probability $0.8$ and $(\epsilon_{AB}, \epsilon_{CD})=(X, -X)$ with probability $0.2$ where $X \sim \text{Uniform}[-1, 1]$. Note that $\ep_{\mathrm{AB}}\sim \text{Uniform}[-1, 1]$ and $\ep_{\mathrm{CD}}\sim \text{Uniform}[-1, 1]$ and $\ep_{\mathrm{AB}}\overset{d}{=} \ep_{\mathrm{CD}}$. Yet, $(\ep_{\mathrm{AB}}, \ep_{\mathrm{CD}})$ is not symmetric. (In Appendix~\ref{sec:assmn2b3} we provide a more elaborate example that has a positive density on $[-1,1]^2$.)
\end{remark}

\subsection{Tasks}\label{sec:tasks}

Given two lotteries, $\ell$ and $\ell'$, a \df{paired choice task} elicits a choice between $\ell$ and $\ell'$.  The common ratio effect typically involves two paired-choice tasks:

\smallskip
\noindent \textbf{AB choice:}
\[A=(x, 1)\text{ or } B=(y, p).\]
\textbf{CD choice:}
\[C=(x, r)\text{ or } D=(y, r\,p).\]

A \df{valuation task} takes as given a probability $q$ and a lottery $\ell$. Then it elicits a monetary quantity $m$ such that $(m,q)$ is indifferent to $\ell$. \mnoss use \df{paired valuation tasks} motivated by the AB and CD choices in the common ratio effect. This task elicits two monetary amounts $m_{\mathrm{AB}}$ and $m_{\mathrm{CD}}$ such that 
\begin{align*}
    (m_{\mathrm{AB}},1) & \sim (y,p) \\
    (m_{\mathrm{CD}},r) & \sim (y,rp)
\end{align*}

\subsection{Paired Choice Tests}\label{sec:tests}

The linearity axiom of \cite{gul2006random}, a stochastic-choice version of the independence axiom, implies that $\rho(A, B)=\rho(C, D)$. The \df{weak paired choice test} for expected utility theory is that $\rho(A, B)=\rho(C, D)$. When this is rejected in favor of $\rho(A, B)>\rho(C, D)$, we say that the stochastic choice $\rho$ exhibits the \df{common ratio effect}.

The \df{strong paired choice test} for expected utility theory is that $\rho(A, B)\geq 1/2$ if and only if $\rho(C, D)\geq 1/2$. When this is rejected in favor of $\rho(A, B)\ge 1/2$ and $\rho(C, D)<1/2$ (or $\rho(A, B)>1/2$ and $\rho(C, D)\le 1/2$), we say that the stochastic choice $\rho$ exhibits the \df{common ratio effect}. \cite{kahnemann1979prospect} described the common ratio effect in these terms, as a rejection of the strong paired choice test. Indeed, the common ratio effect is defined in this way by \cite{ballinger1997}, who pioneered the literature on testing the common ratio effect in a stochastic choice framework (see p.1092).

The idea behind the strong test goes back to a classical question in welfare economics and stochastic choice: when can we conclude that $A$ is preferred to $B$? If one insists on deterministic choice, we might decide that $A$ is preferred to $B$ only when $\rho(A, B)=1$. For empirical purposes, however, one has to work with stochastic choice.  We may then say that $A$ is preferred to $B$ whenever $\rho(A, B)$ is large enough (e.g., see \cite{fishburn1978choice}); and it is natural to say that large enough means $\rho(A, B)\ge \rho(B,A)$. So we might decide that $A$ is preferred to $B$ when $\rho(A,B)\geq 1/2$. This is in line with  Logit models of discrete choice (in fact, by all Fechnerian models), and used by \cite{kahnemann1979prospect}'s descriptions of the common ratio effect as well as the common consequence effect. 

Experimental evidence on the Allais paradox often involves a between-subjects analysis: Stochastic choice reflects the choices of a population of individuals, and the data are only useful to the extent that the population shares some common preferences. Here we seek to infer whether $A$ is commonly preferred to $B$ from observing $\rho(A,B)$. And again, $\rho(A,B)\geq 1/2$ is a natural requirement for the inference that $A$ is ranked above $B$.

What does this mean for the common ratio effect and the Allais paradox? In Allais' original deterministic-choice thought experiment, he would require $A$ to be preferred over $B$ and $D$ to be preferred over $C$ (with at least one strict preference). Given our discussion of how to formulate the question in a setting of stochastic choice, the minimum requirement for the common ratio effect is that $\rho(A, B)>1/2\ge \rho(C, D)$ (or $\rho(A, B)\ge 1/2>\rho(C, D)$). See \cite{ballinger1997} for a similar argument. Note that, while these inequalities are a minimal requirement for the effect, they imply that $\rho(A, B)>\rho(C, D)$. Hence, the strong test requires stronger evidence than the weak test does before it rejects expected utility theory.\footnote{\cite{ballinger1997} note that the inequality, $\rho(A, B)\ge 1/2\ge \rho(C, D)$ with one strict inequality, is necessary for detecting the common ratio effect, but additional assumptions are required for sufficiency. Our theoretical results in Section~\ref{sec:strongpairedtest} clarify the assumptions (or stochastic choice models) required for sufficiency.} 

\subsection{Valuation Tests}\label{sec:valtests}

In contrast to the paired choice tests, the valuation test involves eliciting valuations for lotteries $B$ and $D$ as described above. The \df{valuation test} is the statement that $m_{\mathrm{AB}}=m_{\mathrm{CD}}$.

When choice is stochastic, these valuations will be random, even when the decision maker is consistent with expected utility theory. Therefore, we have two versions of the valuation test:
\begin{itemize}
    \item The \df{sign test} requires that $\Pr(m_{\mathrm{AB}}>m_{\mathrm{CD}})=1/2$.
    \item The \df{mean test} requires that $\E[m_{\mathrm{AB}}]=\E[m_{\mathrm{CD}}]$.
\end{itemize}
When the sign test is rejected in favor of $\Pr(m_{\mathrm{CD}}>m_{\mathrm{AB}})>1/2$, or the mean test is rejected in favor of $\E[m_{\mathrm{CD}}]>\E[m_{\mathrm{AB}}]$, then we say that the decision-maker exhibits the common ratio effect.

We start by pointing out that valuation tests do not provide sharp predictions under the iAREU model that motivated the move away from the weak paired choice test. One might have hoped that valuation tests were more robust to individual heterogeneity and noise, but the following proposition shows that, in a sense, ``anything goes'' for the valuation tests. In the next proposition, we assume that $\Gamma(m,\ep) = u^{-1}(u(m)+\ep)$ for some CRRA von-Neumann-Morgenstern utility function $u$.

\clearpage

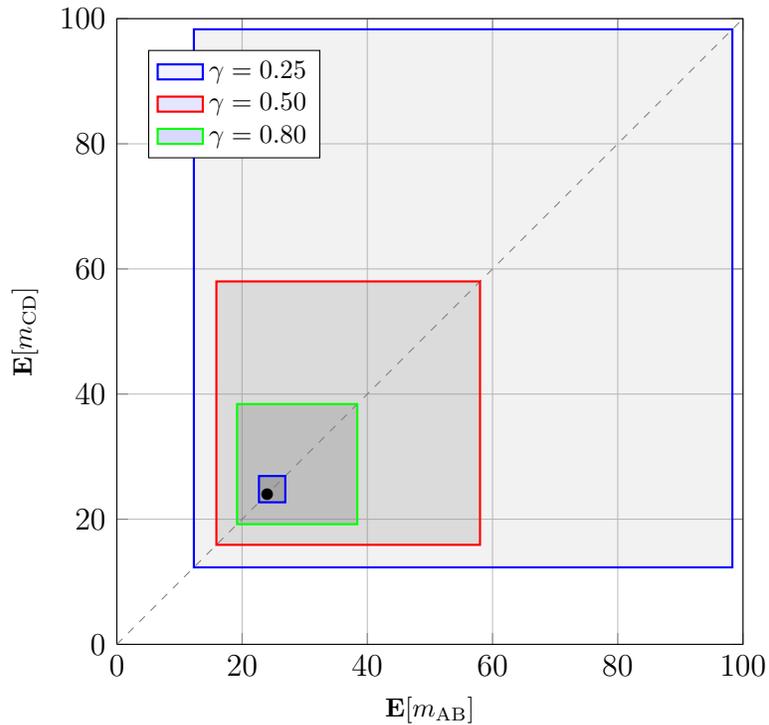
\begin{figure}[t]
    \centering
\begin{center}
\begin{tikzpicture}{scale=.5}
\begin{axis}[
    width=0.65\textwidth,
    height=0.65\textwidth,
    xlabel={\footnotesize $\E[m_{\mathrm{AB}}]$},
    ylabel={\footnotesize $\E[m_{\mathrm{CD}}]$},
    xmin=0, xmax=100,
    ymin=0, ymax=100,
    grid=both,
    legend style={at={(0.05,0.95)}, anchor=north west},
]

% 45-degree line to show symmetry
\addplot [dashed, color=gray, domain=0:100, forget plot] {x};
%\addlegendentry{Symmetry Line}

% \gamma = 0.25 (E_min = 12.3, E_max = 98.3)
\draw [fill=black!50, fill opacity=0.1, draw=blue, thick] 
    (12.3, 12.3) rectangle (98.3, 98.3);
\addlegendimage{area legend, fill=blue!50, fill opacity=0.1, draw=blue, thick}
\addlegendentry{\footnotesize $\gamma = 0.25$}

% \gamma = 0.35 (E_min = 15.9, E_max = 58.0)
\draw [fill=black!50, fill opacity=0.2, draw=red, thick] 
    (15.9, 15.9) rectangle (58.0, 58.0);
\addlegendimage{area legend, fill=blue!50, fill opacity=0.2, draw=red, thick}
%\addlegendentry{$\gamma = 0.35$}

% \gamma = 0.50 (E_min = 19.2, E_max = 38.4)
\draw [fill=black!50, fill opacity=0.3, draw=green, thick] 
    (19.2, 19.2) rectangle (38.4, 38.4);
\addlegendimage{area legend, fill=blue!50, fill opacity=0.3, draw=green, thick}
\addlegendentry{\footnotesize $\gamma = 0.50$}

% \gamma = 0.80 (E_min = 22.7, E_max = 26.9)
\draw [fill=black!50, fill opacity=0.4, draw=blue, thick] 
    (22.7, 22.7) rectangle (26.9, 26.9);
\addlegendimage{area legend, fill=blue!50, fill opacity=0.4, draw=blue, thick}
\addlegendentry{\footnotesize $\gamma = 0.80$}

%% \gamma = 1.0 (Point at 24, 24)
\addplot [only marks, mark=*, color=black, mark size=2pt] coordinates {(24, 24)};
%\addlegendentry{$\gamma = 1.0$}

\end{axis}
\end{tikzpicture}
\end{center}
    \caption{Possible values of $(\E[m_{\mathrm{AB}}],\E[m_{\mathrm{CD}}])$ as a function of $\gamma$ when $y=30$, $p=0.8$ and $r=0.4$. When $\gamma=1$ we obtain the point $(py,py) = (24,24)$. As $\gamma\to 0$ we obtain any vector in $\Re^2_{++}$.}
    \label{fig:anythinggoes}
\end{figure}

\begin{proposition}\label{prop:anythingoes} Consider an expected utility agent with a CRRA von-Neumann-Morgenstern utility function $u(x)=x^{\gamma}$, where $\gamma\in (0,1)$. Fix $y>0$ and $p,r\in (0,1)$. Suppose that $p>1/2$.
  \begin{enumerate}
    \item For any $(z_1,z_2)\in\Re^2_{++}$, there exists $\gamma\in (0,1)$ and a distribution of $(\ep_{\mathrm{AB}},\ep_{\mathrm{CD}})$ satisfying Assumption 2b, and for which each marginal distribution is symmetrically distributed around its mean and median (which equals 0), such that $$(z_1,z_2)=(\E[m_{\mathrm{AB}}],\E [m_{\mathrm{CD}}]).$$
    \item For any $q\in (0,1)$, there exists a distribution of $(\ep_{\mathrm{AB}},\ep_{\mathrm{CD}})$ satisfying Assumption 2b, and for which each marginal distribution is symmetrically distributed around its mean and median (which equals 0), for which \[ q = \Pr(m_{\mathrm{AB}}>m_{\mathrm{CD}}).\]
      \end{enumerate}
\end{proposition}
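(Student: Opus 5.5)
The plan is to make the valuations explicit and then reduce both parts to one-dimensional intermediate value arguments.

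\emph{Explicit valuations.} With $\Gamma(m,\ep)=u^{-1}(u(m)+\ep)$ and $u(x)=x^\gamma$, the deterministic valuation for both tasks is $m^*=u^{-1}(p\,u(y))=p^{1/\gamma}y$. Indeed, $(m,r)\sim(y,rp)$ gives $r\,m^\gamma=rp\,y^\gamma$, so the same $m^*$ arises in the CD task. Writing $a:=p\,y^\gamma$, we get
\[
m_{\mathrm{AB}}=(a+\ep_{\mathrm{AB}})^{1/\gamma},\qquad m_{\mathrm{CD}}=(a+\ep_{\mathrm{CD}})^{1/\gamma}.
\]
To keep valuations well defined, the errors will be supported in $[-a,a]$. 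Because $1/\gamma>1$, the map $\ep\mapsto(a+\ep)^{1/\gamma}$ is convex. So for a symmetric mean-zero $\ep$ on $[-a,a]$, the mean valuation lies in
\[
\big[\,p^{1/\gamma}y,\ \tfrac12(2a)^{1/\gamma}\big)=\big[\,p^{1/\gamma}y,\ \tfrac12(2p)^{1/\gamma}y\big).
\]
The lower end is approached as $\ep\to0$, and the upper end as $\ep$ approaches the two-point distribution on $\pm a$. These values match Figure~\ref{fig:anythinggoes}; for instance, $12.3$ and $98.3$ at $\gamma=0.25$. As $\gamma\to0$, the lower end tends to $0$ because $p<1$, and the upper end tends to $\infty$ because $2p>1$. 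This is exactly where the hypothesis $p>1/2$ enters.

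\emph{Part (1).} Given $(z_1,z_2)$, first fix $\gamma$ small enough that both $z_i$ lie strictly inside the interval above, with some margin. Next fix a continuous symmetric distribution $W$ on $[-1,1]$ that puts enough mass near $\pm1$ that $\E[(a+aW)^{1/\gamma}]>\max(z_1,z_2)$; this is possible by continuity from the two-point limit. Define $g(s)=\E[(a+sW)^{1/\gamma}]$ for $s\in[0,a]$. Then $g$ is continuous and increasing (by convexity and symmetry), with $g(0)=p^{1/\gamma}y<\min z_i$ and $g(a)>\max z_i$. By the intermediate value theorem there are $s_1,s_2\in(0,a)$ with $g(s_i)=z_i$. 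Set $\ep_{\mathrm{AB}}=s_1W_1$ and $\ep_{\mathrm{CD}}=s_2W_2$, where $W_1,W_2$ are independent copies of $W$. The joint law is then continuous with convex (rectangular) support. Each marginal is symmetric around mean and median $0$, and $\ep_{\mathrm{AB}}\overset{d}{=}k\,\ep_{\mathrm{CD}}$ with $k=s_1/s_2$, so Assumption 2b holds. The main obstacle is this step: checking monotonicity and continuity of $g$ and that the sup is approached. I expect it to be routine via Jensen-type arguments, with monotonicity following from convex ordering of $sW$ in $s$.

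\emph{Part (2).} Now take identical marginals, with $u$ and $\gamma$ arbitrary. Since $(a+\ep)^{1/\gamma}$ is strictly increasing, $m_{\mathrm{AB}}>m_{\mathrm{CD}}$ if and only if $\ep_{\mathrm{AB}}>\ep_{\mathrm{CD}}$, so only the coupling matters. Let $U\sim\text{Uniform}[-c,c]$ with $c<a$, and define the rotation $V=((U+c+t)\bmod 2c)-c$ for $t\in[0,2c]$. Then $V$ is again uniform on $[-c,c]$, and $V<U$ exactly when wrap-around occurs, which has probability $t/(2c)$. To satisfy the continuity and convex-support requirement of Assumption 1, mix this coupling with weight $1-\delta$ against the independent product of uniforms with weight $\delta$. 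Both marginals stay uniform, so Assumption 2b holds with $k=1$ and symmetry. The joint law has full support on the square, and
\[
\Pr(\ep_{\mathrm{AB}}>\ep_{\mathrm{CD}})=(1-\delta)\tfrac{t}{2c}+\tfrac{\delta}{2}.
\]
For a given $q\in(0,1)$, choose $\delta<\min(2q,\,2-2q)$ and solve for $t\in(0,2c)$.
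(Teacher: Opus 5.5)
Your proof is correct. Part (1) is essentially the paper's argument: Jensen gives the floor $p^{1/\gamma}y$, and maximal spread gives the ceiling $\tfrac12(2p)^{1/\gamma}y$. The two scales are then chosen independently through $k$, and letting $\gamma\to0$ with $p>1/2$ opens the range to all of $\Re^2_{++}$.

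The paper realizes Part (1) with two-point laws $\{\pm c_i\}$, which are not continuous as Assumption 1 asks. Your scaled continuous $W$ with an explicit intermediate-value step is slightly more careful. Take $W$ with full support on $[-1,1]$, e.g. density proportional to $|w|^n$ with $n$ large, so that the product support really is the square. Only continuity of $g$ is needed for the intermediate-value theorem, so the monotonicity you flag as the main obstacle can be dropped.

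Your convention $m_{\mathrm{CD}}=(a+\ep_{\mathrm{CD}})^{1/\gamma}$ follows Assumption 1 literally. The paper instead carries a factor $1/r$ on $\ep_{\mathrm{CD}}$ from the iAREU derivation; this only changes $k$.

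Part (2) uses a genuinely different coupling. The paper takes uniforms of different widths $c>d$. It sets $X$ equal to an independent uniform $Z$ on $\{Z<-d\}$, and to $a+bY$ otherwise, a segment lying above the diagonal. This gives $\Pr(X>Y)=(c+d)/(2c)\in(1/2,1)$; the paper then swaps roles for $q<1/2$ and uses independence for $q=1/2$.

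Your circular shift of identical uniforms, mixed with independence, gains three things:
\begin{itemize}
\item It reaches every $q\in(0,1)$ through one formula, with no case split.
\item It works with $k=1$. So the sign test can be driven anywhere even when the two errors are identically distributed, which isolates the bias as a pure dependence effect.
\item Its joint law is supported on the whole square, hence has convex support; the paper's rectangle-plus-segment support is not convex.
\end{itemize}
Neither construction's joint law is absolutely continuous, since both carry a singular component on line segments. Both therefore meet the continuity part of Assumption 1 only in the atomless sense. The statement itself requires only Assumption 2b, so this does not affect correctness.
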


The first statement of Proposition~\ref{prop:anythingoes} is illustrated in Figure~\ref{fig:anythinggoes}, which shows the range of possible values of the expected valuations as the coefficient of relative risk aversion $\gamma$ decreases to $0$.

\begin{remark}
Proposition~\ref{prop:anythingoes} is in contrast with Proposition 2 of \mnossc, which shows that $\E[m_{\mathrm{AB}}] = \E[m_{\mathrm{CD}}]$ when $\Gamma(m,\ep)=m+\ep$, and that $\Pr(m_{\mathrm{AB}}> m_{\mathrm{CD}})=1/2$ under Assumption 3.
    
 The explanation for this discrepancy is that the mean test is very sensitive to the curvature of $\Gamma$, which depends on the agents' level of risk aversion. In the context of the iAREU model, Proposition 2 of \mnoss requires risk neutrality. We show that the level of risk aversion may be chosen to achieve any pair of non-negative expected valuations. The problems with the mean tests are acknowledged by \mnossc, who show that it can be biased under Assumption 2b.\footnote{This is shown formally in their online appendix and mentioned in the main text of the paper. Their main message is that there are assumptions under which the traditional weak paired choice test may be problematic, while tests using paired valuations may not be problematic.}

The sign test, on the other hand, is sensitive to the correlation between errors. When errors are independent, of course, we obtain that $\Pr(m_{\mathrm{AB}}> m_{\mathrm{CD}})=1/2$. But it seems important to allow for deviations from independence, especially (but not exclusively) if the test is to be used for individual-level data. Assumption 3 allows for correlated errors and ensures that the sign test is unbiased; however, it also limits the correlation among errors in significant ways, as we saw in the remark following Assumption 3 above. 

The issues with the sign test are not only about correlation: A version of statement 2 in Proposition~\ref{prop:anythingoes} holds for independent errors, see Appendix~\ref{sec:pairedval}.
\end{remark}

\section{Paired choice tests}\label{sec:choicetests}

In light of the issues with valuation tests highlighted by Proposition~\ref{prop:anythingoes}, we proceed to discuss the paired choice tests. We examine paired choice tests in the context of the different models of stochastic choice that we have introduced.

\subsection{The iAREU model}\label{sec:pairedadditive}

Following \cite{Loomes_2005} and \cite{mcgranaghan2024distinguishing},  we first consider the weak paired choice test under the Fechnerian iAREU model of stochastic choice (see Section~\ref{sec:iAREU}). Let $u:\Re_+\to\Re$ be a vNM utility and $G:\Re\to[0, 1]$ be strictly increasing and symmetric such that  $$\rho(\ell_A, \ell_B)=G\big(\E[u(\ell_A)]-\E[u(\ell_B)]\big).$$

Then, for the weak paired choice test, we have 
\begin{align*}
  \rho(A, B) & =G\big(u(x)-p\,u(y)\big) \\  
  \rho(C, D) & =G\big(r\,(u(x)-p\,u(y))\big).
\end{align*}

Hence,
\[\rho(A, B)=G\big(u(x)-p\,u(y)\big)\neq G\big(r\,(u(x)-p\,u(y))\big)=\rho(C, D)\] 
unless $p\,u(y)=u(x)$. More importantly,
\[\rho(A, B)>\rho(C, D)\text{ if }u(x)-p\,u(y)>0.\]

So the weak paired choice test is biased in the sense that it detects the common ratio effect even when the underlying model is iAREU, and therefore ``consistent'' with expected utility. In making this point, \cite{Loomes_2005} cautioned that the statement ``$\rho(A, B)$ is significantly higher than $\rho(C, D)$'' is sensitive to the assumptions made regarding the underlying model of stochastic choice. Under the iAREU model, the common ratio effect, as defined through the weak paired choice test, should not be seen as grounds to reject expected utility theory. \mnoss use the bias of the weak paired choice under the iAREU as a starting point to motivate their focus on valuation tests.

The status of iAREU as the incarnation of expected utility in stochastic choice is, however, questionable. There are several arguments against the iAREU: the iAREU model delivers non-expected utility preferences with probability one; it may violate monotonicity with respect to first-order stochastic dominance, and it can overturn the decision-maker's underlying risk preferences. See our discussion in Section~\ref{sec:REUvsiAREU}. What should we then make of the bias of the weak paired choice test? Is the bias of the test due to problems with the iAREU as a stochastic incarnation of expected utility theory; or is the test itself problematic?

\subsection{Random Expected Utility Model}\label{sec:pairedtestREU}

In random expected utility (Section~\ref{sec:REU}), the random utility of a binary lottery $(x, p)$ is given by $U(x, p)=p (u(x)+\epsilon_x)$, where $\epsilon_x$ captures a utility shock. The paired choice test is unbiased under this model of stochastic choice because 
\begin{align*}
  \rho(A, B)&=\Pr\big(u(x)+\epsilon_x>p(u(y)+\epsilon_y)\big) \\
  &=\Pr\big(r\,(u(x)+\epsilon_x)>r\, p(u(y)+\epsilon_y)\big)
=\rho(C, D).  
\end{align*}

In fact, the unbiasedness of the test holds without needing to make any assumptions about the (joint) distribution of errors $(\epsilon_x,\epsilon_y)$.

Notice that the random expected utility is a special case of additive random utility where $\epsilon_{p, x}=p\,\epsilon_x$. However,  the i.i.d.\ assumption made in the iAREU model~\eqref{sec:pairedadditive} is violated under random expected utility. Under random expected utility, errors $\epsilon_{p, x}=p\,\epsilon_x$ are independent but not identically distributed when $\epsilon_x$ are i.i.d. 

\cite{gul2006random} propose a behavioral justification for random expected utility in the form of the stochastic analogue of the independence axiom of classical expected utility theory. Independence requires that $\ell$ is preferred to $\ell'$ if and only if $\lambda \ell + (1-\lambda) \ell''$ is preferred to $\lambda \ell' + (1-\lambda) \ell''$. The version of the independence axiom in the context of stochastic choice, called Linearity, is:
\[ 
\rho(\ell, \ell')=\rho(\lambda \ell + (1-\lambda) \ell'', \lambda \ell' + (1-\lambda) \ell'').\footnote{This is Linearity for doubleton menus. For general menus, it is defined as: $\rho(\ell \mid D)=\rho(\lambda \ell + (1-\lambda) \ell' \mid  \lambda D + (1-\lambda) \{\ell'\})$.}
\]
Linearity is the key axiom that characterizes random expected utility, and justifies this model as the ``correct'' extension of expected utility theory to the stochastic choice framework. \cite{gul2006random} write ``\emph{Studies that investigate the empirical validity of expected utility theory predominantly use a random choice setting. For example, Kahneman and Tversky describe studies that report frequency distributions of choices among lotteries. These studies test expected utility theory by checking if the choice frequencies remain unchanged when each alternative is combined with some ﬁxed lottery; that is, by testing our linearity axiom.}''\footnote{They also write ``Linearity is analogous to the independence axiom of the von Neumann--Morgenstern theory. Note that this ``version'' of the independence axiom corresponds exactly to the version used in experimental settings. In the experimental setting, a group of subjects is asked to make a choice from a binary decision problem $D = \{x, x'\}$. Then the same group chooses from a second decision problem obtained from the first by replacing the lotteries $x \succsim x'$ in the original problem with $\lambda x + (1-\lambda)y$ and $\lambda x' + (1-\lambda)y$. Linearity requires that the frequency with which the lottery $x$ is chosen in the first problem is the same as the frequency with which the lottery $\lambda x + (1-\lambda)y$ is chosen in the second problem."} Their arguments imply that if one assumes a stochastic choice model that violates Linearity, then one already assumes, from the beginning, that a basic tenet of the theory is violated. We formalize this as Proposition~\ref{prop:linearity}.

Indeed, the iAREU model violates Linearity. To illustrate, consider lotteries $\ell_1=(\$10, 1)$ and $\ell_2=(\$30, 0.9)$. The random utility of $(x, p)$ is $U(x, p)=p\, u(x)+\epsilon$ where $\epsilon\sim U[-1, 1]$ is i.i.d. Let $u(10)=8$ and $u(30)=12$. Hence, $\rho(\ell_1, \ell_2)=0$. However, when $\ell_3=(0, 1)$ and $\lambda=0.25$, we have $\lambda \ell_1 + (1-\lambda) \ell_3=(\$10, 0.25)$ and $\lambda \ell_2 + (1-\lambda)\ell_3=(\$30, 0.225)$. Hence, 
\[ 
\rho(\lambda \ell_1 + (1-\lambda) \ell_3, \lambda \ell_2 + (1-\lambda) \ell_3)=\frac{169}{800}
>\rho(\ell_1, \ell_2)=0.
\]

This example shows that the iAREU model exhibits counterintuitive behavior in contexts that are similar to the Allais paradox. However, the problem is deeper than a single counterexample may suggest. Consider the following version of Linearity, formulated for simple binary lotteries. This version essentially amounts to the weak paired choice test. 

\smallskip
\noindent\textbf{Weak Linearity:} For any $(x, r)$ and $(y, q)$, 
\[\rho((x, 1), (y, q))=\rho((x, r), (y, r\,q)).\]

It turns out that Weak Linearity is fundamentally in conflict with the iAREU. To formalize this observation, we shall need an additional assumption. A stochastic choice function $\rho$ is \textbf{scalable} with respect to expected utility if there are functions $F:\Re^2_+\to [0, 1]$ and $u:\Re_{+}\to\Re_{+}$ such that $F$ is  strictly increasing in the first argument when $F\in (0, 1)$, and for any $\ell_A, \ell_B$,
\[\rho(\ell_A, \ell_B)=F\big(\E[u(\ell_A)], \E[u(\ell_B)]\big).\]

\begin{prps}\label{prop:linearity} Any stochastic choice function that is scalable with respect to expected utility violates Linearity. If it satisfies Weak Linearity, then it is Fechnerian but not iAREU.
\end{prps}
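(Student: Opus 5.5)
The plan is to turn each axiom into a functional equation for $F$, using the fact that for simple lotteries $(x,p)$ the scalable form only sees $\E[u(x,p)] = p\,u(x)$, since $u(0)=0$.

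\textbf{First claim (Linearity).} Write $a=\E[u(\ell)]$, $b=\E[u(\ell')]$ and $c=\E[u(\ell'')]$. Expected utility is linear in mixtures, so Linearity reads
\[F(a,b)=F\big(\lambda a+(1-\lambda)c,\ \lambda b+(1-\lambda)c\big)\]
for all $\lambda\in(0,1]$ and all $c$ attainable by a lottery.

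These maps contract the difference $a-b$ by the factor $\lambda$. They also translate the pair toward $c$. I will show that composing them (and inverting them, by reading the equation right to left) connects any two pairs $(a,b)$ and $(a',b')$ in the utility range with $a>b$ and $a'>b'$. The steps are as follows.
\begin{enumerate}
\item Shrink $(a,b)$ toward a common interior point $c$.
\item Do the same for $(a',b')$.
\item Compare the two shrunken pairs. Two pairs close to the diagonal with the same sign of $a-b$ can be moved into one another by a further mix with a suitable $c$, taking $c=0$ via the degenerate lottery at $0$, or $c$ near the top prize.
\end{enumerate}
It follows that $F$ is constant on $\{a>b\}$ and constant on $\{a<b\}$.

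Because $F$ is strictly increasing in its first argument wherever it lies in $(0,1)$, the constant value on $\{a>b\}$ must be $0$ or $1$. So $\rho$ would be deterministic, except possibly on ties. Scalability is intended to describe genuinely stochastic choice, with $F$ taking interior values. I will state this nondegeneracy explicitly as the reading under which the claim holds: the deterministic expected utility rule $F=\mathbf 1\{a>b\}$ is the trivial exception. Any nondegenerate scalable $\rho$ then violates Linearity.

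I expect the main obstacle to be the reachability argument. Because $X$ is finite, attainable expected utilities lie in $[0,u(\max X)]$, and I must check that the mixing moves stay in this range. The cleanest route is to use only $\ell''$ equal to the degenerate lotteries at $0$ and at $\max X$, together with an intermediate-value argument in $\lambda$.

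\textbf{Second claim (Weak Linearity).} Weak Linearity involves only mixing with $0$. It gives
\[F\big(u(x),\,q\,u(y)\big)=F\big(r\,u(x),\,r\,q\,u(y)\big),\]
so $F$ is homogeneous of degree zero on the relevant pairs: $F(a,b)=F(ra,rb)$. Hence $F(a,b)=H(\log a-\log b)$ for some function $H$.

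Set $v(\ell)=\log\E[u(\ell)]$. Then $\rho(\ell,\ell')=H\big(v(\ell)-v(\ell')\big)$. The properties of $H$ come from two facts.
\begin{itemize}
\item The identity $\rho(\ell,\ell')+\rho(\ell',\ell)=1$ gives $H(t)=1-H(-t)$, so $H$ is symmetric.
\item Strict monotonicity of $F$ in its first argument gives strict monotonicity of $H$ on its nondegenerate range. Outside that range $H$ can be extended strictly increasingly without affecting $\rho$.
\end{itemize}
So $\rho$ is Fechnerian.

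\textbf{Not iAREU.} Suppose $\rho$ were iAREU. Then $\rho(\ell,\ell')=G(a-b)$ for some strictly increasing $G$. Weak Linearity would force $G(a-b)=G(r(a-b))$ for all $r\in(0,1)$, which is impossible whenever $a\neq b$ because $G$ is strictly increasing. This mirrors the computation in Section~\ref{sec:pairedadditive}.
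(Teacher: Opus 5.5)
\textbf{Second claim.} Your argument follows the paper's route: Weak Linearity makes $F$ homogeneous of degree zero, so $F(a,b)$ is a function of $\ln a-\ln b$, and $\rho$ is Fechnerian in $\ln\E[u(\cdot)]$. The paper spells out the coverage step you compress into ``on the relevant pairs.'' Weak Linearity only places a degenerate lottery $(x,1)$ in the first slot, so the paper normalizes each pair with $b\le a$ to a fixed scale and then uses $F(a,b)=1-F(b,a)$ for $b>a$. Your symmetry check and your direct ``not iAREU'' argument supply details the paper leaves implicit. Run that argument with the iAREU's own vNM utility, which need not be the scalable $u$.

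\textbf{First claim.} Here you take a genuinely different route. The paper plugs the log-ratio form into Linearity and notes that $\frac{a+c}{b+c}\neq\frac ab$. You instead use Linearity directly, as contractions $(a,b)\mapsto\lambda(a,b)+(1-\lambda)(c,c)$ toward the diagonal. Reachability is easier than you expect:
\begin{itemize}
\item the forward images of a pair $P$ with $a>b$ fill the triangle with apex $P$ and base the diagonal of $[0,u^*]^2$;
\item two such triangles overlap just below the diagonal;
\item so one mix from each side reaches a common point, and no inverse steps are needed.
\end{itemize}
The paper's version is shorter because it reuses the second claim, but its last step silently requires $G$ to be injective. Yours is self-contained and gives a sharper conclusion: scalability plus Linearity forces $\rho\in\{0,1\}$ off ties. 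Your caveat is correct, and the statement needs it. The rule $F=1$ on $\{a>b\}$, $F=\frac12$ on ties, $F=0$ on $\{a<b\}$ is scalable and satisfies Linearity. The monotonicity clause imposes nothing on it, since $F\in(0,1)$ only on ties.

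\textbf{An inaccurate remark.} In your second part you write that ``outside that range $H$ can be extended strictly increasingly without affecting $\rho$.'' This is not right. As $a,b$ range over $(0,u^*]$, $\ln a-\ln b$ takes every real value, so every value of $H$ is realized by some pair of lotteries and nothing can be modified. If $H$ saturates, $\rho$ can fail to be Fechnerian. For example, take $H(t)=\min\{1,\max\{0,\frac12+t\}\}$. Then $\rho$ is scalable and satisfies Weak Linearity. Now take three lotteries with $\ln\E[u(\cdot)]$ equal to $s$, $s-1$, $s-2$, and let $v_1,v_2,v_3$ be their Fechnerian utilities. All pairwise choice probabilities equal $1$. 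Injectivity of a strictly increasing Fechnerian $F$ then gives $v_1-v_2=v_2-v_3=v_1-v_3$. Hence all three differences are $0$ and $F(0)=1$, contradicting $F(0)=\frac12$.

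So the second claim needs the same nondegeneracy you imposed in the first: interior choice probabilities whenever expected utilities differ. Under that assumption $H$ is strictly increasing on all of $\Re$. Because of the logarithm, both your argument and the paper's must also restrict to lotteries with positive expected utility. Indeed, Weak Linearity with $y=0$ already forces choices against the degenerate lottery at $0$ to be deterministic. The paper never checks that its $G$ is strictly increasing, so this gap is shared with the paper rather than introduced by you.
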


The proof of Proposition~\ref{prop:linearity} is in Section~\ref{sec:proofs}, as are all proofs in the paper. 

We now turn to the connection between the strong paired choice test, valuation tests, and Linearity. We may relax Weak Linearity in the following two ways. 

\medskip
\noindent\textbf{Tied Linearity I:} For any $(m, r)$ and $(y, q)$, 
\[\rho((m, 1), (y, q))=\frac{1}{2}\text{ iff }\rho((m, r), (y, r\,q))=\frac{1}{2}.\]

\medskip
\noindent\textbf{Tied Linearity II:} For any $(x, r)$ and $(y, q)$, 
\[\rho((x, 1), (y, q))\ge \frac{1}{2}\text{ iff }\rho((x, r), (y, r\,q))\ge \frac{1}{2}.\]

Tied Linearity II implies Tied Linearity I. The reverse is true under a monotonicity assumption that is satisfied throughout the paper. Hence, Tied Linearity I and II are equivalent for our purposes. Monotonicity is also assumed by the multiple-price list method used to elicit valuations. 

The strong paired choice test assesses Tied Linearity II, while the paired valuation test essentially evaluates Tied Linearity I. Thus, at a conceptual level, the strong paired choice and valuation tests are assessing the same weakening of Weak Linearity. However, a difference is that the nature of the valuation test (where certainty-equivalents are elicited) requires some additional assumptions. This means that the strong paired choice test is unbiased under weaker assumptions than the valuation test.

\subsection{The unbiasedness of the strong paired choice test}\label{sec:strongpairedtest}

In this section, we show that the strong paired choice test is robustly unbiased across various standard models of stochastic choice.

The strong paired choice test is unbiased under the Fechnerian model of iAREU. The statement that $\rho(A, B)\ge \frac{1}{2}$ is equivalent to $F\big(u(x)-p\,u(y)\big)\ge F(0)=\frac{1}{2}$, while  $\rho(C, D)\ge \frac{1}{2}$ is equivalent to $F\big(r(u(x)-p\,u(y))\big)\ge F(0)=\frac{1}{2}$. Hence, 
\[\begin{split}
\rho(A, B)\ge \frac{1}{2}\text{ if and only if }u(x)-p\,u(y)\ge 0 \\ \text{ if and only if  }\rho(C, D)\ge \frac{1}{2}.    
\end{split}
\]

More formally, the strong paired choice test is unbiased as long as the stochastic choice satisfies \emph{weak stochastic transitivity} with respect to expected utility: there is $u:\Re_+\to\Re$ such that 
\[\rho(\ell_A, \ell_B)\ge \frac{1}{2}\text{ if and only if }\E u(\ell_A)\ge \E u(\ell_B).\] In light of our discussion of how preferences are inferred from stochastic choices, the connection between $\rho(\ell_A, \ell_B)\ge 1/2$ and $\E u(\ell_A)\ge \E u(\ell_B)$ is natural. In particular, Fechnerian models satisfy strong stochastic transitivity with respect to expected utility. 

Consider a generalization of the Fechnerian model that is motivated by \cite{he2024moderate}. The stochastic choice $\rho$ has a \df{weak expected utility representation} if there is a cdf $F$, and a strictly positive valued function $G:\Re\to\Re_{++}$, such that $\rho(\ell_A, \ell_B)=F\big(G(\ell_A, \ell_B)\big[\E[u(\ell_A)]-\E[u(\ell_B)]\big]\big)$.\footnote{\cite{he2024moderate} characterizes the weak utility presentation $\rho(\ell_A, \ell_B)=F\big(G(\ell_A, \ell_B)\big[v(\ell_A)-v(\ell_B)\big]\big)$ by means of weak stochastic transitivity. The above model is the expected utility version of the weak utility representation.}

\begin{prps}
\label{prop:weakEU}
Suppose that the stochastic choice $\rho$ has a \df{weak expected utility representation}. Then
\[\rho(A, B)\ge \frac{1}{2}\text{ if and only if }\rho(C, D)\ge \frac{1}{2}.\]
\end{prps}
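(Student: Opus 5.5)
Since $F$ is a cdf that appears in place of the symmetric function of the Fechnerian model, I will take it to be strictly increasing with $F(0)=\tfrac12$ and $F(z)=1-F(-z)$, so that $F(z)\ge \tfrac12$ if and only if $z\ge 0$. This is the property that makes the representation meaningful: $\rho(\ell_A,\ell_B)+\rho(\ell_B,\ell_A)=1$ forces it whenever the argument can take both signs. The plan is to reduce both inequalities in the statement to the sign of the same expected utility difference, exactly as in the Fechnerian computation above.

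First I write the representation for the two choice problems, using the normalization $u(0)=0$:
\begin{align*}
\rho(A,B) &= F\big(G(A,B)\,[u(x)-p\,u(y)]\big),\\
\rho(C,D) &= F\big(G(C,D)\,[r\,u(x)-r\,p\,u(y)]\big) = F\big(r\,G(C,D)\,[u(x)-p\,u(y)]\big).
\end{align*}
Since $G>0$ and $r>0$, each argument is a strictly positive multiple of $\Delta:=u(x)-p\,u(y)$.

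By the property of $F$ fixed above, $\rho(A,B)\ge\tfrac12$ if and only if $G(A,B)\Delta\ge 0$, which holds if and only if $\Delta\ge0$. The same argument gives $\rho(C,D)\ge\tfrac12$ if and only if $r\,G(C,D)\Delta\ge0$, again if and only if $\Delta\ge0$. Chaining these equivalences proves the claim.

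The only real obstacle is the step ``$F(z)\ge\tfrac12$ if and only if $z\ge0$.'' If $F$ were just an arbitrary cdf, the equivalence could fail; for example, a cdf with a flat piece or with median away from $0$ would break it. So I would state explicitly that $F$ inherits strict monotonicity and symmetry from the Fechnerian setup. The positivity of $G$ is what guarantees that the pair-dependent scaling, unlike the common factor $r$ in the iAREU case, cannot flip the sign of the argument.
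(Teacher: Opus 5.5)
Your proof is correct and matches the paper's argument: both arguments of $F$ are strictly positive multiples of the same quantity $u(x)-p\,u(y)$, so $F(0)=\tfrac12$ reduces both inequalities to the sign of that difference. Making explicit that $F$ must be strictly increasing is a worthwhile addition, since the paper's one-line proof cites only $F(0)=\tfrac12$ and uses monotonicity tacitly to get $F(z)\ge\tfrac12$ if and only if $z\ge 0$.
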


Many stochastic choice models have a weak utility representation when errors are correlated. For example, some of the best known discrete choice models in economics and psychology such as the covariance probit (\cite{thurstone1927psychophysical}), nested logit (\cite{ben1973structure}, \cite{mcfadden1978modeling}), the elimination-by-aspects (\cite{tversky1972elimination}), and the random coefficients model of \cite{hausman1978conditional} have a weak utility representation, but not a Fechnerian one.\footnote{See \cite{wilcox2008stochastic} and \cite{he2024moderate} for more examples.} Hence, the expected utility version of these models will have weak expected utility representations. It is also important to note that weak utility representation goes beyond the random utility framework. Hence, the above result demonstrates that the strong test is unbiased under a general class of stochastic choice models.

Finally, we turn to the strong paired choice under the assumptions of \mnoss (see Section~\ref{sec:mcgraassumptions}). \mnoss show that the weak paired choice test is biased under either Assumption 2a or 2b, while their sign valuation test is unbiased under Assumption 3. We first show that the strong paired choice test is unbiased under an assumption weaker than both Assumption 2b and Assumption 3. In Appendix~\ref{sec:meantestbias}, we show that their Assumption 2a is often violated (a fact already acknowledged by \mnoss). 

\begin{prps}
\label{prop:assumption2} Assume Assumption 1 and that $m^*_{\mathrm{AB}}=m^*_{\mathrm{CD}}$. Suppose that $\Pr(\ep_{\mathrm{AB}}\le \ep')=\Pr(\ep_{\mathrm{CD}}\le \ep')=\frac{1}{2}$ for some constant $\ep'$. Then,
\[\rho(A, B)\ge \frac{1}{2}\text{ if and only if }\rho(C, D)\ge \frac{1}{2}.\]
\end{prps}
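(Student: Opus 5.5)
Let $m^*=m^*_{\mathrm{AB}}=m^*_{\mathrm{CD}}$. Under Assumption 1, $\rho(A,B)=\Pr(x\ge \Gamma(m^*,\ep_{\mathrm{AB}}))$, and similarly for $\rho(C,D)$ with $\ep_{\mathrm{CD}}$. Both choice probabilities are therefore evaluations of the same map applied to two different noise terms. The common median is $\ep'$: by hypothesis $\Pr(\ep_{\mathrm{AB}}\le\ep')=\Pr(\ep_{\mathrm{CD}}\le\ep')=1/2$. The plan is to show that each probability is at least $1/2$ exactly when $x\ge \Gamma(m^*,\ep')$, a condition that does not depend on which task we look at.

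First, invert the condition inside the probability. Because $\Gamma$ is strictly increasing (in particular in its second argument), the event $\{x\ge\Gamma(m^*,\ep)\}$ is a lower set in $\ep$. Set $e^*=\sup\{\ep: \Gamma(m^*,\ep)\le x\}$. Then
\[
\{\ep<e^*\}\subseteq\{x\ge\Gamma(m^*,\ep)\}\subseteq\{\ep\le e^*\}.
\]
Since the joint distribution is continuous, the marginals have no atoms, so $\rho(A,B)=\Pr(\ep_{\mathrm{AB}}\le e^*)$ and $\rho(C,D)=\Pr(\ep_{\mathrm{CD}}\le e^*)$. The same threshold $e^*$ serves both tasks because it depends only on $x$ and $m^*$. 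If $e^*=\pm\infty$, both probabilities equal $1$ or both equal $0$, and the claim is immediate.

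Second, compare $e^*$ with the common median $\ep'$. If $e^*\ge\ep'$, monotonicity of the cdf gives $\Pr(\ep_{\mathrm{AB}}\le e^*)\ge\Pr(\ep_{\mathrm{AB}}\le\ep')=1/2$, and likewise for $\ep_{\mathrm{CD}}$. So both probabilities are at least $1/2$.

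The converse is the main obstacle. If $e^*<\ep'$, I need $\Pr(\ep\le e^*)<1/2$ strictly for both noise terms, which requires the cdfs to be strictly increasing just below $\ep'$. Here I would use the convex-support part of Assumption 1. The marginal support of each $\ep$ is an interval, and $\ep'$ is a median of a continuous distribution on that interval, so $\ep'$ lies in the interior of the support or at a point with mass on both sides. Any interval $(e^*,\ep']$ with $e^*<\ep'$ then meets the interior of the support, so it carries positive probability. This gives $\Pr(\ep\le e^*)<\Pr(\ep\le\ep')=1/2$ for both $\ep_{\mathrm{AB}}$ and $\ep_{\mathrm{CD}}$, and hence both choice probabilities are below $1/2$.

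The two cases together show $\rho(A,B)\ge 1/2$ if and only if $e^*\ge\ep'$, if and only if $\rho(C,D)\ge 1/2$, which proves the proposition. If the support argument does not go through cleanly, the fallback is to assume explicitly that each marginal cdf is strictly increasing at $\ep'$, which holds whenever the densities are positive there.
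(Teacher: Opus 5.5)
Your proof is correct and follows essentially the paper's route: in both, strict monotonicity of $\Gamma$ in $\ep$ and the common value $m^*$ reduce each choice probability to a comparison between a single task-independent cutoff and the common median $\ep'$. Your extra step is a genuine addition. You use convex support and atomless marginals to show $\rho(A,B),\rho(C,D)<\frac{1}{2}$ strictly when $e^*<\ep'$. The paper only shows that both probabilities are $\le\frac{1}{2}$ when $x\le\Gamma(m^*_{\mathrm{AB}},\ep')$, which does not rule out something like $\rho(A,B)=\frac{1}{2}>\rho(C,D)$; that can happen if a marginal's support has a gap just below $\ep'$. Your step is therefore what completes the equivalence.
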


The assumption on noises in Proposition~\ref{prop:assumption2} relaxes both Assumption 2b and Assumption 3. First, we drop the assumption $\ep_{\mathrm{AB}}\overset{d}{=} k\,\ep_{\mathrm{CD}}$ for $k>0$ from Assumption 2b. Second, $\Pr(\ep_{\mathrm{AB}}\le \ep')=\Pr(\ep_{\mathrm{CD}}\le \ep')=\frac{1}{2}$ holds whenever $\ep_{\mathrm{AB}}$ and $\ep_{\mathrm{CD}}$ are symmetric around some constant $\ep'$, which is required by both Assumption 2b and Assumption 3.

Assumption 2b is satisfied under the iAREU model, which has a Fechnerian representation. Below, we show that guaranteeing either $\ep_{\mathrm{AB}}\overset{d}{=} k\,\ep_{\mathrm{CD}}$ or the assumption $\E[\ep_{\mathrm{AB}}]=\E[\ep_{\mathrm{CD}}]$ is challenging once we deviate from iAREU. 

To demonstrate that the strong paired choice test can be unbiased beyond Assumption 2b and Assumption 3, consider the following model that allows for a ``perceptual'' error (or random probability weighting) and a utility over money.
\begin{equation}\label{eq:perception}
U(x, p)=p\,u(x)+\epsilon_p\,f(x)+\epsilon_x\, g(p)+\epsilon_{p, x}.
\end{equation}
Here,  $\epsilon_p$ captures a perceptual error, $\epsilon_x$ captures randomness in the evaluation of monetary payoffs,  $\epsilon_{p, x}$ captures the remaining error (or the interaction of perceptual and preference heterogeneity). Such a model obtains, for example, when the utility of $(x,p)$ is $(p+\epsilon_p)[u(x)+\epsilon_x]$.\footnote{Note that this model nests both cases where $\epsilon_1=0$ or $\epsilon_1$ and $\epsilon_p$ with $p<1$ are identical distributions.}

\begin{prps}\label{prop:stpairchoiceab} Consider the model in Equation~\eqref{eq:perception} and suppose that one of the following assumptions is satisfied:
\begin{enumerate}
\item All errors are symmetric about zero, and any two error terms are either independent or linearly dependent. 
\item Let $\epsilon_{p, x}=\alpha\,\epsilon_p\,\epsilon_x+\beta\,\epsilon^*_{p, x}+\gamma$ for some constants $\alpha, \beta, \gamma$ where errors $\epsilon^*_{p, x}$ are identical or symmetric about zero. All other errors are independent and symmetric about zero.
\item Errors $\epsilon_{p, x}$ are independent, either identical or symmetric around the same fixed constant, and independent from $\epsilon_p$ and $\epsilon_x$. All other errors are symmetric about zero, and any two of them are either independent or linearly dependent.
\end{enumerate}
Then 
\[\rho(A, B)\ge \frac{1}{2}\text{ if and only if }\rho(C, D)\ge \frac{1}{2}.\]
\end{prps}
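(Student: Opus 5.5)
The plan is to reduce everything to a one-dimensional statement about the median of a noise term. Using Equation~\eqref{eq:perception}, write the random utility difference in each pair as a deterministic part plus noise:
\begin{align*}
U(A)-U(B) &= \mu + \eta_{\mathrm{AB}}, \qquad \mu = u(x)-p\,u(y),\\
U(C)-U(D) &= r\mu + \eta_{\mathrm{CD}}.
\end{align*}
Here
\begin{align*}
\eta_{\mathrm{AB}} &= \epsilon_1 f(x)-\epsilon_p f(y)+\epsilon_x g(1)-\epsilon_y g(p)+\epsilon_{1,x}-\epsilon_{p,y},\\
\eta_{\mathrm{CD}} &= \epsilon_r f(x)-\epsilon_{rp} f(y)+\epsilon_x g(r)-\epsilon_y g(rp)+\epsilon_{r,x}-\epsilon_{rp,y}.
\end{align*}
Then $\rho(A,B)=\Pr(\eta_{\mathrm{AB}}>-\mu)$ and $\rho(C,D)=\Pr(\eta_{\mathrm{CD}}>-r\mu)$.

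The key lemma is as follows. If $\eta$ is continuous and has $0$ as its unique median (in particular if it is symmetric about $0$ with positive density near $0$), then $\Pr(\eta>-c)\ge 1/2$ if and only if $c\ge 0$. Since $r>0$, $\mu$ and $r\mu$ have the same sign, so the proposition follows once each of $\eta_{\mathrm{AB}}$ and $\eta_{\mathrm{CD}}$ is shown to have median $0$. This is exactly the logic of Proposition~\ref{prop:assumption2}. I will impose the same regularity (continuity and convex support) that Assumption~1 imposes, so that medians are unique.

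Under assumption (1), I partition the error terms appearing in $\eta$ into classes of linearly dependent variables. Each class collapses to a single scalar multiple of one symmetric variable, hence is symmetric about $0$. Distinct classes are independent, and a sum of independent symmetric variables is symmetric. So $\eta_{\mathrm{AB}}$ and $\eta_{\mathrm{CD}}$ are symmetric about $0$.

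Assumption (3) is handled the same way, except for the $\epsilon_{p,x}$ terms. These are independent of the rest, so it suffices that $\epsilon_{1,x}-\epsilon_{p,y}$ (respectively $\epsilon_{r,x}-\epsilon_{rp,y}$) is symmetric about $0$. If the two errors are i.i.d., their difference is symmetric about $0$. If they are independent and symmetric about a common constant $c$, the constant cancels in the difference, and the difference of independent symmetric variables is symmetric.

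Assumption (2) is the step I expect to be the main obstacle. There the noise contains products such as
\[
\alpha(\epsilon_1\epsilon_x-\epsilon_p\epsilon_y),
\]
and a global sign flip of all errors leaves these products unchanged, so joint symmetry is not immediate. My plan is to condition on the monetary shocks $(\epsilon_x,\epsilon_y)$ and collect terms:
\[
\epsilon_1\bigl(f(x)+\alpha\epsilon_x\bigr)-\epsilon_p\bigl(f(y)+\alpha\epsilon_y\bigr).
\]
Given $(\epsilon_x,\epsilon_y)$, this is a sum of independent symmetric terms, hence symmetric.

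The residual piece $\epsilon_x g(1)-\epsilon_y g(p)$ is not conditionally symmetric, so I will instead use a coordinated sign flip. Flip the signs of $(\epsilon_1,\epsilon_p,\epsilon_x,\epsilon_y)$ simultaneously and also replace $\epsilon^*$ by its reflection. This leaves the joint law invariant, by independence and symmetry of each variable. I will then check that it maps $\eta-\gamma'$ to its negative, where the constant $\gamma'$ is what remains of $\gamma$ and any common location of the $\epsilon^*$. Since $\gamma$ appears in both $\epsilon_{1,x}$ and $\epsilon_{p,y}$, it cancels in $\eta$.

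If the product terms break this flip, the fallback is to show directly that $\Pr(\eta_{\mathrm{AB}}\le 0)=1/2$. I would do this by conditioning on $\epsilon_x$ and $\epsilon_y$ and using the symmetry of $\epsilon_1$ and $\epsilon_p$ around $0$ to pair each outcome with its reflection. The median condition, rather than full symmetry, is all the key lemma needs.
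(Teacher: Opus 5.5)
\textbf{Case (2): the gap cannot be closed.} Under assumption (2) the conclusion itself can fail when $\alpha\neq 0$ and $f,g$ do not vanish, because $\eta_{\mathrm{AB}}$ need not have median $0$. You located the obstruction correctly. Your simultaneous flip sends $\eta_{\mathrm{AB}}$ to $-\eta_{\mathrm{AB}}+2\alpha(\epsilon_1\epsilon_x-\epsilon_p\epsilon_y)$, not to $-\eta_{\mathrm{AB}}$. Your fallback also fails. Conditional on $(\epsilon_x,\epsilon_y)$, $\eta_{\mathrm{AB}}$ is a symmetric variable whose scale is set by $f(x)+\alpha\epsilon_x$ and $f(y)+\alpha\epsilon_y$, shifted by $\epsilon_x g(1)-\epsilon_y g(p)$. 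Reflecting $(\epsilon_x,\epsilon_y)$ reverses the shift but changes the scale, so reflected outcomes do not pair off.

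Here is a concrete counterexample, using the paper's own motivating specification $U(z,s)=(s+\epsilon_s)(u(z)+\epsilon_z)$, i.e.\ $f=u$, $g(s)=s$, $\alpha=1$, $\beta=\gamma=0$. Set $\epsilon_1=\epsilon_r=\epsilon_{rp}=\epsilon_x=0$, $\epsilon_p=pW_1$ and $\epsilon_y=u(y)W_2$, with $W_1,W_2$ i.i.d.\ uniform on $(-1,1)$. With $t=u(x)/(p\,u(y))$ one gets $\rho(C,D)=\Pr(1+W_2<t)=t/2$ and $\rho(A,B)=\Pr\bigl((1+W_1)(1+W_2)<t\bigr)=\frac{t}{4}\bigl(1+\ln\frac{4}{t}\bigr)$. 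At $t=0.9$, where $\mu<0$, this gives $\rho(A,B)\approx 0.56>\frac12>0.45=\rho(C,D)$. Both noise terms are continuous with interval support, so your added regularity does not rescue the claim.

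\textbf{The paper's proof has the same hole.} It disposes of case (2) with its Fact 2: that $aX+bY+cXY$ is symmetric about zero for independent symmetric $X,Y$. This is false. With $X,Y$ uniform on $(-1,1)$, $X+Y+XY=(1+X)(1+Y)-1$ is negative with probability $\frac14+\frac12\ln 2\approx 0.60$. Case (2) does hold under extra restrictions. With $\alpha=0$ your linear argument works. With $g\equiv 0$ your conditioning step works, since the shift then vanishes; symmetrically, with $f\equiv 0$ you condition on the probability errors instead.

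\textbf{Cases (1) and (3) follow the paper's route.} The paper writes $\rho(A,B)=\Pr(u(x)-p\,u(y)>\xi(A,B))$ and $\rho(C,D)=\Pr(r(u(x)-p\,u(y))>\xi(C,D))$ with $\xi=-\eta$. It shows each $\xi$ is symmetric about $0$ via its two-variable Fact 1, and treats $\epsilon_{r,x}-\epsilon_{rp,y}$ exactly as you do. Your unique-median regularity is a genuine improvement, since symmetry alone allows $\rho(A,B)=\frac12$ with $\mu<0$. One caveat applies to both you and the paper: the classes must be mutually, not merely pairwise, independent. For i.i.d.\ Rademacher $X,Y$, the symmetric variables $X,Y,XY$ are pairwise independent, yet $X+Y+XY$ equals $3$ with probability $\frac14$ and $-1$ otherwise.
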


The key assumption behind Proposition~\ref{prop:stpairchoiceab} is symmetry around zero. Due to the existence of $\epsilon_{p, x}$, it is without loss of generality in~\eqref{eq:perception} to say that $\E[\epsilon_p]=\E[\epsilon_x]$. The results above assume that $\epsilon_p$ and $\epsilon_x$ are symmetric about zero. If the errors are not symmetric about zero, the model deviates systematically from expected utility. For example, suppose $\epsilon_x=\epsilon_{p, x}=0$ and $f(x)=u(x)$. Then we have $U(x, p)=(p+\epsilon_p)\,u(x)$, where the perceived probability is $p+\epsilon_p$. If $\epsilon_p$ is not symmetric, then the model itself has a bias for or against the common ratio effect. In other words, the strong paired choice test would not reject expected utility when the underlying model is unbiased.

Going back to the model of \mnoss discussed in Section~\ref{sec:mcgraassumptions}, recall that it implies that $\E[\ep_{\mathrm{AB}}] = \E[\ep_{\mathrm{CD}}]$. We now connect this property with the model in Equation~\eqref{eq:perception}.

\begin{prps}\label{prop:pvtestbiased}  Consider the model in Equation~\eqref{eq:perception} and let $f(x)=u(x)$. For each of the three assumptions of Proposition~\ref{prop:stpairchoiceab}, there exist $\epsilon_p, \epsilon_x, \epsilon_{p, x}$ that satisfy the assumption and
\[E[\ep_{\mathrm{AB}}]\neq E[\ep_{\mathrm{CD}}].\]
\end{prps}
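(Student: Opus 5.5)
The plan is to build one explicit example that satisfies all three assumptions of Proposition~\ref{prop:stpairchoiceab} at once. The randomness will live entirely in the perceptual errors $\epsilon_p$, and the asymmetry between $\ep_{\mathrm{AB}}$ and $\ep_{\mathrm{CD}}$ will come from dividing by a random perceived probability in the CD task. Concretely, with $f=u$ I take $\epsilon_x\equiv 0$ and $\epsilon_{p,x}\equiv 0$. I also take $\epsilon_1\equiv 0$, the nested case mentioned in the footnote. For $p<1$ the $\epsilon_p$ are independent and symmetric about zero with bounded support, say $\epsilon_q\sim\text{Uniform}[-q/2,q/2]$ for $q\in\{p,r,rp\}$, so that perceived probabilities stay positive.

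First I check the assumptions. Under assumption (1), all errors are symmetric about zero and pairwise independent; the zero errors are trivially so. Under assumption (2), take $\alpha=\beta=\gamma=0$ and $\epsilon^*_{p,x}\equiv 0$, which is identical across $(p,x)$. Under assumption (3), the $\epsilon_{p,x}\equiv 0$ are independent and identical, and the rest is as in (1).

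Next I derive the residual noise terms in the form of Assumption 1, using $\Gamma(m,\ep)=u^{-1}(u(m)+\ep)$ and $m^*_{\mathrm{AB}}=m^*_{\mathrm{CD}}=u^{-1}(p\,u(y))$.
\begin{itemize}
\item In the AB task, $A$ is chosen iff $u(x)>(p+\epsilon_p)u(y)$. This reads $u(x)>p\,u(y)+\ep_{\mathrm{AB}}$ with $\ep_{\mathrm{AB}}=u(y)\,\epsilon_p$.
\item In the CD task, $C$ is chosen iff $(r+\epsilon_r)u(x)>(rp+\epsilon_{rp})u(y)$. Since $r+\epsilon_r>0$, this is equivalent to $u(x)>p\,u(y)+\ep_{\mathrm{CD}}$ with
\[\ep_{\mathrm{CD}}=u(y)\,\frac{\epsilon_{rp}-p\,\epsilon_r}{r+\epsilon_r}.\]
\end{itemize}
Both are continuous, and $(\ep_{\mathrm{AB}},\ep_{\mathrm{CD}})$ has convex support, being a continuous image of a box. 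So Assumption 1 holds.

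Finally I compare the means.
\begin{itemize}
\item $\E[\ep_{\mathrm{AB}}]=0$.
\item By independence of $\epsilon_{rp}$ and $\epsilon_r$, together with $\E[\epsilon_{rp}]=0$, we get $\E[\ep_{\mathrm{CD}}]=-p\,u(y)\,\E\!\left[\epsilon_r/(r+\epsilon_r)\right]$.
\item Write $\epsilon_r/(r+\epsilon_r)=1-r/(r+\epsilon_r)$. Strict convexity of $t\mapsto r/(r+t)$, together with $\epsilon_r$ being nondegenerate with mean zero, gives $\E[r/(r+\epsilon_r)]>1$ by Jensen. Hence $\E[\epsilon_r/(r+\epsilon_r)]<0$, so $\E[\ep_{\mathrm{CD}}]>0=\E[\ep_{\mathrm{AB}}]$, using $u(y)>0$.
\end{itemize}
If desired, this can be made explicit for the uniform case via $\E[1/(r+\epsilon_r)]=\frac{1}{r}\ln 3$.

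The main obstacle is making sure the residual representation is legitimate. The CD residual must not depend on $x$, which is exactly why I switch off $\epsilon_1$ and $\epsilon_x$. The division by $r+\epsilon_r$ must be sign-preserving, which is why the support is bounded so that $r+\epsilon_r>0$. If one prefers to keep $\epsilon_1$ nonzero, the same Jensen argument applies after dividing the AB inequality by $1+\epsilon_1$. In that case one compares the two different nonlinear expectations, and I would again pick $\epsilon_1\equiv 0$ as the cleanest instance.
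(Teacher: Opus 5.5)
Your proposal is correct and follows the paper's own proof. Like the paper, you switch off $\epsilon_x$ and $\epsilon_{p,x}$ and take independent symmetric perceptual errors $\epsilon_p$, which trivially meet all three assumptions of Proposition~\ref{prop:stpairchoiceab}; the paper then cites the Jensen step $\E[r/(r+\epsilon_r)]>1$ from Part~1 of Proposition~\ref{prop:epABepCD}, whereas you carry it out explicitly with $\epsilon_1\equiv 0$.

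One small correction concerns your side check of Assumption~1, which the statement does not require. A continuous image of a box need not be convex in $\Re^2$. The joint support is convex here for a different reason: for $p\neq r$, $\ep_{\mathrm{AB}}$ and $\ep_{\mathrm{CD}}$ are functions of disjoint sets of independent errors, so the support is a product of two intervals.
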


This result and its proof show that even when $\epsilon_p, \epsilon_x, \epsilon_{p, x}$ are independent and symmetric around zero, we may have $E[\ep_{\mathrm{AB}}]\neq E[\ep_{\mathrm{CD}}]$.

\section{Detecting the Common Ratio Effect via Paired Choice Tasks}\label{sec:empirical}

We have shown that the strong paired choice test is unbiased across a broad class of stochastic choice models. Now we take stock of these considerations, and reassess the empirical evidence for the common ratio effect in light of what we know about the weak and strong choice tests.

Figure~\ref{fig:eu-cre-rcre} lays out three distinct interpretations of the paired choice test. Panel (a) shows the weak test: points on the diagonal ($\rho(A,B)=\rho(C,D)$) line up with expected utility (EU); points below the diagonal reveal the common ratio effect (CRE); and points above it reveal the reverse common ratio effect (RCRE). Panel (b) displays the strong paired choice test that we advocate for in Section~\ref{sec:choicetests}. Panel (c) reproduces Figure 2 from \mnoss, shading in gray every pattern of choice frequencies that is compatible with expected utility and a flexible random-utility framework. Concretely, the gray region collects all frequencies that can arise when agents: 1) choose either $A$ and $C$, or the pair $B$ and $D$, with probabilities $\lambda$ and $1-\lambda$, respectively; and 2) whenever an agent intends to choose a lottery $\ell$, they actually select it with probability at least $1/2$.

\begin{figure}[t]
\centering
\begin{tikzpicture}

% ===================== FIGURE 1 =====================
\begin{scope}[xshift=-1.5cm, scale=3]

    \definecolor{lightblue}{RGB}{173,216,230}

    % shading
    \fill[lightblue] (0,0) -- (1,0) -- (1,1) -- cycle;

    % square
    \draw[thick] (0,0) rectangle (1,1);

    % 45-degree line
    \draw[very thick] (0,0) -- (1,1);

    % labels
    \node at (0.7,0.2) {\scriptsize CRE};
    \node at (0.25,0.75) {\scriptsize RCRE};
    \node[above right] at (0.7,0.4) {\scriptsize EU};
    \draw[->][very thick] (0.72, 0.5) -- (0.61, 0.59);
    % ticks
    \draw (0,0) -- (0,-0.03) node[below] {\tiny $0$};
    \draw (0.5,0) -- (0.5,-0.03) node[below] {\tiny $\frac{1}{2}$};
    \draw (1,0) -- (1,-0.03) node[below] {\tiny $1$};

    \draw (0,0) -- (-0.03,0) node[left] {\tiny $0$};
    \draw (0,0.5) -- (-0.03,0.5) node[left] {\tiny $\frac{1}{2}$};
    \draw (0,1) -- (-0.03,1) node[left] {\tiny $1$};

    \node at (0.5,-0.3) {\scriptsize $\Pr(A)$};
    \node at (-0.3,0.5) {\scriptsize $\Pr(C)$};

    \node at (0.5,-0.5) {(a) \scriptsize Weak Paired Choice};

\end{scope}

% ===================== FIGURE 2 =====================
\begin{scope}[xshift=3cm, scale=3]

    \definecolor{lightblue}{RGB}{173,216,230}
    \definecolor{lightgray}{RGB}{180,180,180}

    % shading
    \fill[lightgray] (0,0) rectangle (0.5,0.5);
    \fill[lightgray] (0.5,0.5) rectangle (1,1);
    \fill[lightblue] (0.5,0) rectangle (1,0.5);

    % square and lines
    \draw[thick] (0,0) rectangle (1,1);
    \draw (0.5,0) -- (0.5,1);
    \draw (0,0.5) -- (1,0.5);

    % labels
    \node at (0.25,0.25) {\scriptsize EU};
    \node at (0.75,0.75) {\scriptsize EU};
    \node at (0.75,0.25) {\scriptsize CRE};
    \node at (0.25,0.75) {\scriptsize RCRE};

    % ticks
    \draw (0,0) -- (0,-0.03) node[below] {\tiny $0$};
    \draw (0.5,0) -- (0.5,-0.03) node[below] {\tiny $\frac{1}{2}$};
    \draw (1,0) -- (1,-0.03) node[below] {\tiny $1$};

    \draw (0,0) -- (-0.03,0) node[left] {\tiny $0$};
    \draw (0,0.5) -- (-0.03,0.5) node[left] {\tiny $\frac{1}{2}$};
    \draw (0,1) -- (-0.03,1) node[left] {\tiny $1$};

    \node at (0.5,-0.3) {\scriptsize $\Pr(A)$};
    \node at (-0.3,0.5) {\scriptsize $\Pr(C)$};

    \node at (0.5,-0.5) {(b) \scriptsize Strong Paired Choice};

\end{scope}

% ===================== FIGURE 3 =====================
\begin{scope}[xshift=7.6cm, scale=3]

    \definecolor{lightblue}{RGB}{173,216,230}
    \definecolor{lightgray}{RGB}{180,180,180}

    % shading
    \fill[lightgray] 
        (0,0) -- (0.5,0) -- (1,0.5) -- (1,1) -- (0.5,1) -- (0,0.5) -- cycle;

    \fill[lightblue]
        (0.5,0) -- (1,0) -- (1,0.5) -- cycle;

    % square
    \draw[thick] (0,0) rectangle (1,1);

    % polygon boundaries
    \draw (0,0) -- (0.5,0) -- (1,0.5) -- (1,1) -- (0.5,1) -- (0,0.5) -- cycle;
    \draw (0.5,0) -- (1,0) -- (1,0.5) -- cycle;

    % labels
    \node at (0.5,0.5) {\scriptsize EU};
    \node at (0.82,0.14) {\scriptsize CRE};
    \node at (0.17,0.88) {\scriptsize RCRE};

    % ticks
    \draw (0,0) -- (0,-0.03) node[below] {\tiny $0$};
    \draw (0.5,0) -- (0.5,-0.03) node[below] {\tiny $\frac{1}{2}$};
    \draw (1,0) -- (1,-0.03) node[below] {\tiny $1$};

    \draw (0,0) -- (-0.03,0) node[left] {\tiny $0$};
    \draw (0,0.5) -- (-0.03,0.5) node[left] {\tiny $\frac{1}{2}$};
    \draw (0,1) -- (-0.03,1) node[left] {\tiny $1$};

    \node at (0.5,-0.3) {\scriptsize $\Pr(A)$};
    \node at (-0.3,0.5) {\scriptsize $\Pr(C)$};

    \node at (0.5,-0.5) {(c) \scriptsize \mnoss};

\end{scope}

\end{tikzpicture}
\caption{Comparison of the EU, CRE, and RCRE regions under paired choice tests.}
\label{fig:eu-cre-rcre}
\end{figure}
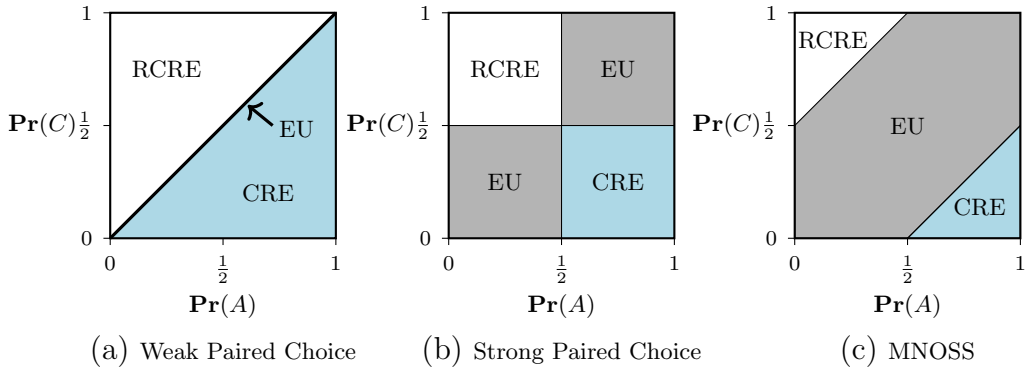

We reproduce one of \mnoss results in Figure~\ref{fig:main}, which depicts the choice frequencies in 143 studies taken from the meta-analysis in \cite{Blavatskyy_Panchenko_Ortmann_2023}. Each study is represented by a ``bubble'' that is scaled in size to reflect the number of participants in the study.\footnote{See Panel A of Figure 2 and Figure D.1 of \mnoss. Our graph is made from the data in the replication package in \cite{Blavatskyy_Panchenko_Ortmann_2023}, while Table~\ref{tab:data} is calculated from the data in \mnoss.} The choice frequencies in the literature largely fall in the gray area of Panel (c) in Figure~\ref{fig:eu-cre-rcre}. The conclusion is that the paired choice test fails to reject expected utility once one allows for general noise in participants' utilities.

In contrast, the strong test has substantial empirical bite. In Figure~\ref{fig:main}, the studies shown with purple ``bubbles'' are those that fall outside the two squares on the main diagonal. These exhibit a violation of expected utility according to the strong test. Those that fall to the right of the $0.5$ vertical line and below the $0.5$ horizontal line exhibit the common ratio effect. The studies to the left of the vertical line and above the horizontal line exhibit the reverse common ratio effect. 

\begin{figure}[t]
    \centering
\begin{subfigure}[t]{0.48\textwidth}
\includegraphics[width=2.8in]{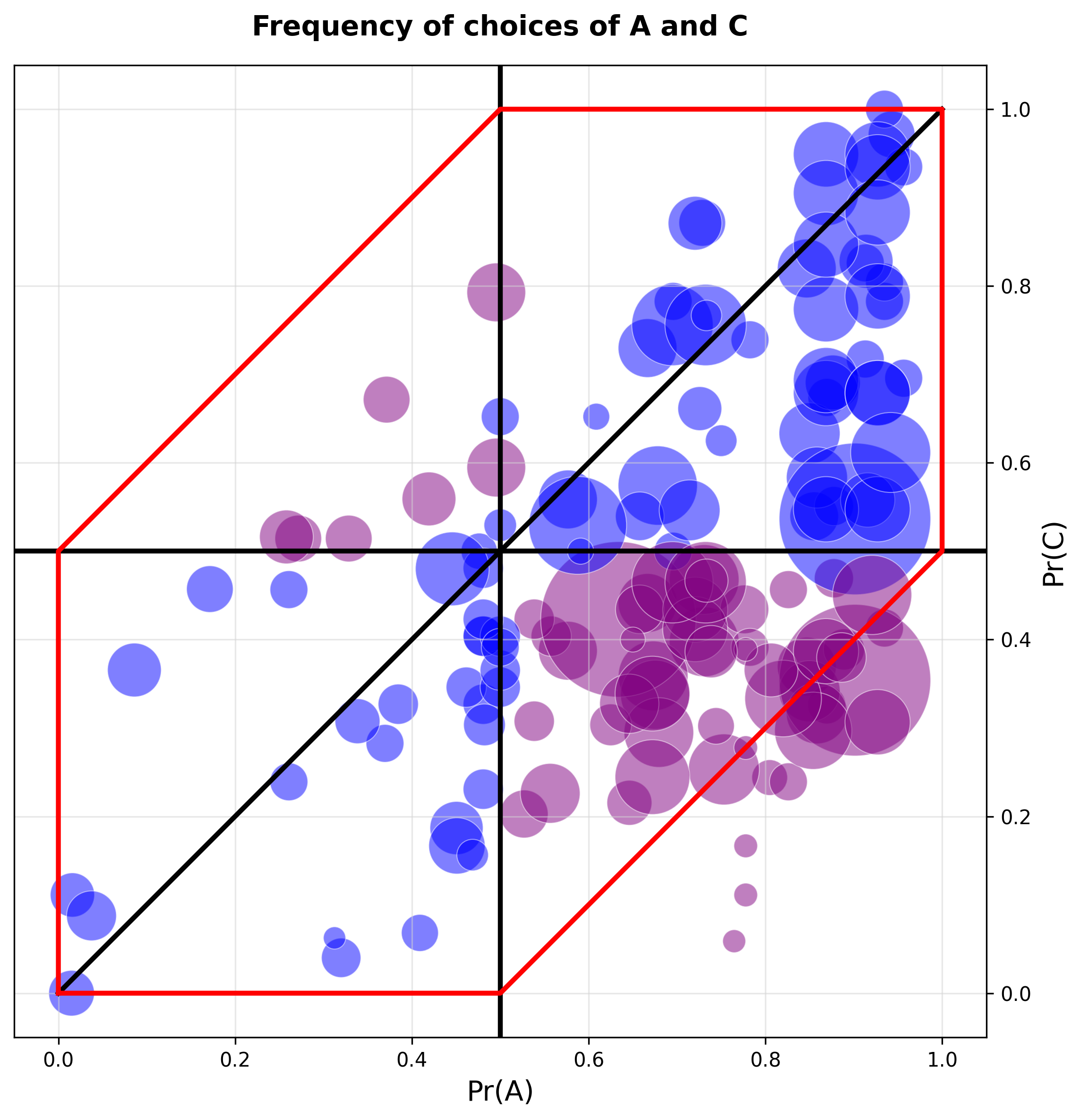}
\caption{Paired choice frequencies for the 143 experimental studies surveyed in \cite{Blavatskyy_Panchenko_Ortmann_2023}.}
\end{subfigure}\quad
\begin{subfigure}[t]{0.48\textwidth}
\includegraphics[width=2.8in]{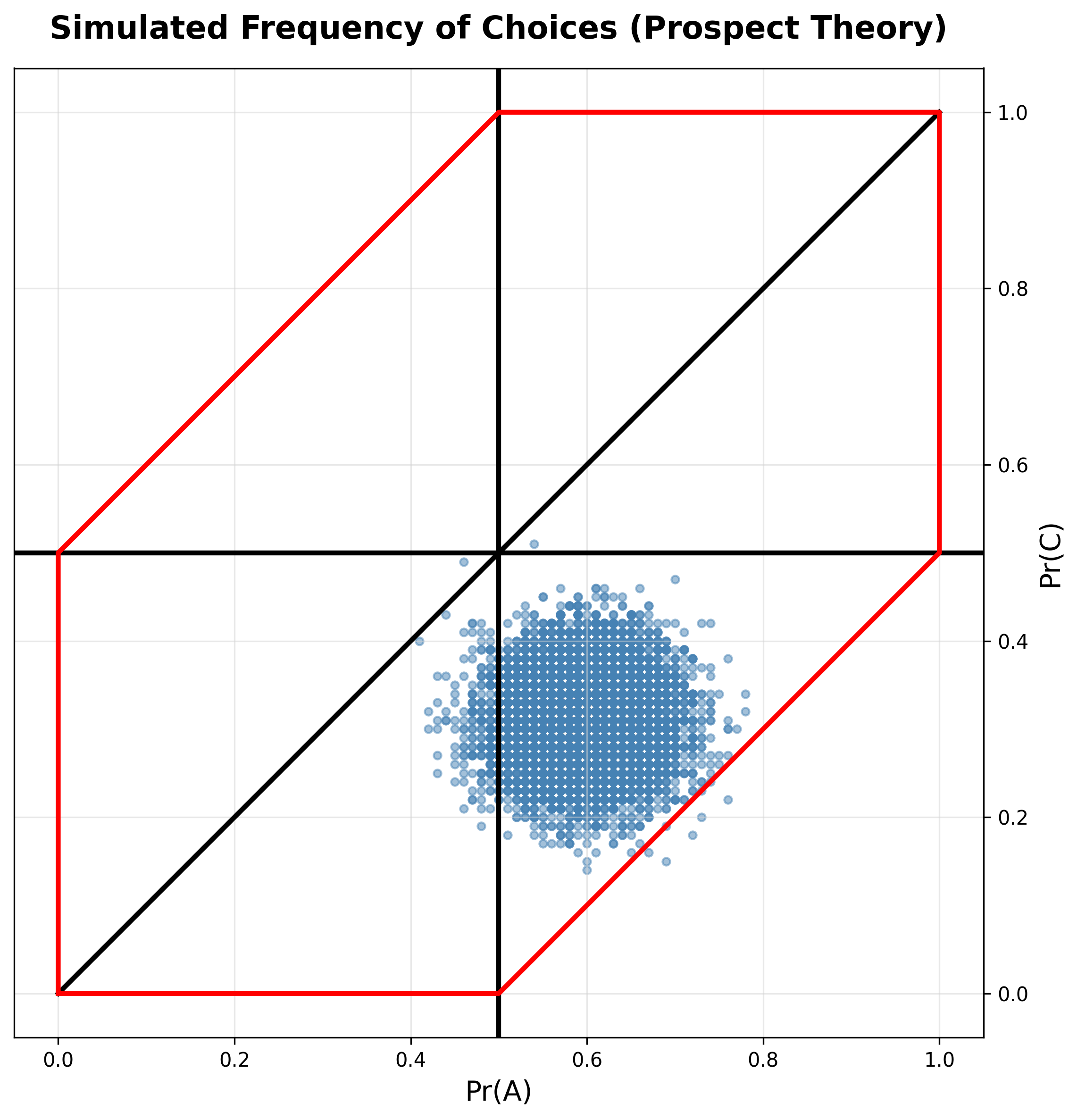}
\caption{Simulation of choices under prospect theory with  $u(x)=x^\gamma$ and distortion function $w(p)=\frac{p^\sigma}{(p^\sigma+(1-p)^\sigma)^{1/\sigma}}$.}
\end{subfigure}
\caption{Implications of Figure~\ref{fig:eu-cre-rcre} for data and simulated choices.}\label{fig:main}
\end{figure}

The discrepancy between the strong test and Panel (c) of Figure~\ref{fig:eu-cre-rcre} is easy to understand. Consider, for example, the choice frequencies $(2/3,1/3)$. In Panel (c), one could argue that these frequencies reflect an expected utility agent, or a population of agents, who would choose $A$ and $C$ with probability $2/3$; and $B$ and $D$ with probability $1/3$; however, when they are to choose $A$, they do so for sure, and when they are to choose $C$, they do so with probability $1/2$. For the strong test, we turn the $1/2$ threshold on its head: we infer from $\rho(A,B)=2/3$ that $A$ is preferred to $B$, and from $\rho(C,D)=1/3$ that $D$ is preferred to $C$. Then we conclude that these choices are inconsistent with expected utility theory.

Using the strong paired choice test, 41.26\% of the designs surveyed in \citeauthor{Blavatskyy_Panchenko_Ortmann_2023} display a common ratio effect, while 7\% display a reverse common ratio effect.\footnote{The observed percentages of the (reverse) common ratio effect in \citeauthor{Blavatskyy_Panchenko_Ortmann_2023} will be 51\% (resp., 6\%) if we allow for a 95\% confidence interval for sampling errors. For example, the observed frequencies $\hat{\rho}(A, B)=0.48$ and $\hat{\rho}(C, D)=0.3$ will be consistent with the common ratio inequality $\rho(A, B)\ge 0.5$ and $0.5>\rho(C, D)$.} The corresponding numbers in the experiments conducted by \mnoss are 10\% each. These results are summarized in Table~\ref{tab:data}. The strong test finds the same prevalence of the common ratio effect and the reverse common ratio effect in the experiments conducted by \mnossc, which is consistent with the message in their paper.  If we weight the studies in \citeauthor{Blavatskyy_Panchenko_Ortmann_2023} by the number of experimental participants, we see that almost 50\% of experimental participants across the surveyed experiments exhibit either a strong common ratio effect or a reverse common ratio effect.\footnote{The sizes of the different experiments in \mnoss are similar. There are a total of 900 participants in their experiments and 14909 in the experiments surveyed by \citeauthor{Blavatskyy_Panchenko_Ortmann_2023}.}

\begin{table}[htbp]
    \centering
    \small
    \caption{Common ratio effect in weak and strong paired tests.}
    \label{tab:data}
    \begin{tabular}{lcccccc}
        \toprule
        & & & & \multicolumn{2}{c}{\textbf{SPC}} & \\
        \cmidrule(lr){5-6}
        & \textbf{PrA} & \textbf{PrC} & \textbf{CRE} & \textbf{CRE} & \textbf{RCRE} & $n$ \\
        \midrule
        \citeauthor{mcgranaghan2024distinguishing} & 49.46\% & 46.77\% & 65.83\% & 10.00\% & 10.00\% & 120 \\
        \citeauthor{Blavatskyy_Panchenko_Ortmann_2023} & 67.37\% & 48.27\% & 79.02\% & 41.26\% & 6.99\% & 143 \\
        \bottomrule
    \end{tabular}
    
    \vspace{0.15cm}
    \parbox{0.85\textwidth}{\footnotesize \textit{Note:} CRE is the prevalence of the common ratio effect according to the weak paired test. The following two columns contain the prevalence of the common ratio effect and the reverse common ratio effect, according to the strong paired test. The number of experiments in each study is $n$.}
\end{table}

To reiterate, the criterion in Panel (c) means that expected utility is largely not rejected by the studies on the common ratio effect in the literature. The flip side of this is that the paired choice test, when interpreted as in Panel (c), has very low power to test for expected utility. To illustrate this lack of power, we simulate choices made under prospect theory.  Consider a CRRA utility function $u(x)=x^\gamma$ and distortion function $w(p)=\frac{p^\sigma}{(p^\sigma+(1-p)^\sigma)^{1/\sigma}}$. Let $\gamma=0.8$, $\sigma=0.7$, and $C=(x, r)=(12, 0.2)$ and $D=(y, rp)=(30, 0.1)$. We run 1000 simulations for $\rho(A, B)$ and $\rho(C, D)$, which are the sample averages of 100 independent choices according to the above prospect theory preference with i.i.d.\ and uniform errors $\epsilon_{AB}$ and $\epsilon_{CD}$ on $[-1.8, 1.8]$. We plot distributions of $(\rho(A, B), \rho(C, D))$ in Panel (b) of Figure~\ref{fig:main}. 

We find that, among 10000 simulations, only 3 choice frequencies lie within the CRE area in Panel (c). In other words, there is almost zero chance of rejecting expected utility, even for a prospect theory preference that deviates substantially from expected utility, if we use the criterion of Panel (c). In contrast, under the strong paired choice test (see the CRE area in Panel (b) of Figure~\ref{fig:eu-cre-rcre}), expected utility is correctly rejected for 9727 of 10000 simulations. 

To further illustrate the low power of the test criterion in Panel (c), we note that it is very unlikely to generate a preference reversal consistent with the common ratio effect when the values of $x, y, p, r$ are arbitrarily chosen as in \mnoss.\footnote{As noted in \mnossc, the prior literature on the common ratio effect has focused on values such that $\frac{x}{p\,y}\in [0.75, 1]$.} For example, consider a CRRA utility function $u(x)=x^\gamma$ and distortion function $w(p)=\frac{p^\sigma}{(p^\sigma+(1-p)^\sigma)^{1/\sigma}}$ with $\gamma=\sigma=0.8$. Let us consider all parameter value combinations of $(x, y, p, r)$ such that $x, y\in \{1, 2, \ldots, 10\}$ with $x<y$, $p, r\in \{0.1, 0.2, \ldots, 0.9\}$. There are 3645 possible combinations of such values. However, only 383 combinations (10.5\%) exhibit a preference reversal consistent with the common ratio effect: $A=(x, 1)\succ B=(y, p)$ and $C=(x, r)\prec D=(y, r\,p)$. For the remaining 3262 combinations (89.5\%), comparisons are consistent with expected utility: either $A\succsim B$ and $C\succsim D$ or $A\prec B$ and $C\prec D$. 

In other words, we are unlikely to observe the common ratio effect, even for prospect theory preferences that the strong test would reject as being inconsistent with expected utility, for arbitrary combinations of the parameters in the paired choice task. The reason is that the expected values of $A$ and $B$, $x$ and $p\,y$ may be too far apart for a preference reversal. Conditional on exhibiting the common ratio effect (i.e., within the aforementioned 383 combinations), the average $p$ is $0.57$ and the average ratio $r$ is $0.44$ (the median $p$ is $0.6$ and the median $r$ is $0.4$). Moreover, for all the 383 combinations that produce the common ratio effect, we have $\frac{x}{p\,y}\in  [0.8, 1.2]$. To some extent, our exercise reinforces the message of \mnoss that finding the common ratio effect depends on where one looks for it.

The issue of parameter selection poses an important question. The implicit assumption in choosing arbitrary parameter values is that we want to understand how ``global'' the common ratio effect is; how prevalent it is across possible decision problems. Another perspective is that we want to know whether subjects at all exhibit the common ratio effect --- \emph{somewhere}. A formal definition of the effect in a deterministic setting is $(x, 1)\sim (y, p)$ and $(x, r)\prec (y, r\,p)$ for any $x, y, r, p$. The paired valuation test is based on this definition, and it is reasonable to expect to observe the effect for various combinations of  $x, y, r, p$ if the effect is robust. However, this is not the case for paired-choice tasks, where the same values of $x, y, r, p$ are used across many experimental subjects with different risk preferences. It is not possible to find $x, y, r, p$ such that $(x, 1)\sim (y, p)$ for many different subjects. Moreover, if the values are such that  $(x, 1)\prec (y, p)$, then we cannot find the effect, i.e., there is a bias against the common ratio effect. So experimentalists need to find values such that $(x, 1)\succ (y, p)$ and $(x, r)\prec (y, r\,p)$ if expected utility is the null hypothesis and the common ratio effect is the alternative hypothesis. Similarly, if $x$ is too large relative to $p\,y$, then we obtain $(x, 1)\succ (y, p)$ and $(x, r)\succ (y, r\,p)$; i.e., no choice reversal. It is arguably for this reason that the literature has focused on parameter values such that the expected values of $A$ and $B$, $x$ and $p\,y$, are close to each other, and the majority of subjects prefer $A$ over $B$; i.e., $\rho(A, B)\ge 1/2$.

\section{Implications}\label{sec:takeaway}

What is the takeaway message of our results? First, the assumed model of stochastic choice is very important. When we follow the decision theoretic literature and adopt the model of random expected utility, the weak paired choice test is unbiased and reinforces the standard conclusion in the literature that the common ratio effect is prevalent.

Second, if we instead focus on other models of stochastic choice (including the models of Fechnerian additive random utility that are common in the experimental literature), then the strong paired choice test is unbiased. Indeed, the literature has often focused on the strong paired choice test. \cite{kahnemann1979prospect} describe one of the earliest experimental demonstrations of the common ratio effect. Their result is that 80\% of subjects choose $A$ over $B$, while 65\% of them choose $D$ over $C$. They write, ``To show that the modal pattern of preferences in Problems 3 and 4 is not compatible with the theory,'' and then they proceed to explain why the patterns $\rho(A, B)>\frac{1}{2}$ and $\rho(C, D)<\frac{1}{2}$ are inconsistent with expected utility theory. \cite{ballinger1997}, who pioneered the literature testing the common ratio effect in the stochastic choice framework, defined the common ratio effect in this way. Under the unbiased strong paired choice test, we find significant support for the common ratio effect, as discussed in the previous section and in Figure~\ref{fig:main}.

Third, our results qualify the findings in \mnossc, who find no evidence of a systematic common ratio effect in their aggregate data. Their central conclusions contrast with the previous literature,  which is almost exclusively based on paired-choice tasks and shows strong evidence of a common ratio effect. \mnoss explain the discrepancy through three mechanisms. The first is that the weak paired choice test is biased towards finding a common ratio effect, while the paired valuation test is (under the same assumptions) unbiased. Second, using valuation tests, they do not find evidence of an aggregate common ratio effect. Third, their experiments with paired choice tests are validated at the individual level by the valuation test and indicate little evidence of a systematic common ratio effect. We offer a somewhat different perspective on these discrepancies. 

In Proposition~\ref{prop:anythingoes}, we demonstrate that for paired valuation tests under Assumption 2b, essentially ``anything goes.'' For a CRRA utility, the mean test is systematically biased unless participants are exactly risk neutral, and the sign test hinges on a demanding symmetry condition that tightly constrains the correlation of preference shocks across tasks. In sharp contrast, the strong paired choice test remains unbiased across a wide range of stochastic choice models and requires assumptions that are strictly weaker than Assumption 2b. We also show that the Panel (c) test for the weak paired choice condition has low power, making it difficult to reject the null of expected utility even when it is false.

Taken together, these findings provide robust empirical support for the Allais paradox in the guise of the common ratio effect. If we treat the random expected utility model as the canonical representation of expected utility in stochastic choice, the common ratio effect typically emerges through the weak paired choice test. If we instead want conclusions that are robust to alternative formulations of stochastic expected utility theory, the strong paired choice test is broadly unbiased. Under the more demanding strong test, we still find clear and substantial evidence for common ratio and reverse common ratio effects.

The strong paired choice test we advocate for can be applied more broadly and is useful for detecting various behavioral puzzles in choice data. Many behavioral puzzles take a form similar to the common ratio effect: $A$ is preferred over $B$, but $D$ is preferred over $C$, where $C$ and $D$ are certain transformations of $A$ and $B$, respectively. 
As long as $U(A)-U(B)\ge 0$ if and only if $U(C)-U(D)\ge 0$, where $U$ is a utility representation for a ``standard'' model we wish to test, our strong test ``$\rho(A, B)\ge 1/2$ iff $\rho(C, D)\ge 1/2$'' is robustly unbiased. For example, consider the \emph{common consequence effect} of the Allais paradox, which is exhibited if $A=(x, 1)$ is preferred over $B=(y, p; x, q)$ while $D=(y, p)$ is preferred over $C=(x, 1-q)$. Since $U(A)-U(B)=(1-q)\,u(x)-p\,u(y)=U(C)-U(D)$ under expected utility, the strong test is unbiased. Alternatively, consider \emph{present bias} in intertemporal choice. The bias is exhibited if $A=(x, 0)$ is preferred over $B=(y, t)$ while $D=(y, t+s)$ is preferred over $C=(x, s)$, where $(x, t)$ is a delayed reward that returns $x$ at time $t$. Since $U(A)-U(B)=u(x)-\delta^t\, u(y)=\frac{U(C)-U(D)}{\delta^s}$ under exponential discounting, the strong test is unbiased. Therefore, our strong test and the corresponding discussion of different approaches to modeling expected utility are readily applicable to various behavioral puzzles and choice contexts.

\section{Proofs}\label{sec:proofs}

\subsection{Proof of Proposition~\ref{prop:anythingoes}}

For notational economy,  define $B(\gamma) = p\, y^\gamma$. We need to study the following random valuations:
\begin{align}
   m_{\mathrm{AB}} =  \Gamma_1(\ep_{\mathrm{AB}}) &= \left( B(\gamma) + \ep_{\mathrm{AB}} \right)^{1/\gamma} \\
   m_{\mathrm{CD}} =  \Gamma_2(\ep_{\mathrm{CD}}) &= \left( B(\gamma) + \frac{\ep_{\mathrm{CD}}}{r} \right)^{1/\gamma}
\end{align}

\noindent\textbf{Part 1.} First, we prove the first statement in Proposition~\ref{prop:anythingoes}. Suppose that $\ep_{\mathrm{AB}}$ and $\ep_{\mathrm{CD}}$ are mean-zero, symmetric, random variables such that $\ep_{\mathrm{AB}} \stackrel{d}{=} k\ep_{\mathrm{CD}}$ for some $k > 0$. Specifically, we assume symmetric two-point distributions $\{-c_1, c_1\}$ and $\{-c_2, c_2\}$, subject to the constraint that the base of the exponent remains strictly positive (so $c_1 < B(\gamma)$ and $c_2 < r B(\gamma)$). 

We seek to determine the full range of achievable values for $(\E[\Gamma_1], \E[\Gamma_2])$ as $c_1, c_2$ span their permissible supports and as $\gamma$ varies from $1$ down to $0$.

Because $1/\gamma \ge 1$, the transformations of the two random variables are convex. By Jensen's Inequality, the minimum expected value for any mean-zero distribution occurs when the variance is zero. For both functions, this absolute minimum is:
$$ E_{min}(\gamma) = B(\gamma)^{1/\gamma} = \left(p \cdot y^\gamma\right)^{1/\gamma} = y\, \left(p^{1/\gamma}\right). $$

The maximum expected value is achieved by maximizing the variance. For $\Gamma_1$, as $c_1 \to B(\gamma)$:
$$ \lim_{c_1 \to B(\gamma)} \left[ \frac{1}{2}(B(\gamma) - c_1)^{1/\gamma} + \frac{1}{2}(B(\gamma) + c_1)^{1/\gamma} \right] = \frac{1}{2}(2B(\gamma))^{1/\gamma} $$
Substituting $B(\gamma)$ yields the supremum:
$$ E_{max}(\gamma) = 2^{\frac{1-\gamma}{\gamma}} \left(p \cdot y^\gamma\right)^{1/\gamma} = \frac{y}{2}(2p)^{1/\gamma} $$

Because the variances of $\ep_{\mathrm{AB}}$ and $\ep_{\mathrm{CD}}$ can be chosen independently via the scalar $k$, the joint achievable range for any fixed $\gamma$ is exactly the Cartesian product of their individual ranges:
$$ \mathcal{E}(\gamma)= (E_{min}(\gamma), \; E_{max}(\gamma)) \times (E_{min}(\gamma), \; E_{max}(\gamma)).$$

Note that the minimum is monotonically increasing in $\gamma$ and the maximum is monotonically decreasing as long as $2p>1$.

As $\gamma \to 1$, the limits converge to $E_{min}(1) = p\, y$ and $E_{max}(1) = \frac{y}{2}(2p) = p\, y$. The square $\mathcal{E}(\gamma)$ collapses to the point $(p\, y, p\, y)$. In contrast, as $\gamma \to 0$ we get  $\lim_{\gamma \to 0} y\left(p^{1/\gamma}\right) = 0$ and $\lim_{\gamma \to 0} \frac{y}{2}(2\, p)^{1/\gamma} = \infty$. So
\[ \bigcup_{\gamma\in (0,1)} \mathcal{E}(\gamma)= \Re^2_+.
\]

\smallskip
\noindent\textbf{Part 2.} We now turn to the proof of the second statement in Proposition~\ref{prop:anythingoes}. In particular, we now consider the values that
\[
\Pr(m_{\mathrm{AB}}>m_{\mathrm{CD}}) = \Pr(\Gamma_1(\ep_{\mathrm{AB}}) > \Gamma_2(\ep_{\mathrm{CD}}))
\] can take under the same assumptions as above, with the difference that we now fix $\ep_i$ to be uniform distributions because this allows for a richer behavior of the joint distribution of $(\ep_{\mathrm{AB}},\ep_{\mathrm{CD}})$.

As before, we may define $B(\gamma)=p\, y^{\gamma}$; the actual numbers of $y$, $p$ and $r$ do not matter, only that $B(\gamma)>0$ and $r>0$.  
Note that 
\begin{align*}
\Pr\left( (B(\gamma) + \ep_{\mathrm{AB}})^{1/\gamma} > \left(B(\gamma) + \frac{1}{r}\epsilon_{\mathrm{CD}}\right)^{1/\gamma} \right) &= \Pr\left( B(\gamma) + \ep_{\mathrm{AB}} > B(\gamma) + \frac{1}{r}\epsilon_{\mathrm{CD}} \right) \\
&= \Pr( \ep_{\mathrm{AB}} > \frac{1}{r}\epsilon_{\mathrm{CD}} ).
\end{align*}

We assume that $\ep_{\mathrm{AB}}$ and $\ep_{\mathrm{CD}}$ are two mean-zero uniform random variables, possibly with different supports. This ensures that $\ep_{\mathrm{AB}}$ equals $k \ep_{\mathrm{CD}}$ in distribution for some $k>0$. The supports need to be such that $B(\gamma)+\ep_{\mathrm{AB}}\geq 0$ and $B(\gamma)+\frac{1}{r}\ep_{\mathrm{CD}}\geq 0$ with probability one because the power $1/\gamma$ may not be defined otherwise.

Suppose first that the support of $\ep_{\mathrm{AB}}$ is greater than the support of $\frac{1}{r}\ep_{\mathrm{CD}}$. Define $X=\ep_{\mathrm{AB}}$ and $Y=\frac{1}{r}\ep_{\mathrm{CD}}$. Let $c$ and $d$ be positive real numbers such that $B(\gamma)\geq c > d$.  Let $Y \sim U[-d, d]$ and $Z \sim U[-c, c]$ be two independent, uniform, random variables. Define the random variable $X$ piecewise, conditional on the value of $Z$ and $Y$, by
\[      X = 
    \begin{cases} 
      Z & \text{if } Z < -d \\
      a + bY & \text{if } Z \geq -d 
   \end{cases}
\]
where $a = \frac{c-d}{2}$ and $b = \frac{c+d}{2d}$. This construction is illustrated in Figure~\ref{fig:pfProp7}.

\begin{figure}[t]
  \centering
  \begin{tikzpicture}[scale=.9, >=Stealth]
    % === 1. Draw the Axes ===
    \draw[->, thick] (-3.2,0) -- (3.2,0) node[right] {\footnotesize $Y$};
    \draw[->, thick] (0,-3.2) -- (0,3.2) node[above] {\footnotesize $X$};
    
    % Origin (Adjusted size for balance)
    \node[below right] at (0,0) {\scriptsize $(0,0)$};
    
    % === 2. Outer and Inner Boundaries ===
    % Draw the [-c, c]^2 square (dotted)
    \draw[dotted, thick] (-2.5,-2.5) rectangle (2.5,2.5);
    
    % Draw the [-d, d]^2 square (dotted)
    \draw[dotted, thick] (-1.25,-1.25) rectangle (1.25,1.25);
    
    % === 3. Ticks and Labels ===
    \foreach \coord/\label in {2.5/c, -2.5/-c, 1.25/d, -1.25/-d} {
        % X-axis ticks
        \draw (\coord, 0.05) -- (\coord, -0.05) node[below] {\footnotesize $\label$};
        % Y-axis ticks
        \draw (0.05, \coord) -- (-0.05, \coord) node[left] {\footnotesize $\label$};
    }
    
    % === 4. Blue Construction Line and Points ===
    % Line X = a + bY: (-d, -d) to (d, c)
    \draw[thick, blue] (-1.25,-1.25) -- (1.25,2.5);
    
    % Points at the ends of the segment
    \filldraw[blue] (-1.25,-1.25) circle (1.5pt);
    \filldraw[blue] (1.25,2.5) circle (1.5pt);
    
    % Line Label (moved slightly for clarity)
    \node[blue, above right] at (1.25, 2.5) {\footnotesize $X = a + bY$};
    
    % === 5. Gray Reference Diagonal ===
    % Diagonal X=Y (dashed, slightly shortened for aesthetics)
    \draw[dashed, gray!60] (-2.5,-2.5) -- (2.5,2.5);

    % === 6. Braces with Flipped Decoration ===

% Lower Brace: Draw bottom-to-top so it points LEFT
    \draw[decorate, decoration={brace, amplitude=5pt}, thin, red] 
        (-2.7, -2.5) -- (-2.7, -1.27) 
        node[midway, xshift=-15pt, rotate=90] {\footnotesize $X=Z$};

    % Upper Brace: Draw bottom-to-top so it points LEFT
    \draw[decorate, decoration={brace, amplitude=5pt}, thin, red] 
        (-2.7, -1.20) -- (-2.7, 2.5)
        node[midway, xshift=-15pt, rotate=90] {\footnotesize $\mbox{ }\qquad X=a+bY$};

  \end{tikzpicture}
  \caption{Illustration of the construction in the proof of Proposition~\ref{prop:anythingoes}.}
  \label{fig:pfProp7}
\end{figure}
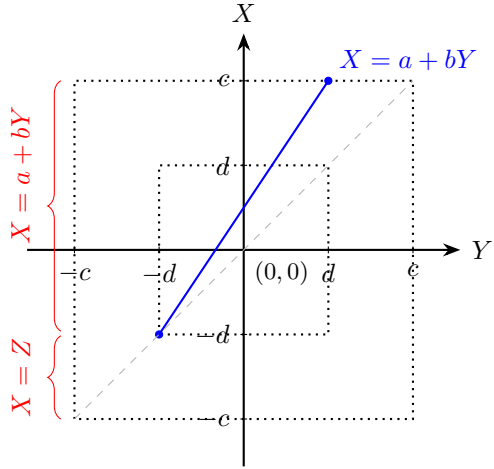

First, we show that $X\sim U[-c,c]$. To this end, we focus on the decomposition of $\Pr(X<x)$ according to the values of $Z$. Specifically, 
\[
    \Pr(X < x) = \Pr(X < x \wedge Z < -d) + \Pr(X < x \wedge Z \geq -d).
\]

\textbf{Case 1:} $x < -d$. Since the minimum value of $X$ conditional on $Z \geq -d$ is $-d$, the term $\Pr(X < x \wedge Z \geq -d)$  is zero for any $x < -d$. Conditional on $Z<-d$,  $X = Z$. Thus,
\begin{align*}
  \Pr(X < x) = \Pr(X < x \wedge Z < -d) & = \Pr(Z < x \wedge  Z < -d) \\
  & = \Pr(Z < x) = \frac{x - (-c)}{2c} = \frac{x+c}{2c},
  \end{align*} as $x<-d$ and $Z\sim U[-c,c]$.

\textbf{Case 2:} $x \in [-d, c]$. Again, we use the decomposition of $\Pr(X<x)$ according to $Z$. We have that  $x \geq -d$ and the event $Z < -d$ implies that $X=Z \leq x$. Therefore, the intersection of these events is simply $Z < -d$:
\[
\Pr(X < x \wedge Z < -d) = \Pr(Z \leq x \wedge Z < -d) = \Pr(Z < -d) = \frac{c-d}{2c}.
\]

For the second term, conditional on $Z \geq -d$, $X = a + bY$. By the independence of $Y$ and $Z$:
\[
\Pr(X < x \wedge Z \geq -d) = \Pr(a + bY < x) \Pr(Z \geq -d).
\]
Now, 
\[
\Pr(a + bY \leq x) = \Pr\left(Y \leq \frac{x-a}{b}\right) = \frac{\frac{x-a}{b} - (-d)}{2d} = \frac{x - a + bd}{2bd}
\]
Substituting $a = \frac{c-d}{2}$ and $b = \frac{c+d}{2d}$, we note that $bd = \frac{c+d}{2}$ and $a - bd = -d$. This simplifies the expression to $\frac{x+d}{c+d}$.  Multiplying this by $\Pr(Z \geq -d) = \frac{c+d}{2c}$ we obtain
\[\Pr(X \leq x \wedge Z \geq -d) = \left(\frac{x+d}{c+d}\right)\left(\frac{c+d}{2c}\right) = \frac{x+d}{2c}.
\]
Adding the two partitioned terms together gives the CDF for $x \in [-d, c]$:
\begin{equation*}
    \Pr(X \leq x) = \frac{c-d}{2c} + \frac{x+d}{2c} = \frac{x+c}{2c}
\end{equation*}

We conclude that $\Pr(X<x)$ equals the CDF of a uniform random variable on $[-c,c]$.

Finally, we calculate $\Pr(X>Y)$. Again we decompose this probability according to the possible values of $Z$:
\[
 \Pr(X > Y) = \Pr(X > Y \mid Z < -d)\Pr(Z < -d) + \Pr(X > Y \mid Z \geq -d)\Pr(Z \geq -d).
\]

First consider the case when $Z < -d$.  Here, $X = Z$. Since $Y$ has minimum value $-d$, and $Z < -d$, we have $X < -d \leq Y$. Thus $\Pr(X > Y \mid Z < -d) = 0$. 

Second, consider when $Z \geq -d$. We need to evaluate the inequality $X=a + bY > Y$, i.e., $\frac{a}{1-b}=d>Y$. Since $Y\sim U[-d,d]$, we have $d>Y$ with probability $1$. Geometrically, this is clear from Figure~\ref{fig:pfProp7}. Thus, $\Pr(X > Y \mid Z \geq -d) = 1$.

Putting these together, we obtain that 
\[
\Pr(X > Y) = (0)\left(\frac{c-d}{2c}\right) + (1)\left(\frac{c+d}{2c}\right) = \frac{c+d}{2c}\in (1/2,1].
\]
For each value of $q \in (1/2,1]$ we may find $d\in (0,c]$ such that $\Pr(X > Y) =q$.

Finally, we turn to the case when the support of $\ep_{\mathrm{AB}}$ is smaller than the support of $\frac{1}{r}\ep_{\mathrm{CD}}$. Then we can let  $Y=\ep_{\mathrm{AB}}$ and $X=\frac{1}{r}\ep_{\mathrm{CD}}$. Our previous analysis constructs a joint distribution with the desired marginals, and such that, for each $q \in (0,1/2)$ we may find $d\in (0,c]$ for which \[
\Pr(\ep_{\mathrm{AB}}>\frac{1}{r}\ep_{\mathrm{CD}}) = 1-\Pr(X> Y) = \frac{c-d}{2c}=q.
\]
The only value of $q$ not covered by the proof is $1/2$, but this is achieved trivially with independent errors.

\subsection{Proof of Proposition~\ref{prop:linearity}} Suppose there are functions $F$ and $u$ such that $\rho$ is scalable with respect to expected utility. It is without loss of generality to assume that $u(0)=0$. Suppose that Weak Linearity is satisfied. Note that Weak Linearity implies that
\[\rho((x, p), (y, p\,q))=\rho((x, r\,p), (y, r\,p\,q))\text{ for any }p\le 1.\]
For any $(x, r)$ and $(y, q)$, 
\[F(pu(x), p\,q\,u(y))=\rho((x, p), (y, p\,q))=\rho((x, r\,p), (y, r\,p\,q))=F(r\,p\,u(x), r\,p\,q\,u(y)).\]
Recall that $X$ is a finite set of monetary prizes such that $0\in X\subset \Re_{+}$. Let $u^*=u(x^*)=\max_{x\in X} u(x)$ and fix $\delta\in (0, u)$. 

Take any $b\le a\le u^*$. First, suppose that $a\ge \delta$. Since the range of $p\,u$ is connected, we can find $x, y, p, q, r$ such that $p\,u(x)=a$, $p\,q\,u(y)=b$, and $r\,p\,u(x)=\delta$. Hence, we have
\[F(a, b)=F(p\,u(x), p\,q\,u(y))=F(r\,p\,u(x), r\,p\,q\, u(y))=F(\delta, \delta\,\frac{b}{a}).\]

Second, suppose that $a<\delta$. Then we can find $x, y, p, q, r$ such that $p\, u(x)=\delta$, $p\,q\,u(y)=\delta\,\frac{b}{a}$, and $r\,p\,u(x)=a$. Hence, we have
\[F(a, b)=F(r\,p\,u(x), r\,p\,q\,u(y))=F(p\,u(x), p\,q\, u(y))=F(\delta, \delta\,\frac{b}{a}).\]
Hence, we have $F(a, b)=F(\delta, \delta\,\frac{b}{a})$ for any $b\le a\le u^*$. Since $F(a, b)=1-F(b, a)$, we also have $F(a, b)=F(\delta, \delta\,\frac{b}{a})$ for any $a, b\le u^*$.\footnote{Notice that $F(a, b)$ is redundant if either $a>u^*$ or $b>u^*$.}
Moreover, there is a function $G$ such that $F(\delta, \delta \frac{b}{a})=G(\ln(a)-\ln(b)).$ Hence, 

\[\rho(\ell_A, \ell_B)=F\big(\E u(\ell_A), \E u(\ell_B)\big)=G\big(\ln(\E u(\ell_A))- \ln(\E u(\ell_B))\big).\]
This proves the second part of the proposition. To prove the first part, let us now assume that linearity is satisfied. Then, by the proof of the second part, for any $\ell$, we need to have
\[\rho(\ell_A, \ell_B)=G\big(\ln(\E u(\ell_A))- \ln(\E u(\ell_B))\big)=\rho(\alpha\ell_A+(1-\alpha)\ell, \alpha \ell_B+(1-\alpha)\ell)\]
\[=G\big(\ln(\alpha \E u(\ell_A)+(1-\alpha)\E u(\ell))- \ln(\alpha\E u(\ell_B)+(1-\alpha)\E u(\ell))\big),\]
which is violated because $\frac{a+c}{b+c}\neq \frac{a}{b}$ for any $a, b, c>0$ with $a\neq b$.

\subsection{Proof of Proposition~\ref{prop:weakEU}}

Since $F(0)=\frac{1}{2}$, we have 
\[\rho(A, B)=F\big(G(A, B)(u(x)-p\,u(y))\big)\ge \frac{1}{2}\] iff $u(x)\ge p\,u(y)$ iff $\rho(C, D)=F\big(G(C, D)(r\,u(x)-r\,p\,u(y))\big)\ge \frac{1}{2}$. 

\subsection{Proof of Proposition~\ref{prop:assumption2}} Recall that $\Gamma$ is strictly increasing and $m^*_{\mathrm{AB}}=m^*_{\mathrm{CD}}$. Let $x$ be the monetary quantity in Lottery $A$. Hence, if $x\ge \Gamma(m^*_{\mathrm{AB}}, \epsilon')$, then we have 
\[\rho(A, B)=\Pr(x\ge \Gamma(m^*_{\mathrm{AB}}, \ep_{\mathrm{AB}}))\ge \Pr(\Gamma(m^*_{\mathrm{AB}}, \epsilon')\ge \Gamma(m^*_{\mathrm{AB}}, \ep_{\mathrm{AB}}))=\Pr(\epsilon'\ge \ep_{\mathrm{AB}})=\frac{1}{2}\] 
and
\[\rho(C, D)=\Pr(x\ge \Gamma(m^*_{\mathrm{CD}}, \ep_{\mathrm{CD}}))\ge \Pr(\Gamma(m^*_{\mathrm{CD}}, \epsilon')\ge \Gamma(m^*_{\mathrm{CD}}, \ep_{\mathrm{CD}}))=\Pr(\epsilon'\ge \ep_{\mathrm{CD}})=\frac{1}{2}.\] 

Similarly, if $x\le \Gamma(m^*_{\mathrm{AB}}, \epsilon')$,  then we have 
\[\rho(A, B)=\Pr(x\ge \Gamma(m^*_{\mathrm{AB}}, \ep_{\mathrm{AB}}))\le \Pr(\Gamma(m^*_{\mathrm{AB}}, \epsilon')\ge \Gamma(m^*_{\mathrm{AB}}, \ep_{\mathrm{AB}}))=\Pr(\epsilon'\ge \ep_{\mathrm{AB}})=\frac{1}{2}\] 
and
\[\rho(C, D)=\Pr(x\ge \Gamma(m^*_{\mathrm{CD}}, \ep_{\mathrm{CD}}))\le \Pr(\Gamma(m^*_{\mathrm{CD}}, \epsilon')\ge \Gamma(m^*_{\mathrm{CD}}, \ep_{\mathrm{CD}}))=\Pr(\epsilon'\ge\ep_{\mathrm{CD}})=\frac{1}{2}.\] 
Hence, $\rho(A, B)\ge \frac{1}{2}$ if and only if $\rho(C, D)\ge \frac{1}{2}$.

\subsection{Proof of Proposition~\ref{prop:stpairchoiceab}}

By Equation (1), 

\[\rho(A, B)=\Pr(u(x)-p\,u(y)>\xi(A, B)),\]
where
\[\xi(A, B)=\epsilon_p\,f(y)+\epsilon_y\,g(p)-\epsilon_1\,f(x)-\epsilon_x\,g(1)+(\epsilon_{p, y}-\epsilon_{1, x}),\]
and 
\[\rho(C, D)=\Pr(r(u(x)-p\,u(y))>\xi(C, D)),\]
where
\[\xi(C, D)=\epsilon_{pr}\,f(y)+\epsilon_y\,g(pr)-\epsilon_r\,f(x)-\epsilon_x\,g(r)+(\epsilon_{pr, y}-\epsilon_{r, x}).\]
It is enough to show that $\xi(A, B)$ and $\xi(C, D)$ are symmetric around zero. Let us prove that $\xi(C, D)$ is symmetric around zero for all three cases (essentially identical arguments will imply that $\xi(A, B)$ is also symmetric around zero). 
We repeatedly use the following two facts:

\smallskip
\noindent\textbf{Fact 1:} $X$ and $Y$ are symmetric around zero and either independent or linearly dependent. Then $a X+bY$ is symmetric around zero for any $a, b\in\Re$.

If $X$ and $Y$ are linearly dependent (and since they are symmetric around zero), $X=c\,Y$ for some $c$. Hence, $a\,X+b\,Y=(a+bc)\,Y$ is symmetric around zero. If $X$ and $Y$ are independent, then $a X+bY$ is also symmetric around zero. 

\smallskip
\noindent\textbf{Fact 2:} $X$ and $Y$ are symmetric around zero and independent. Then $a X+bY+c\,X\,Y$ is symmetric around zero for any $a, b, c\in\Re$.

Notice that we wrote $\xi(C, D)$ as a weighted sum of five error terms. All three cases directly assume that each of the first four terms is symmetric around zero. Note that the fifth term $\epsilon_{pr, y}-\epsilon_{r, x}$ is also symmetric around zero. For the first case, $\epsilon_{pr, y}$ and $\epsilon_{r, x}$ are symmetric around zero and either independent or linearly dependent. Hence, $\epsilon_{pr, y}-\epsilon_{r, x}$ is symmetric around zero. For the second  case, $\epsilon_{pr, y}$ and $\epsilon_{r, x}$ are symmetric around $\gamma$ because $\epsilon_p$ and $\epsilon_x$ are symmetric around zero and independent. For the third case, $\epsilon_{pr, y}-\epsilon_{r, x}$ is symmetric around zero because $\epsilon_{pr, y}$ and $\epsilon_{r, x}$ are independent, and either identical or symmetric around the same fixed constant. 

Finally, $\xi(C, D)$ is symmetric around zero by Fact 1 in the first and third cases, while by Fact 2 in the second case.

\subsection{Proof of Proposition~\ref{prop:pvtestbiased}} Let us assume that errors $\epsilon_p$ are i.i.d. and symmetric around zero and $\epsilon_{p, x}=\epsilon_x=0$ with probability $1$. This case satisfies all three assumptions of Proposition~\ref{prop:stpairchoiceab}. Then by the proof of part one of Proposition 10, we have $\E[\ep_{\mathrm{AB}}]\neq \E[\ep_{\mathrm{CD}}]$. It is not difficult to provide examples with non-degenerate error distributions.

\clearpage

\bibliographystyle{ecta}
\bibliography{allais}

\begin{thebibliography}{29}
\newcommand{\enquote}[1]{``#1''}
\expandafter\ifx\csname natexlab\endcsname\relax\def\natexlab#1{#1}\fi

\bibitem[\protect\citeauthoryear{Agranov and Ortoleva}{Agranov and
  Ortoleva}{2017}]{agranov2017stochastic}
\textsc{Agranov, M. and P.~Ortoleva} (2017): \enquote{Stochastic choice and
  preferences for randomization,} \emph{Journal of Political Economy}, 125,
  40--68.

\bibitem[\protect\citeauthoryear{Allais}{Allais}{1953}]{allais1953comportement}
\textsc{Allais, M.} (1953): \enquote{Le comportement de l'homme rationnel
  devant le risque: critique des postulats et axiomes de l'{\'e}cole
  am{\'e}ricaine,} \emph{Econometrica: journal of the Econometric Society},
  503--546.

\bibitem[\protect\citeauthoryear{Apesteguia and Ballester}{Apesteguia and
  Ballester}{2018}]{apesteguia2018monotone}
\textsc{Apesteguia, J. and M.~A. Ballester} (2018): \enquote{Monotone
  stochastic choice models: The case of risk and time preferences,}
  \emph{Journal of Political Economy}, 126, 74--106.

\bibitem[\protect\citeauthoryear{Ballinger and Wilcox}{Ballinger and
  Wilcox}{1997}]{ballinger1997}
\textsc{Ballinger, T.~P. and N.~T. Wilcox} (1997): \enquote{Decisions, Error
  and Heterogeneity,} \emph{The Economic Journal}, 107, 1090--1105.

\bibitem[\protect\citeauthoryear{Barberis}{Barberis}{2013}]{barberis2013thirty}
\textsc{Barberis, N.~C.} (2013): \enquote{Thirty years of prospect theory in
  economics: A review and assessment,} \emph{Journal of economic perspectives},
  27, 173--196.

\bibitem[\protect\citeauthoryear{Ben-Akiva}{Ben-Akiva}{1973}]{ben1973structure}
\textsc{Ben-Akiva, M.~E.} (1973): \enquote{Structure of passenger travel demand
  models.} Ph.D. thesis, Massachusetts Institute of Technology.

\bibitem[\protect\citeauthoryear{Berry, Levinsohn, and Pakes}{Berry
  et~al.}{1995}]{berry1995automobile}
\textsc{Berry, S., J.~Levinsohn, and A.~Pakes} (1995): \enquote{Automobile
  Prices in Market Equilibrium,} \emph{Econometrica}, 63, 841--890.

\bibitem[\protect\citeauthoryear{Blavatskyy, Panchenko, and Ortmann}{Blavatskyy
  et~al.}{2023}]{Blavatskyy_Panchenko_Ortmann_2023}
\textsc{Blavatskyy, P., V.~Panchenko, and A.~Ortmann} (2023): \enquote{How
  common is the common-ratio effect?} \emph{Experimental Economics}, 26,
  253–272.

\bibitem[\protect\citeauthoryear{Camerer}{Camerer}{1995}]{camerer1995individual}
\textsc{Camerer, C.} (1995): \enquote{Individual decision making,} in \emph{The
  handbook of experimental economics}, ed. by J.~H. Kagel and A.~E. Roth,
  Princeton University Press, 587--704.

\bibitem[\protect\citeauthoryear{Fishburn}{Fishburn}{1978}]{fishburn1978choice}
\textsc{Fishburn, P.~C.} (1978): \enquote{Choice probabilities and choice
  functions,} \emph{Journal of Mathematical Psychology}, 18, 205--219.

\bibitem[\protect\citeauthoryear{Gul and Pesendorfer}{Gul and
  Pesendorfer}{2006}]{gul2006random}
\textsc{Gul, F. and W.~Pesendorfer} (2006): \enquote{Random expected utility,}
  \emph{Econometrica}, 74, 121--146.

\bibitem[\protect\citeauthoryear{Hausman and Wise}{Hausman and
  Wise}{1978}]{hausman1978conditional}
\textsc{Hausman, J.~A. and D.~A. Wise} (1978): \enquote{A conditional probit
  model for qualitative choice: Discrete decisions recognizing interdependence
  and heterogeneous preferences,} \emph{Econometrica: Journal of the
  econometric society}, 403--426.

\bibitem[\protect\citeauthoryear{He and Natenzon}{He and
  Natenzon}{2024}]{he2024moderate}
\textsc{He, J. and P.~Natenzon} (2024): \enquote{Moderate utility,}
  \emph{American Economic Review: Insights}, 6, 176--195.

\bibitem[\protect\citeauthoryear{Hey}{Hey}{2001}]{hey2001does}
\textsc{Hey, J.~D.} (2001): \enquote{Does repetition improve consistency?}
  \emph{Experimental economics}, 4, 5--54.

\bibitem[\protect\citeauthoryear{Kahneman and Tversky}{Kahneman and
  Tversky}{1979}]{kahnemann1979prospect}
\textsc{Kahneman, D. and A.~Tversky} (1979): \enquote{Prospect Theory. An
  Analysis of Decision under Uncertainty,} \emph{Econometrica}, 47, 263--291.

\bibitem[\protect\citeauthoryear{Loomes}{Loomes}{2005}]{Loomes_2005}
\textsc{Loomes, G.} (2005): \enquote{Modelling the Stochastic Component of
  Behaviour in Experiments: Some Issues for the Interpretation of Data,}
  \emph{Experimental Economics}, 8, 301–323.

\bibitem[\protect\citeauthoryear{Machina}{Machina}{1987}]{machina1987choice}
\textsc{Machina, M.~J.} (1987): \enquote{Choice under uncertainty: Problems
  solved and unsolved,} \emph{Journal of Economic Perspectives}, 1, 121--154.

\bibitem[\protect\citeauthoryear{Machina}{Machina}{2008}]{machina2008non}
---\hspace{-.1pt}---\hspace{-.1pt}--- (2008): \enquote{Non-expected utility
  theory,} in \emph{The New Palgrave Dictionary of Economics}, Springer, 1--14.

\bibitem[\protect\citeauthoryear{Machina}{Machina}{2018}]{machina2018non}
---\hspace{-.1pt}---\hspace{-.1pt}--- (2018): \enquote{Non-expected utility
  theory,} in \emph{The New Palgrave Dictionary of Economics}, Springer,
  9570--9582.

\bibitem[\protect\citeauthoryear{McFadden}{McFadden}{1978}]{mcfadden1978modeling}
\textsc{McFadden, D.} (1978): \enquote{Modeling the Choice of Residential
  Location,} \emph{Spatial Interaction Theory and Planning Models}, 75--96.

\bibitem[\protect\citeauthoryear{McGranaghan, Nielsen, O’Donoghue,
  Somerville, and Sprenger}{McGranaghan
  et~al.}{2024}]{mcgranaghan2024distinguishing}
\textsc{McGranaghan, C., K.~Nielsen, T.~O’Donoghue, J.~Somerville, and C.~D.
  Sprenger} (2024): \enquote{Distinguishing common ratio preferences from
  common ratio effects using paired valuation tasks,} \emph{American Economic
  Review}, 114, 307--347.

\bibitem[\protect\citeauthoryear{Nevo}{Nevo}{2000}]{nevo2000}
\textsc{Nevo, A.} (2000): \enquote{A Practitioner's Guide to Estimation of
  Random-Coefficients Logit Models of Demand,} \emph{Journal of Economics \&
  Management Strategy}, 9, 513--548.

\bibitem[\protect\citeauthoryear{Prelec}{Prelec}{1998}]{prelec1998}
\textsc{Prelec, D.} (1998): \enquote{The Probability Weighting Function,}
  \emph{Econometrica}, 66, 497--527.

\bibitem[\protect\citeauthoryear{Strzalecki}{Strzalecki}{2025}]{strzalecki2025stochastic}
\textsc{Strzalecki, T.} (2025): \enquote{Stochastic choice theory,}
  \emph{Cambridge Books}.

\bibitem[\protect\citeauthoryear{Thurstone}{Thurstone}{1927}]{thurstone1927psychophysical}
\textsc{Thurstone, L.~L.} (1927): \enquote{Psychophysical analysis,} \emph{The
  American journal of psychology}, 38, 368--389.

\bibitem[\protect\citeauthoryear{Tversky}{Tversky}{1969}]{tversky1969intransitivity}
\textsc{Tversky, A.} (1969): \enquote{Intransitivity of preferences.}
  \emph{Psychological review}, 76, 31.

\bibitem[\protect\citeauthoryear{Tversky}{Tversky}{1972}]{tversky1972elimination}
---\hspace{-.1pt}---\hspace{-.1pt}--- (1972): \enquote{Elimination by aspects:
  A theory of choice.} \emph{Psychological review}, 79, 281.

\bibitem[\protect\citeauthoryear{Wilcox}{Wilcox}{2008}]{wilcox2008stochastic}
\textsc{Wilcox, N.~T.} (2008): \enquote{Stochastic models for binary discrete
  choice under risk: A critical primer and econometric comparison,} in
  \emph{Risk aversion in experiments}, Emerald Group Publishing.

\bibitem[\protect\citeauthoryear{Wilcox}{Wilcox}{2011}]{WILCOX201189}
---\hspace{-.1pt}---\hspace{-.1pt}--- (2011): \enquote{‘Stochastically more
  risk averse:’ A contextual theory of stochastic discrete choice under
  risk,} \emph{Journal of Econometrics}, 162, 89--104, the Economics and
  Econometrics of Risk.

\end{thebibliography}

\clearpage

\appendix

\section{Assumptions 2b and 3}\label{sec:assmn2b3}

Here, we provide a more elaborate example where Assumption 2b holds while Assumption 3 is violated. \mnoss define the symmetry assumption for bivariate random variables with a density. So here we provide an example that has a (strictly positive) density.

Let $(\ep_{\mathrm{AB}}, \ep_{\mathrm{CD}})$ be a bivariate random vector defined on the support $[-1, 1]^2$ with the following joint probability density function:
\[
f(z_1, z_2) = \frac{1}{4} + \frac{1}{4} z_1 \left(z_2^2 - \frac{1}{3}\right).\]

\begin{proposition}\label{prop:assmn2band3}
    The marginal distributions of $\ep_{\mathrm{AB}}$ and  $\ep_{\mathrm{CD}}$ are uniform on  $[-1, 1]$, but Assumption 3 is violated: the property of central symmetry, $f(z_1, z_2) = f(-z_1, -z_2)$ does not hold. 
\end{proposition}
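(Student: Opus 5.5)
The plan is to verify the claims about $f$ directly, by elementary integration, in three steps.

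First, we check that $f$ is a valid density on $[-1,1]^2$. For nonnegativity, note that $|z_1|\le 1$ and $z_2^2-\tfrac13\in[-\tfrac13,\tfrac23]$. Hence $|z_1(z_2^2-\tfrac13)|\le \tfrac23<1$, so $f\ge \tfrac14(1-\tfrac23)=\tfrac1{12}>0$, and the density is strictly positive as advertised. Integration to one will follow from the marginal computation below.

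Second, we compute the marginals. For $\ep_{\mathrm{AB}}$, integrate over $z_2\in[-1,1]$. The constant term gives $\tfrac12$, and the correction term is $\tfrac14 z_1\int_{-1}^1 (z_2^2-\tfrac13)\,dz_2=\tfrac14 z_1(\tfrac23-\tfrac23)=0$. So the marginal density is $\tfrac12$ on $[-1,1]$, i.e.\ uniform. For $\ep_{\mathrm{CD}}$, integrate over $z_1$. The correction term vanishes because $\int_{-1}^1 z_1\,dz_1=0$ (odd integrand), leaving $\tfrac12$. Both marginals are therefore uniform on $[-1,1]$, and integrating once more shows that total mass is one. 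In particular $\ep_{\mathrm{AB}}\overset{d}{=}\ep_{\mathrm{CD}}$ with $k=1$, and each is symmetric about zero, so Assumption 2b holds.

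Third, we show that central symmetry fails. Since the marginals are symmetric about $0$, the only candidate center in Assumption 3 is $(\ep',\ep')=(0,0)$; any other center would force the marginal medians to differ from $0$. Compute $f(-z_1,-z_2)=\tfrac14-\tfrac14 z_1(z_2^2-\tfrac13)$. This differs from $f(z_1,z_2)$ whenever $z_1\neq0$ and $z_2^2\neq\tfrac13$, for instance at $(1,1)$, where $f=\tfrac14+\tfrac16$ while $f(-1,-1)=\tfrac14-\tfrac16$. Because $f$ is continuous, the inequality holds on a set of positive measure, so the distribution is not centrally symmetric.

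The only point needing care is justifying that $(0,0)$ is the sole admissible center. As noted above, this follows because a joint symmetry center must have coordinates equal to the marginal medians, which are both $0$. Everything else is routine.
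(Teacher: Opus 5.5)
Your proof is correct and follows the paper's argument: the same lower bound $\tfrac{1}{12}$ for positivity, the same vanishing of $\int_{-1}^{1}(z_2^2-\tfrac13)\,dz_2$ for the marginals, and the same comparison $f(1,1)=\tfrac{5}{12}\neq\tfrac{1}{12}=f(-1,-1)$. Your two additions are small rigor refinements the paper leaves implicit: that $(0,0)$ is the only admissible center, and that by continuity the asymmetry holds on a set of positive measure.
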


\begin{proof}[\textbf{Proof of Proposition~\ref{prop:assmn2band3}}] Observe first that  $f(z_1, z_2)$ is a strictly positive density function on $[-1, 1]^2$. Indeed, for 
  $z_1 \in [-1, 1]$ and $z_2^2 - \frac{1}{3} \in \left[-\frac{1}{3}, \frac{2}{3}\right]$ the minimum value of the term $z_1 \left(z_2^2 - \frac{1}{3}\right)$ occurs when $z_1 = -1$ and $z_2 = \pm 1$, yielding a value of $-\frac{2}{3}$. Substituting this into the density function provides its global minimum over $[-1,1]^2$: 
  \[
  \frac{1}{4} - \frac{1}{6} = \frac{1}{12} > 0.
  \]
  
Second, we show that the marginal distributions are standard uniform. To find the marginal density of $\ep_{\mathrm{AB}}$, we integrate the joint density over the support of $z_2$:
\begin{align*}f_{\ep_{\mathrm{AB}}}(z_1) &= \int_{-1}^{1} \left[ \frac{1}{4} + \frac{1}{4} z_1 \left(z_2^2 - \frac{1}{3}\right) \right] dz_2 \\
&= \frac{1}{2} + \frac{1}{4} z_1 \left( \left(\frac{1}{3} - \frac{1}{3}\right) - \left(-\frac{1}{3} + \frac{1}{3}\right) \right) 
= \frac{1}{2}.
\end{align*}
Thus, $\ep_{\mathrm{AB}} \sim U[-1, 1]$. That $\ep{_\mathrm{CD}} \sim U[-1, 1]$ is immediate by integrating the density.

Finally, note that 
\begin{align*}f(1, 1) = \frac{1}{4} + \frac{1}{4}(1)\left(1^2 - \frac{1}{3}\right) = \frac{5}{12} \neq 
\frac{1}{4} + \frac{1}{4}(-1)\left((-1)^2 - \frac{1}{3}\right) = f(-1, -1) 
\end{align*}
Thus, we have $f(z_1, z_2) \neq f(-z_1, -z_2)$ and the symmetry property in Assumption 3 is violated. 
\end{proof}

\clearpage

\section{Biasedness of the mean valuation test}\label{sec:meantestbias}

\mnoss show that, under their Assumption 2a, the mean valuation test is unbiased. Under the iAREU model, however, Assumption 2a imposes very strong restrictions on the utility function. The assumption does not hold, even in the iAREU case with errors that are additive and i.i.d. Note that the equality $\Gamma(m, \epsilon)=m+\epsilon$ is equivalent to $u(m)+\epsilon=u(m+\epsilon)$ as $\Gamma(m, \epsilon)=u^{-1}\big(u(m)+\epsilon\big)$. However, the equality $u(m)+\epsilon=u(m+\epsilon)$ cannot be satisfied unless $u$ is linear. In fact, even the weaker condition $\E[u(m)+\epsilon]=\E[u(m+\epsilon)]$ cannot generally be satisfied. For example, when $u$ is strictly concave and $\E[\epsilon]=0$, we have $\E[u(m)+\epsilon]=\E[u(m)]>\E[u(m+\epsilon)]$. 

Consistent with this observation, the mean valuation test is always biased when $u$ is strictly concave or strictly convex.  

In order for the mean test to be unbiased, it must be that 
\[\E[\Delta m]=\E\big[h(p\,u(y)+\ep_{\mathrm{CD}})-h(p\,u(y)+\ep_{\mathrm{AB}})\big]=0,\]
where $h=u^{-1}$, under the hypothesis of additive random utility.

By the mean value theorem, 
\[h(p\,u(y)+\ep_{\mathrm{CD}})-h(p\,u(y)+\ep_{\mathrm{AB}})=(\ep_{\mathrm{CD}}-\ep_{\mathrm{AB}})\cdot h'(p\,u(y)+\delta(\ep_{\mathrm{CD}}, \ep_{\mathrm{AB}})),\]
where $\delta(\ep_{\mathrm{CD}}, \ep_{\mathrm{AB}})\in [\min(\ep_{\mathrm{CD}}, \ep_{\mathrm{AB}}), \max(\ep_{\mathrm{CD}}, \ep_{\mathrm{AB}})]$. Unless $h$ is linear (meaning that $u$ is linear), $\delta(\ep_{\mathrm{CD}}, \ep_{\mathrm{AB}})$ is a non-constant random variable that depends on both $\ep_{\mathrm{AB}}$ and $\ep_{\mathrm{CD}}$. 

Hence, the mean test is theoretically valid (or unbiased) only if 
\[\E[(\ep_{\mathrm{CD}}-\ep_{\mathrm{AB}})\cdot h'(p\,u(y)+\delta(\ep_{\mathrm{CD}}, \ep_{\mathrm{AB}}))]=0.\]

Observe that $\E[XY]=\Cov(X, Y)+\E[X]\,\E[Y]$ for any random variables $X$ and $Y$. So for the paired valuation test to be unbiased, we require that 
\[\Cov\Big(\ep_{\mathrm{CD}}-\ep_{\mathrm{AB}}, h'\big(p\,u(y)+\delta(\ep_{\mathrm{CD}}, \ep_{\mathrm{AB}})\big)\Big)=0,\] as $\E[\ep_{\mathrm{CD}}-\ep_{\mathrm{AB}}]=0$ by our assumption that errors are mean zero.

Our first observation is that  $h'\big(p\,u(y)+\delta(\ep_{\mathrm{CD}}, \ep_{\mathrm{AB}})\big)$ and $\ep_{\mathrm{CD}}-\ep_{\mathrm{AB}}$ can be correlated, and therefore the paired valuation test biased, even when individual errors $\epsilon_{x,p}$ are i.i.d. and symmetric around zero. To this end, we present two simple examples, which we then generalize in Proposition~\ref{prop:pvaltest}.

\begin{example}
    Let $u(x)=\sqrt{x}$ and the errors $\epsilon(\ell)$ be i.i.d. Note that $\ep_{\mathrm{CD}}=\frac{\epsilon_D-\epsilon_C}{r}$ and $\ep_{\mathrm{AB}}=\epsilon_B-\epsilon_A$ are independent and symmetric random variables, and there is $k>0$ such that $\ep_{\mathrm{CD}}\overset{d}{=}k\,\ep_{\mathrm{AB}}$ (i.e., Assumption 2b is satisfied). Note that $h(t)=t^2$. Hence, 
\[h'\big(p\,u(y)+\delta(\ep_{\mathrm{CD}}, \ep_{\mathrm{AB}})\big)=2\,p\,u(y)+\ep_{\mathrm{CD}}+\ep_{\mathrm{AB}}\text{ and }\delta(\ep_{\mathrm{CD}}, \ep_{\mathrm{AB}})=\frac{\ep_{\mathrm{CD}}+\ep_{\mathrm{AB}}}{2}.\]
Consequently, 
\[\E[\Delta m]=\E[(\ep_{\mathrm{CD}}-\ep_{\mathrm{AB}})\,(2\,p\,u(y)+\ep_{\mathrm{CD}}+\ep_{\mathrm{AB}})]=\frac{1-r^2}{r^2}\text{Var}[\ep_{\mathrm{AB}}]>0.\]
\end{example}

\begin{example}
 Let $u(x)=x^2$, and let $\epsilon(\ell)$ are independently and uniformly distributed on $\{-1, 1\}$. Note that $\ep_{\mathrm{CD}}=\frac{\epsilon_D-\epsilon_C}{r}$ and $\ep_{\mathrm{AB}}=\epsilon_B-\epsilon_A$ are independent and symmetric random variables, and there is $k>0$ such that $\ep_{\mathrm{CD}}\overset{d}{=}k\,\ep_{\mathrm{AB}}$ (i.e., Assumption 2b is satisfied). Note that $h(t)=\sqrt{t}$. Let $p=r=0.2$, and $y=10$. Then,  
\[\E[\Delta m]=\E[\sqrt{p\,u(y)+\ep_{\mathrm{CD}}}-\sqrt{p\,u(y)+\ep_{\mathrm{AB}}}]\approx -0.07<0.\]    
\end{example}

Let us generalize the above two examples.

\begin{prps}\label{prop:pvaltest} Suppose that $\epsilon(\ell)$ are i.i.d. 
\begin{enumerate}
    \item If $u$ is strictly concave, then $\E[\Delta m]>0$. 
    \item If $u$ is strictly convex, then $\E[\Delta m]<0$.
\end{enumerate}
\end{prps}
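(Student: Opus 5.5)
We will work directly with the representation of the valuations under the iAREU model and reduce the comparison of $\E[m_{\mathrm{AB}}]$ and $\E[m_{\mathrm{CD}}]$ to a comparison of two expectations of the convex (or concave) function $h=u^{-1}$ evaluated at random points with a common center. Concretely, writing $\epsilon_A,\epsilon_B,\epsilon_C,\epsilon_D$ for the i.i.d.\ shocks attached to the four lotteries, the valuations satisfy $u(m_{\mathrm{AB}})=p\,u(y)+\ep_{\mathrm{AB}}$ and $u(m_{\mathrm{CD}})=p\,u(y)+\ep_{\mathrm{CD}}$, with $\ep_{\mathrm{AB}}=\epsilon_B-\epsilon_A$ and $\ep_{\mathrm{CD}}=(\epsilon_D-\epsilon_C)/r$, exactly as in the two examples preceding the statement. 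Hence
\[
\E[\Delta m]=\E\big[h(p\,u(y)+\ep_{\mathrm{CD}})\big]-\E\big[h(p\,u(y)+\ep_{\mathrm{AB}})\big].
\]
Because the shocks are i.i.d., $\epsilon_B-\epsilon_A$ and $\epsilon_D-\epsilon_C$ have the same distribution; call it the law of $W$. Then $\ep_{\mathrm{AB}}\overset{d}{=}W$ and $\ep_{\mathrm{CD}}\overset{d}{=}W/r$, and $W$ is symmetric about zero with mean zero.

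The key step is a mean-preserving spread argument. Since $r\in(0,1)$, we can write $W/r=W+\frac{1-r}{r}W$. We want the stronger fact that $W/r$ is a mean-preserving spread of $W$, so that $\E[\phi(W/r)]\ge \E[\phi(W)]$ for every convex $\phi$. A clean way to see this is to note that $W=\E[W/r\cdot \lambda \mid \cdot]$-type constructions are awkward. Instead, use the scaling directly. For convex $\phi$ and fixed $w$, the function $t\mapsto \tfrac12[\phi(c+tw)+\phi(c-tw)]$ is nondecreasing in $t\ge0$, by convexity and symmetry of the chord. Since $W$ is symmetric, we have
\[
\E[\phi(c+tW)]=\E\big[\tfrac12(\phi(c+tW)+\phi(c-tW))\big],
\]
which is therefore nondecreasing in $t\ge 0$. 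The inequality is strict for strictly convex $\phi$ whenever $W$ is nondegenerate. Apply this with $c=p\,u(y)$, $t=1$ versus $t=1/r>1$, and $\phi=h$. If $u$ is strictly concave (and increasing), then $h=u^{-1}$ is strictly convex, giving $\E[\Delta m]>0$. If $u$ is strictly convex, then $h$ is strictly concave and the inequality reverses, giving $\E[\Delta m]<0$. This matches the signs in the two examples.

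The main obstacle is domain and degeneracy bookkeeping rather than the inequality itself. First, $h$ must be defined at $c\pm W/r$ with probability one, so the support of the shocks must keep $p\,u(y)+\ep_{\mathrm{CD}}$ in the range of $u$; we will assume this, as the paper implicitly does (compare the positivity constraints in the proof of Proposition~\ref{prop:anythingoes}). Second, expectations must be finite. Third, strictness requires $\Pr(W\neq 0)>0$, which holds for any nondegenerate i.i.d.\ shocks, and also requires strict convexity to yield strict chord monotonicity. The latter holds because $\tfrac12[\phi(c+tw)+\phi(c-tw)]$ is strictly increasing in $t$ for $w\neq 0$ when $\phi$ is strictly convex. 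We will state these regularity conditions explicitly and then conclude.
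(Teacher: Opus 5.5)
Your argument is correct, but it takes a different route from the paper. The paper shows that $\ep_{\mathrm{CD}}$ is a mean-preserving spread of $\ep_{\mathrm{AB}}$ by building an explicit coupling. Conditional on $\ep_{\mathrm{AB}}=x$, it adds a noise $Z$ equal to $(\frac{1}{r}-1)x$ with probability $\frac{1+r}{2}$ and $-(\frac{1}{r}+1)x$ with probability $\frac{1-r}{2}$. It checks that $\E[Z\mid \ep_{\mathrm{AB}}]=0$, and it uses the symmetry of $\epsilon_B-\epsilon_A$ to verify that $\ep_{\mathrm{AB}}+Z\overset{d}{=}\ep_{\mathrm{AB}}/r$; this is written out only for discrete distributions, ``for simplicity.'' It then invokes the standard comparison of expectations of a strictly convex or concave $h=u^{-1}$ under a mean-preserving spread.

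You bypass the coupling. Because $W$ is symmetric, $\E[\phi(c+tW)]$ equals the expectation of the even convex function $t\mapsto\frac{1}{2}[\phi(c+tW)+\phi(c-tW)]$. That function is nondecreasing on $t\ge 0$, and strictly increasing when $\phi$ is strictly convex and $W\neq 0$. Your route buys three things:
\begin{itemize}
\item It is more elementary.
\item It covers non-discrete error distributions with no extra work.
\item It makes explicit where strictness comes from (nondegenerate shocks, hence $\Pr(W\neq 0)>0$) and which domain and integrability conditions are needed. The paper leaves all of these implicit.
\end{itemize}
The paper's route, in exchange, produces a concrete spread that can be reused beyond this one expectation comparison. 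Its conditional law for $Z$ also does not actually need discreteness.

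Your inequality holds for every convex $\phi$, so you have in fact proved the convex-order (mean-preserving spread) relation you said you wanted. You can therefore delete the garbled sentence about ``$W=\E[W/r\cdot \lambda \mid \cdot]$-type constructions.''
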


A similarly negative result is included in the online appendix to \mnoss and is mentioned in the main text of the paper. 

\begin{proof}[\textbf{Proof of Proposition~\ref{prop:pvaltest}}] Let $X=p\,u(y)+\ep_{\mathrm{CD}}$ and $Y=p\,u(y)+\ep_{\mathrm{AB}}$. Note that $\ep_{\mathrm{CD}}\overset{d}{=}\frac{1}{r}\,\ep_{\mathrm{AB}}$ and $\ep_{\mathrm{AB}}=\epsilon_B-\epsilon_A$ is symmetric about zero. Therefore, $X$ is a mean-preserving spread of $Y$ (which will be shown below). Hence, if $h$ is strictly concave (i.e., $u$ is strictly convex), we have 
\[\E[\Delta m]=\E[h(X)]-\E[h(Y)]<0.\]
Similarly, if $h$ is strictly convex (i.e., $u$ is strictly concave), we have 
\[\E[\Delta m]=\E[h(X)]-\E[h(Y)]>0.\]
Finally, let us show that $X$ is a mean-preserving spread of $Y$, equivalently, $\ep_{\mathrm{CD}}$ is a mean-preserving spread of $\ep_{\mathrm{AB}}$, i.e., there is $Z$ such that $\ep_{\mathrm{CD}}\overset{d}=\ep_{\mathrm{AB}}+Z$ and $\E[Z \mid \ep_{\mathrm{AB}}=x]=0$ for all $x.$ For simplicity, we prove for the discrete case. Let us construct $Z$ as follows: for each $x$, $Z \mid \ep_{\mathrm{AB}}=x$ is equal to $(\frac{1}{r}-1)x$ with probability $\frac{1+r}{2}$ and $(\frac{1}{r}+1)(-x)$ with probability $\frac{1-r}{2}$. Note that 
\[\E[Z \mid \ep_{\mathrm{AB}}=x]=(\frac{1}{r}-1)x\,\frac{1+r}{2}+(\frac{1}{r}+1)(-x)\frac{1-r}{2}=\frac{x(1-r^2-(1-r^2))}{2r}=0.\]

Suppose that $\ep_{\mathrm{AB}}+Z=x/r$ and $\ep_{\mathrm{AB}}=y$. There are two possibilities: either $\ep_{\mathrm{AB}}-y(\frac{1}{r}+1)=x/r$ or $\ep_{\mathrm{AB}}+(\frac{1}{r}-1)y=x/r$.
In the first case, $y-y(\frac{1}{r}+1)=x/r$, so $y=-x$. In the second case, 
$y+(\frac{1}{r}-1)y=x/r$, so $y=x$. Hence, 
\[Pr(\ep_{\mathrm{AB}}+Z=\frac{1}{r}\,x)=\Pr(\ep_{\mathrm{AB}}=x, Z=(\frac{1}{r}-1)x)+\Pr(\ep_{\mathrm{AB}}=-x, Z=(\frac{1}{r}+1)x)\]
\[=\Pr(\ep_{\mathrm{AB}}=x)(\frac{1+r}{2})+\Pr(\ep_{\mathrm{AB}}=-x)(\frac{1-r}{2})=\Pr(\ep_{\mathrm{AB}}=x),\]
the last equality holds because $\ep_{\mathrm{AB}}$ is symmetric about zero. Hence, $\ep_{\mathrm{CD}}\overset{d}=\ep_{\mathrm{AB}}+Z$.
\end{proof}

\section{Biasedness of the sign valuation test}\label{sec:pairedval}

In this appendix, we provide further discussion on the sign valuation test.  We show that the sign test can be biased under natural assumptions regarding the model of stochastic choice. We present results that complement the findings in Proposition~\ref{prop:anythingoes}, and we complete the table that was presented in the introduction.

First, we complement the ``anything goes'' message of Proposition~\ref{prop:anythingoes}. The proof of the second statement in the proposition assumes correlated errors. Here we show that, in the model of Equation~\ref{eq:perception}, we still obtain a result in the same spirit as Proposition~\ref{prop:anythingoes}.

For simplicity, we assume that $\epsilon_1=0$ with probability 1. We construct error distributions for $\epsilon_s$, $s\in \{p,r,rp\}$, and for $\epsilon_z$, $z\in \Re_+$, for which the sign test can be biased either in the direction of the common ratio effect or the reverse common ratio effect. 

\begin{proposition}\label{prop:signedtestprosp}
  Let $\theta\in (0,1)$ and $p+r\neq 1$.
  There are independent mean-zero distributions $F$, $F'$, $G$, and $G'$, so that if $\epsilon_s \sim F$ and $\epsilon_z\sim G$, then
 \[\Pr(m_{\mathrm{CD}}>m_{\mathrm{AB}})>\theta; \] and if 
  $\epsilon_s \sim F'$ and $\epsilon_z\sim G'$, then
  \[\Pr(m_{\mathrm{AB}}>m_{\mathrm{CD}})>\theta. \]
\end{proposition}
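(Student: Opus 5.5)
We will work in the model of Equation~\eqref{eq:perception}, specialized so that the valuations can be computed in closed form. We take $f(x)=u(x)$ and $\epsilon_{p,x}=0$, and use the standing simplification $\epsilon_1=0$. The prize noises $\epsilon_z\sim G$ (resp.\ $G'$) will be degenerate at $0$, which is trivially independent and mean-zero; so all of the action comes from the probability noises $\epsilon_s$, $s\in\{p,r,rp\}$. These are drawn i.i.d.\ from $F$ (resp.\ $F'$).

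With these choices the indifferences defining the valuations become $u(m_{\mathrm{AB}})=(p+\epsilon_p)u(y)$ and $(r+\epsilon_r)u(m_{\mathrm{CD}})=(rp+\epsilon_{rp})u(y)$. Since $u$ is strictly increasing, we will have $m_{\mathrm{CD}}>m_{\mathrm{AB}}$ if and only if
\[
\frac{rp+\epsilon_{rp}}{r+\epsilon_r}>p+\epsilon_p ,
\]
at least on the event where $r+\epsilon_r>0$ and both sides are in the range of $u/u(y)$. The key computation is the case where all three errors equal a common value $\delta$:
\[
\frac{rp+\delta}{r+\delta}-(p+\delta)=\frac{\delta\,(1-p-r-\delta)}{r+\delta}.
\]
For $|\delta|$ small this has the sign of $\delta(1-p-r)$. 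This is exactly where the hypothesis $p+r\neq 1$ enters.

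We construct skewed two-point mean-zero distributions. Fix $\eta\in(0,1)$ with $(1-\eta)^3>\theta$. Let $F$ put mass $1-\eta$ on $\delta$ and mass $\eta$ on $-\delta(1-\eta)/\eta$, so that $\E[\epsilon_s]=0$. Let $F'$ be the mirror image, with $\delta$ replaced by $-\delta$. Choose the sign of $\delta$ so that $\delta(1-p-r)>0$ under $F$, and take $|\delta|$ small enough that $r\pm\delta>0$, $p\pm\delta\in(0,1)$ and $|\delta|<|1-p-r|$. By independence, the event $\{\epsilon_p=\epsilon_r=\epsilon_{rp}=\delta\}$ has probability $(1-\eta)^3>\theta$. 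On this event the displayed difference is strictly positive, so $\Pr(m_{\mathrm{CD}}>m_{\mathrm{AB}})>\theta$. Under $F'$ the common value is $-\delta$, the sign flips, and $\Pr(m_{\mathrm{AB}}>m_{\mathrm{CD}})>\theta$.

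The main obstacle is the rare large atom $-\delta(1-\eta)/\eta$. It may push $r+\epsilon_r$ or $p+\epsilon_p$ outside the range where the valuation is well defined. The lower bound on the probability only uses the good event, so this does not affect the inequality itself. However, we must make sure the valuations are still defined off the good event, either by a convention or by truncation. We will try first to make $\eta$ close to $1/2$ from below and $\theta$ satisfied by taking $\eta$ small. If boundedness is needed, an alternative is to replace the atom by a three-point mean-zero distribution: mass $1-\eta$ at $\delta$, with the compensating mass spread so that $p+\epsilon$ and $r+\epsilon$ stay in $(0,1]$. This is feasible because the needed compensation $\delta(1-\eta)/\eta$ is small once $\delta$ is small relative to $\eta\min\{p,r,1-p\}$. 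With that choice, all valuations are defined almost surely and the argument goes through verbatim.
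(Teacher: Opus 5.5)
Your proposal is correct and is essentially the paper's proof: degenerate $\epsilon_z$, i.i.d.\ skewed two-point mean-zero $\epsilon_s$ for $s\in\{p,r,rp\}$, and a ``good'' event of probability $(1-\eta)^3>\theta$. Your identity $\frac{rp+\delta}{r+\delta}-(p+\delta)=\frac{\delta(1-p-r-\delta)}{r+\delta}$ is the paper's chain $pr+a>(p+a)(r+a)$, and choosing the sign of $\delta$ handles $p+r<1$ and $p+r>1$ at once, whereas the paper does the first case and calls the second analogous. The three-point fallback in your last paragraph is unnecessary, and the remark about $\eta$ near $1/2$ is garbled: shrinking $|\delta|$ until both atoms $\delta$ and $-\delta(1-\eta)/\eta$ have absolute value below $rp$ already keeps $p+\epsilon_p$, $r+\epsilon_r$ and $rp+\epsilon_{rp}$ positive almost surely, which is the role of the paper's smallness condition on $a$.
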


\begin{remark}
    The proof of Proposition~\ref{prop:signedtestprosp} assumes errors $\epsilon_z$ that are degenerate, but it is easy to extend the construction to non-degenerate errors on $u$.
\end{remark}

\begin{proof}[\textbf{Proof of Proposition~\ref{prop:signedtestprosp}}] Suppose that $p+r<1$. Let $\gamma\in (0,1)$ be small enough that $(1-\gamma)^3>\theta$. Suppose that $\epsilon_z=0$ with probability one and let $\epsilon_s$ equal $a$ with probability $1-\gamma$ and $-a\frac{1-\gamma}{\gamma}$ with probability $\gamma$, for $s=p,r,rp$. Then $\epsilon_s$ has mean zero. Now choose $a>0$ small enough that
  \[-a\frac{1-\gamma}{\gamma}<r \quad\text{ and }\quad a<1-(p+r). \] Then $r+\epsilon_s>0$ with probability one, and we have the following sequence of implications:
  \begin{align*}
pr + a(1-(p+r)) & > pr+a^2 \\
\then pr + a & > (p+a)(r+a) \\
\then u(y)\frac{pr+a}{r+a} & >u(y) (p+a) \\
\then m_{\mathrm{CD}}=h[u(y)\frac{pr+a}{r+a}] & >m_{\mathrm{AB}}=h[u(y) (p+a)], \\
  \end{align*} where $h=u^{-1}$; we have used that $u(y)>0$ and that $h$ is strictly increasing. This means that, when $\epsilon_s=a$ for all $s=r,pr,p$, then we have that $m_{\mathrm{CD}}>m_{\mathrm{AB}}$. But this occurs with probability $(1-\gamma)^3>\theta$.

To obtain the opposite inequality, we may choose $\gamma$ and $a\in (0,1-(p+r))$ as before and small enough that $r-a>0$. Define $\epsilon_s=-a$ with probability $1-\gamma$ and $\epsilon_s=a\frac{1-\gamma}{\gamma}$ with complementary probability. Then it is easy to see that $m_{\mathrm{AB}}>m_{\mathrm{CD}}$ when $\epsilon_s=-a$, which occurs with probability $(1-\gamma)^3>\theta$.

An analogous construction works for the case when $p+r>1$.

\end{proof}

\subsection{Sign valuation test under Equation (1)}\label{sec:signtest} 

To prove the unbiasedness of the paired valuation sign-test, \mnoss impose assumptions that all imply $\E[\ep_{\mathrm{AB}}] = \E[\ep_{\mathrm{CD}}]$. Here, we explain why this assumption is violated under the random utility model given by Equation (1). Consequently, it may be difficult to justify the model with Assumption 2 (or the assumptions behind Proposition 2 in their paper) beyond iAREU.

For expositional simplicity, let us consider the following special case of Equation (1): the random utility of a simple lottery $(x,p)$ is $\tilde v(p)\tilde u(x)$, where $\tilde v(p) = p+\epsilon_p$ and $\tilde u(x)= u(x)+\epsilon_x$ are such that $\epsilon_p$ and $\epsilon_x$ have mean zero for all $p$ and $x$. We do assume that $\tilde v(p)>0$ for all $p$; otherwise, we might incur a violation of monotonicity. We consider both cases where (i) $\epsilon_1=0$, and (ii) for $p<1$, $\epsilon_1$ and $\epsilon_p$ have identical distributions. We shall also single out the case when $\tilde v(1)=v(1)=1$ occurs with probability 1 (i.e., $\epsilon_1=0$), reflecting a common assumption about probability weighting functions. 

Before turning to the details, note that the utility of a lottery $(x,p)$ is now
\begin{equation}
\tilde v(p)\tilde u(x)=(p+\epsilon_p)[u(x)+\epsilon_x]= p\,u(x) + \underbrace{ \epsilon_p\,u(x)+\epsilon_x\,p+\epsilon_p\,\epsilon_x}_{\text{``additive'' error}},
\end{equation} 
which means that there is some important structure hidden behind the assumption of an additive error. Any assumption on such errors masks an underlying assumption on the model of random utility that can be hard to evaluate.

The valuation tasks give us $m_{\mathrm{AB}}$ and $m_{\mathrm{CD}}$ such that:
\begin{align*}
   (1+\epsilon_1)\,(u(m_{\mathrm{AB}})+\epsilon_{m_{\mathrm{AB}}}) & =(p+\epsilon_p)\,(u(y)+\epsilon^1_{y}) \\
   (r+\epsilon_r)\,(u(m_{\mathrm{CD}})+\epsilon_{m_{\mathrm{CD}}}) & =(pr+\epsilon_{pr})\,(u(y)+\epsilon^2_{y}).
\end{align*}
Hence, 
\[m_{\mathrm{AB}}=u^{-1}\big(p\,u(y)+\ep_{\mathrm{AB}}\big)\text{ and }m_{\mathrm{CD}}=u^{-1}\big(p\,u(y)+\ep_{\mathrm{CD}}\big),\]
where
\[\ep_{\mathrm{AB}}=p\,\epsilon^1_y-\epsilon_{m_{\mathrm{AB}}}+u(y)\,\zeta_{\mathrm{AB}}+\epsilon^1_{y}\,\zeta_{\mathrm{AB}}\text{ and }\zeta_{\mathrm{AB}}=\frac{\epsilon_{p}-p\,\epsilon_1}{1+\epsilon_1}.\]
and
\[\ep_{\mathrm{CD}}=p\,\epsilon^2_y-\epsilon_{m_{\mathrm{CD}}}+u(y)\,\zeta_{\mathrm{CD}}+\epsilon^2_y\,\zeta_{\mathrm{CD}}\text{ and }\zeta_{\mathrm{CD}}=\frac{\epsilon_{pr}-p\,\epsilon_r}{r+\epsilon_r}.\]
Again, $\ep_{\mathrm{AB}}$ and $\ep_{\mathrm{CD}}$ are the residual terms that are fundamentally different from errors $\epsilon_p$ and $\epsilon_x$. To provide a simple intuition for why $\E[\ep_{\mathrm{AB}}]\neq \E[\ep_{\mathrm{CD}}]$, notice that 
\[\E[\ep_{\mathrm{AB}}]=\E[p\,\epsilon^1_y-\epsilon_{m_{\mathrm{AB}}}+u(y)\,\zeta_{\mathrm{AB}}+\epsilon^1_{y}\,\zeta_{\mathrm{AB}}]=u(y)\,\E[\zeta_{\mathrm{AB}}]+\text{Cov}(\epsilon^1_{y}, \zeta_{\mathrm{AB}}).\]
and 
\[\E[\ep_{\mathrm{CD}}]=\E[p\,\epsilon^2_y-\epsilon_{m_{\mathrm{CD}}}+u(y)\,\zeta_{\mathrm{CD}}+\epsilon^2_y\,\zeta_{\mathrm{CD}}]=u(y)\,\E[\zeta_{\mathrm{CD}}]+\text{Cov}(\epsilon^2_y, \zeta_{\mathrm{CD}}).\]

It turns out that $\E[\zeta_{\mathrm{CD}}]\neq 0$ unless $\epsilon_r=0$, i.e., we are in the case of random expected utility of Section~\ref{sec:REU}. Hence, the preference error $\epsilon_y$ and ``perceptual,'' or probability weighting, errors $\epsilon_r$ have to be correlated in a particular fashion so that
\[\Cov(\epsilon^2_y, \zeta_{\mathrm{CD}})=-u(y)\,\E[\zeta_{\mathrm{CD}}]\]
is satisfied. Hence, unless such knife-edge equality is satisfied, we cannot have $\E[\ep_{\mathrm{AB}}]=\E[\ep_{\mathrm{CD}}]$. Let us consider the three specific cases in the following result. 

\begin{proposition}\label{prop:epABepCD} Assume Equation (1).
\begin{enumerate}
\item  If errors are independent and either $\epsilon_1=0$ for sure, or $\epsilon_1 \overset{d}{=} \epsilon_r$, then  $\E[\ep_{\mathrm{AB}}]<\E[\ep_{\mathrm{CD}}]$.
\item 
If $\epsilon_p=\epsilon_r$ (``perfect correlation''),  $\epsilon_p$ and $\epsilon_x$ are independent, and  $\epsilon_1=0$ or  $\epsilon_1 \overset{d}{=} \epsilon_r$, then $\E[\ep_{\mathrm{AB}}]>\E[\ep_{\mathrm{CD}}]$.
\item 
If $\epsilon_p=\epsilon_r=\lambda\,\epsilon_x$ (``perfect correlation''), then:
\begin{enumerate}
\item 
When $\epsilon_1=0$, and $u(y)>\frac{r}{\lambda}$, then $\E[\ep_{\mathrm{AB}}]>0>\E[\ep_{\mathrm{CD}}]$.
\item 
When $\epsilon_1=\epsilon_r$ and  
$\frac{1}{\lambda}>u(y)>\frac{r}{\lambda}$, then $\E[\ep_{\mathrm{AB}}]>0>\E[\ep_{\mathrm{CD}}]$.
\end{enumerate}
\end{enumerate}
In all three cases, $\E[\ep_{\mathrm{AB}}]\neq \E[\ep_{\mathrm{CD}}]$. Moreover, if all errors are symmetric around zero, then
\[\rho(A, B)\ge \frac{1}{2}\text{ and }\rho(C, D)\ge \frac{1}{2}.\]
\end{proposition}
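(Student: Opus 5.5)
The plan is to work directly from the decomposition displayed just before the statement:
\[
\E[\ep_{\mathrm{AB}}]=u(y)\,\E[\zeta_{\mathrm{AB}}]+\Cov(\epsilon^1_y,\zeta_{\mathrm{AB}}),\qquad \E[\ep_{\mathrm{CD}}]=u(y)\,\E[\zeta_{\mathrm{CD}}]+\Cov(\epsilon^2_y,\zeta_{\mathrm{CD}}).
\]
I take the terms $\epsilon_{m}$ to have mean zero, and I use throughout the maintained positivity $c+\epsilon_c>0$ for $c\in\{r,1\}$. Everything then reduces to computing $\E[\zeta]$, and, in part 3, the covariance terms, by Jensen's inequality applied to $t\mapsto 1/t$.

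The basic tool is the following. Let $X$ have mean zero, and set
\[
g(t)=\E\!\left[\frac{1}{1+tX}\right]\qquad (t>0).
\]
Each map $t\mapsto 1/(1+tX)$ is convex in $t$, so $g$ is convex. Moreover $g(0)=1$ and $g'(0)=-\E[X]=0$. Hence $g$ is nondecreasing on $t>0$, and strictly increasing (with $g>1$) when $X$ is nondegenerate. Writing $c\,\E[1/(c+\epsilon)]=g(1/c)$, two consequences follow. First, $c\,\E[1/(c+\epsilon)]>1$. Second, this quantity is larger at $c=r$ than at $c=1$ when $\epsilon_r\overset{d}{=}\epsilon_1$ and $r<1$. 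I will use the identity
\[
\E\!\left[\frac{\epsilon}{c+\epsilon}\right]=1-c\,\E\!\left[\frac{1}{c+\epsilon}\right]
\]
repeatedly.

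\emph{Case 1 (independence).} The covariance terms vanish. Since $\E[\epsilon_{pr}]=0$,
\[
\E[\zeta_{\mathrm{CD}}]=-p\,\E\!\left[\frac{\epsilon_r}{r+\epsilon_r}\right]=p\big(g(1/r)-1\big)>0.
\]
If $\epsilon_1=0$, then $\zeta_{\mathrm{AB}}=\epsilon_p$ has mean zero, so $\E[\ep_{\mathrm{AB}}]=0<\E[\ep_{\mathrm{CD}}]$. If instead $\epsilon_1\overset{d}{=}\epsilon_r$, then $\E[\zeta_{\mathrm{AB}}]=p\big(g(1)-1\big)$, and monotonicity of $g$ gives $g(1)<g(1/r)$, which again yields the strict inequality.

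\emph{Case 2 ($\epsilon_p=\epsilon_r=\epsilon_{pr}=\epsilon$).} Here
\[
\zeta_{\mathrm{CD}}=\frac{(1-p)\,\epsilon}{r+\epsilon},\qquad \E[\zeta_{\mathrm{CD}}]=(1-p)\big(1-g(1/r)\big)<0,
\]
and the covariances vanish by independence of $\epsilon$ and $\epsilon_x$. If $\epsilon_1=0$, then $\E[\zeta_{\mathrm{AB}}]=0$. If $\epsilon_1=\epsilon$, then $\E[\zeta_{\mathrm{AB}}]=(1-p)\big(1-g(1)\big)$, which exceeds $(1-p)\big(1-g(1/r)\big)$. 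Either way $\E[\ep_{\mathrm{AB}}]>\E[\ep_{\mathrm{CD}}]$.

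\emph{Case 3 ($\epsilon_x=\epsilon/\lambda$).} This is the delicate case, because the covariance terms no longer vanish. I compute
\[
\E[\epsilon\,\zeta_{\mathrm{CD}}]=(1-p)\,\E\!\left[\frac{\epsilon^2}{r+\epsilon}\right].
\]
Writing $\epsilon^2/(r+\epsilon)=\epsilon-r+r^2/(r+\epsilon)$ gives
\[
\E[\epsilon\,\zeta_{\mathrm{CD}}]=(1-p)\,r\big(g(1/r)-1\big).
\]
Combining with $\E[\zeta_{\mathrm{CD}}]$ yields
\[
\E[\ep_{\mathrm{CD}}]=(1-p)\big(g(1/r)-1\big)\left(\frac{r}{\lambda}-u(y)\right),
\]
which is negative exactly when $u(y)>r/\lambda$.

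For $\ep_{\mathrm{AB}}$ there are two subcases. When $\epsilon_1=0$, we have $\zeta_{\mathrm{AB}}=\epsilon$, so $\E[\ep_{\mathrm{AB}}]=\mathrm{Var}(\epsilon)/\lambda>0$; this gives (3a). When $\epsilon_1=\epsilon$, the same algebra with $c=1$ gives
\[
\E[\ep_{\mathrm{AB}}]=(1-p)\big(g(1)-1\big)\left(\frac{1}{\lambda}-u(y)\right)>0\quad\text{when } u(y)<\frac{1}{\lambda},
\]
which gives (3b). I expect the main obstacle to be getting the signs of these covariance pieces right and keeping track of which probability indices share the same error.

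For the final claim, I will note that this model is the special case of Equation~\eqref{eq:perception} with $f=u$, $g(p)=p$ and $\epsilon_{p,x}=\epsilon_p\epsilon_x$. This falls under assumption (2) of Proposition~\ref{prop:stpairchoiceab} with $\alpha=1$, $\beta=\gamma=0$ in the independent cases, and under linear dependence in the perfectly correlated cases. Under symmetry around zero, Proposition~\ref{prop:stpairchoiceab} then delivers the unbiasedness of the strong test, i.e.\ $\rho(A,B)\ge\frac{1}{2}$ iff $\rho(C,D)\ge\frac{1}{2}$. I read the stated conclusion in this sense, as the strong test agreeing even though $\E[\ep_{\mathrm{AB}}]\neq\E[\ep_{\mathrm{CD}}]$.
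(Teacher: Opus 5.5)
Your computations for parts 1--3 are correct and follow the paper's proof. You use the same split $\E[\ep_{\mathrm{AB}}]=u(y)\,\E[\zeta_{\mathrm{AB}}]+\Cov(\epsilon^1_y,\zeta_{\mathrm{AB}})$, the same identity $\E[\epsilon/(c+\epsilon)]=1-c\,\E[1/(c+\epsilon)]$, and Jensen. Your convex function $g(t)=\E[1/(1+tX)]$ is a real improvement: it proves the comparison $\E[r/(r+\epsilon_r)]>\E[1/(1+\epsilon_1)]$, which the paper only asserts.

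The gap is the final claim. You are right to read it as the strong-test equivalence, since the literal ``and'' fails whenever $u(x)$ is far below $p\,u(y)$. But the reduction to Proposition~\ref{prop:stpairchoiceab} does not work.

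\emph{The correlated cases are not covered.} Assumption (2) of that proposition requires the non-interaction errors to be mutually independent, which fails when $\epsilon_p=\epsilon_r$ or $\epsilon_p=\lambda\epsilon_x$. Assumption (1) needs $\epsilon_p\epsilon_x$ to be symmetric, but once $\epsilon_p=\lambda\epsilon_x$ it equals $\lambda\epsilon_x^2$, which is one-signed. In Case 3(a), take $\epsilon_p=\epsilon_r=\epsilon_{pr}=\epsilon$ and every money error equal to $\epsilon/\lambda$. Then $U(A)-U(B)=u(x)-p\,u(y)+\epsilon\big(\tfrac{1-p}{\lambda}-u(y)\big)-\tfrac{\epsilon^2}{\lambda}$. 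In contrast, $U(C)-U(D)=r\big(u(x)-p\,u(y)\big)+\epsilon\big(\tfrac{r(1-p)}{\lambda}+u(x)-u(y)\big)$. The first carries a one-signed shift and the second is symmetric about $r(u(x)-p\,u(y))$, so the strong test can genuinely be biased there.

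\emph{The independent case fails too.} Invoking assumption (2) is exactly the paper's step. It relies on Fact 2 in the proof of Proposition~\ref{prop:stpairchoiceab}, and Fact 2 is false: for independent Rademacher $X,Y$, $X+Y+XY$ equals $3$ with probability $\tfrac14$ and $-1$ with probability $\tfrac34$. In the model, the product $(p+\epsilon_p)(u(y)+\epsilon_y)$ is right-skewed, so its median is not $p\,u(y)$.

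Here is a concrete counterexample within Case 1:
\begin{itemize}
\item $p=0.8$, $r=0.25$, $u(y)=10$, and $u(x)=7.95<p\,u(y)$;
\item $\epsilon_1=\epsilon_r=\epsilon_{pr}=\epsilon_x=0$;
\item $\epsilon_p=\pm0.1$ with equal probability, and $\epsilon_y\sim U[-2,2]$.
\end{itemize}
Every error is independent and symmetric about zero. Yet $\rho(A,B)=\tfrac12\big(\tfrac{0.833}{4}+\tfrac{3.357}{4}\big)\approx0.524$, while $\rho(C,D)=\Pr(\epsilon_y<-0.0625)\approx0.484$.

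So the proposition's last sentence cannot be derived this way, and it is false without further restrictions. What does survive is the case with no interaction term, e.g.\ $\epsilon_x\equiv0$. There $\xi(A,B)$ and $\xi(C,D)$ are linear in the (independent or identical) symmetric errors, and Fact 1 alone gives the equivalence.
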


\begin{remark}
In all the cases covered by Proposition~\ref{prop:epABepCD}, imposing  $\E[\ep_{\mathrm{AB}}]=\E[\ep_{\mathrm{CD}}]=0$ reduces the model to the random expected utility theory we analyzed in Section~\ref{sec:REU}.
\end{remark}

\begin{proof}[\textbf{Proof of Proposition~\ref{prop:epABepCD}}] To prove the first part, let us assume that all errors are independent. Then $\text{Cov}(\epsilon^1_y, \zeta_{\mathrm{AB}})=\text{Cov}(\epsilon^2_y, \zeta_{\mathrm{CD}})=0$. Hence, we have 
\[\E[\ep_{\mathrm{CD}}]=u(y)\,\E[\zeta_{\mathrm{CD}}]=u(y)\E[\frac{\epsilon_{pr}-p\,\epsilon_r}{r+\epsilon_r}]\]
\[=u(y)\, (\E[\frac{\epsilon_{pr}+pr}{r+\epsilon_r}]-p)=p\,u(y)\, (\E[\frac{r}{r+\epsilon_r}]-1),\]
where we have used that $\E[\frac{\epsilon_{pr}}{r+\epsilon_r}]=0$ by independence and mean zero. Then, since $z\mapsto \frac{r}{r+z}$ is a strictly convex function as long as $r+z>0$, we have $\E[\frac{r}{r+\epsilon_r}]>1$ by Jensen's inequality. Thus, $\E[\ep_{\mathrm{CD}}]>0$. 

When $\epsilon_1=0$, we have $\E[\ep_{\mathrm{AB}}]=0$. When $\epsilon_1$ and $\epsilon_r$ have identical distributions, we have
\[\E[\ep_{\mathrm{AB}}]=p\,u(y)\, (\E[\frac{1}{1+\epsilon_1}]-1).\]
Note that $\E[\frac{r}{r+\epsilon_r}]>\E[\frac{1}{1+\epsilon_1}]>1$. Hence, in either case, we have $\E[\ep_{\mathrm{AB}}]<\E[\ep_{\mathrm{CD}}]$.

Note that $\E[\ep_{\mathrm{AB}}]=\E[\ep_{\mathrm{CD}}]=0$ implies that $\epsilon_r=0$, i.e., the stochastic choice follows the random expected utility of Gul and Pesendorfer (2006).

To prove the second part, let us now assume that $\epsilon_p=\epsilon_r$ and $\epsilon_p$ and $\epsilon_x$ are independent. In this case, we still have  $\text{Cov}(\epsilon^1_y, \zeta_{\mathrm{AB}})=\text{Cov}(\epsilon^2_y, \zeta_{\mathrm{CD}})=0$. Hence, 

\[\E[\ep_{\mathrm{CD}}]=u(y)\,\E[\zeta_{\mathrm{CD}}]=u(y)\E[\frac{(1-p)\epsilon_r}{r+\epsilon_r}]\]
\[=(1-p)\,u(y)\, \E[\frac{\epsilon_r}{r+\epsilon_r}]=(1-p)\,u(y)\, (1-\E[\frac{r}{r+\epsilon_r}])<0.\]
Since $\frac{r}{r+x}$ is a strictly convex function as long as $r+x>0$, we have $\E[\frac{r}{r+\epsilon_r}]>1$ by Jensen's inequality. Hence, $\E[\ep_{\mathrm{CD}}]<0$.

When $\epsilon_1=0$, we have $\E[\ep_{\mathrm{AB}}]=0$. When $\epsilon_1$ and $\epsilon_r$ have identical distributions, we have $\E[\frac{r}{r+\epsilon_r}]>\E[\frac{1}{1+\epsilon_1}]>1$. Hence, in either case we have $\E[\ep_{\mathrm{AB}}]>\E[\ep_{\mathrm{CD}}]$. Again, in this case, $\E[\ep_{\mathrm{AB}}]=\E[\ep_{\mathrm{CD}}]=0$ implies that $\epsilon_r=0$, i.e., the stochastic choice follows the random expected utility.

To prove the third part, let us now assume that $\epsilon_p=\epsilon_r=\lambda\,\epsilon_x$. In this case, we have  
\[\text{Cov}(\epsilon^2_y, \zeta_{\mathrm{CD}})=\frac{(1-p)}{\lambda}\,\E[\frac{\epsilon^2_r}{r+\epsilon_r}]=\frac{(1-p)}{\lambda}\,\E[\frac{\epsilon^2_r-r^2+r^2}{r+\epsilon_r}]=\]
\[=\frac{(1-p)}{\lambda}\,(\E[\epsilon_r]-r+\E[\frac{r^2}{r+\epsilon_r}]=\frac{(1-p)\,r}{\lambda}\,(\E[\frac{r}{r+\epsilon_r}]-1)\]
and \[\E[\zeta_{\mathrm{CD}}]=(1-p)\E[\frac{\epsilon_r}{r+\epsilon_r}]=
(1-p)(1-\E[\frac{r}{r+\epsilon_r}]).\]
Hence, 
\[\E[\ep_{\mathrm{CD}}]=u(y)\,\E[\zeta_{\mathrm{CD}}]+\text{Cov}(\epsilon^2_y, \zeta_{\mathrm{CD}})=(1-p)(\E[\frac{r}{r+\epsilon_r}]-1)(\frac{r}{\lambda}-u(y)).\]
Since $(1-p)(\E[\frac{r}{r+\epsilon_r}]-1)>0$, $\E[\ep_{\mathrm{CD}}]\le 0$ iff $\frac{r}{\lambda}\le u(y)$.

When $\epsilon_1=0$, we have $\ep_{\mathrm{AB}}=\epsilon_p$. Then 
\[\E[\ep_{\mathrm{AB}}]=Cov(\epsilon^1_y, \epsilon_p)=\frac{Var(\epsilon_p)}{\lambda}>0.\]
Hence, if $u(y)>\frac{r}{\lambda}$, then $\E[\ep_{\mathrm{AB}}]>0>\E[\ep_{\mathrm{CD}}]. $

When $\epsilon_1=\epsilon_r$ , we have  
\[\text{Cov}(\epsilon^1_y, \zeta_{\mathrm{AB}})=\frac{(1-p)}{\lambda}\,\E[\frac{\epsilon^2_p}{1+\epsilon_p}]=\frac{(1-p)}{\lambda}\,\E[\frac{\epsilon^2_p-1+1}{1+\epsilon_p}]=\]
\[=\frac{(1-p)}{\lambda}\,(\E[\epsilon_p]-1+\E[\frac{1}{1+\epsilon_p}]=\frac{(1-p)}{\lambda}\,(\E[\frac{1}{1+\epsilon_p}]-1)\]
and \[\E[\zeta_{\mathrm{AB}}]=(1-p)\E[\frac{\epsilon_p}{1+\epsilon_p}]=
(1-p)(1-\E[\frac{1}{1+\epsilon_p}]).\]
Hence, 
\[\E[\ep_{\mathrm{AB}}]=u(y)\,\E[\zeta_{\mathrm{AB}}]+\text{Cov}(\epsilon^1_y, \zeta_{\mathrm{AB}})=(1-p)(\E[\frac{1}{1+\epsilon_p}]-1)(\frac{1}{\lambda}-u(y)).\]
Since $(1-p)(\E[\frac{1}{1+\epsilon_r}]-1)>0$, $\E[\ep_{\mathrm{AB}}]>0$ iff $\frac{1}{\lambda}>u(y)$. Hence, if $\frac{1}{\lambda}>u(y)>\frac{r}{\lambda}$, then $\E[\ep_{\mathrm{AB}}]>0>\E[\ep_{\mathrm{CD}}]. $

Again, in this case, $\E[\ep_{\mathrm{AB}}]=\E[\ep_{\mathrm{CD}}]=0$ implies that $\epsilon_p=0$, i.e., the stochastic choice follows the random expected utility.

Finally, suppose now that all errors are symmetric around zero. Note that the second assumption of Proposition~\ref{prop:stpairchoiceab} is satisfied. Hence, we obtain the desired inequalities by Proposition~\ref{prop:stpairchoiceab}. 
\end{proof}

\subsection{Prelec's probability weighting function}\label{sec:prelec}

Stochastic choice is often modeled  by means of a random coefficients specification. See, for example, the survey by \cite{nevo2000} of the methodology introduced by \cite{berry1995automobile}. The question is whether valuation tests are unbiased when we take a random deviation from expected utility theory that is modeled through a random coefficients specification. In particular, we start from a parametric version of prospect theory \citep{kahnemann1979prospect}, and take the parameters to signify random deviations from expected utility theory.

In prospect theory, the utility of a lottery that pays $x$ with probability $p$ and $0$ with probability $1-p$ is $v(p)u(x)$ when we normalize $u(0)=0$. Suppose, moreover, that utility takes the constant relative risk aversion form with a coefficient of relative risk aversion $\gamma$. The valuation tasks now give us:
\begin{align*}
    m_{\mathrm{AB}}^\gamma & = v(p) x^{\gamma} \\
    v(r)m_{\mathrm{CD}}^\gamma & = v(pr) x^{\gamma} \\
\end{align*}

First, we could let $v(p)=p$ and model random choice by allowing $\gamma$ to be random. The resulting model will be a special case of random expected utility, which we have already discussed.

Second, suppose instead that we take the function $v$ to be random. More specifically, suppose that it is random but ``centered'' on expected utility to capture random, unbiased deviations from expected utility. Let $\tilde v$ denote the random $v$. One draw of random $v$ determines $m_{\mathrm{AB}}$ and another draw determines $m_{\mathrm{CD}}$. 

Hence, $m_{\mathrm{AB}} < m_{\mathrm{CD}}$ if and only if 
\[
\tilde v(p)^{1/\gamma}<
\left(\frac{\tilde v(pr)}{\tilde v(r)}\right)^{1/\gamma}
\]
 
We should emphasize here that each elicitation involves a different realization of the probability weighting function. So, $\tilde v(p)$ is one draw, and $\tilde v(pr)$ and $\tilde v(r)$ are obtained from a second draw of $\tilde v$ evaluated at two different points. This is important because we are going to draw the parameters of the function $v$ at random. The first draw of $\tilde v$ corresponds to one realization of the random coefficients. The second draw corresponds to a second realization.

Now suppose that $v$ takes the form proposed by \cite{prelec1998}. So $v(p)=e^{-(-\ln (p))^\alpha}$ with $\alpha>0$. When $\alpha=1$, we have expected utility. When $\alpha<1$ ($\alpha>1$), we obtain the (reverse) common ratio effect. 

\begin{proposition}\label{prop:prelec}
Let $\tilde \alpha$ be drawn from a full-support probability distribution on $[1-\Delta,1+\Delta]$ that is symmetric around $1$. Suppose that $p\neq 1/e$.
\begin{enumerate}
    \item If $p<1/e$, then $\Pr(m_{\mathrm{CD}}<m_{\mathrm{AB}})<\frac{1}{2}$.
    \item If $p>1/e$, then $\Pr(m_{\mathrm{CD}}>m_{\mathrm{AB}})>\frac{1}{2}$.
\end{enumerate}
In contrast, if $\tilde \alpha$ is drawn from a probability distribution on $[1-\Delta,1+\Delta]$ with median $=1$, then the strong paired choice test is unbiased. 
\end{proposition}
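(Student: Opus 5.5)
The plan is to reduce everything to the Prelec exponent through logarithms. Write $a=-\ln p>0$ and $b=-\ln r>0$, so that $-\ln(pr)=a+b$ and $\ln \tilde v(q)=-(-\ln q)^{\tilde\alpha}$. Let $\alpha_1$ be the draw used in the $AB$ valuation and $\alpha_2$ the independent draw used in the $CD$ valuation. Taking logs in the display preceding the proposition gives
\[
m_{\mathrm{CD}}<m_{\mathrm{AB}}\iff \phi(\alpha_2)>\psi(\alpha_1),\qquad \psi(\alpha)=a^{\alpha},\quad \phi(\alpha)=(a+b)^{\alpha}-b^{\alpha}.
\]
Both functions equal $a$ at $\alpha=1$, which is the expected-utility benchmark. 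Two structural facts drive the argument.

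First, $\psi$ is strictly increasing in $\alpha$ if and only if $a>1$, that is, if and only if $p<1/e$. This is where the threshold $1/e$ enters.

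Second, superadditivity of $t\mapsto t^{\alpha}$ for $\alpha>1$, and subadditivity for $\alpha<1$, gives $\phi(\alpha)>\psi(\alpha)$ for $\alpha>1$ and $\phi(\alpha)<\psi(\alpha)$ for $\alpha<1$.

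For parts (1) and (2), I would write the probability as
\[
\Pr(\phi(\alpha_2)>\psi(\alpha_1))=\E\big[H(\phi(\alpha_2))\big],
\]
where $H$ is the cdf of $\psi(\alpha_1)$. Then I would symmetrize using the symmetry of $\tilde\alpha$ around $1$. Parametrize $\alpha_1=1+s$ and $\alpha_2=1+t$. Group the four sign combinations $(\pm s,\pm t)$, which carry equal weight by symmetry and independence, and compare the sum of the four indicators $\mathbf 1\{\phi(1\pm t)>\psi(1\pm s)\}$ with $2$.

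When $s=t$, the ordering of $\phi$ and $\psi$ together with the monotonicity of $\psi$ settles each combination directly. For general $(s,t)$, I would compare the curvature of $\log\psi$ and $\log\phi$ in $\alpha$ around $1$ to decide the sign of the aggregate deviation from $2$. The direction of this comparison is governed by whether $a>1$ or $a<1$.

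Full support ensures the inequality is strict on a set of positive probability. This yields $<\tfrac12$ when $p<1/e$ in part (1), and the reverse, appropriately complemented, when $p>1/e$ in part (2).

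This pointwise comparison over the four sign combinations is the main obstacle. It is not obvious that the deviation has a uniform sign over all $(s,t)\in(0,\Delta]^2$. If the direct comparison fails, I would first try to exploit the explicit form of $\psi$, namely $\log\psi(1+s)=(1+s)\ln a$, to reduce the problem to a one-dimensional inequality in $t$.

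For the strong paired choice statement, only medians matter. Under the random Prelec model with CRRA utility, $A$ is chosen over $B$ if and only if
\[
\gamma\ln(x/y)\ \ge\ -a^{\tilde\alpha},
\]
and $C$ is chosen over $D$ if and only if
\[
\gamma\ln(x/y)\ \ge\ -\phi(\tilde\alpha).
\]
If each event is monotone in $\tilde\alpha$, then, because the median of $\tilde\alpha$ is $1$, its probability is at least $\tfrac12$ exactly when the event holds at $\tilde\alpha=1$. At $\tilde\alpha=1$ both events reduce to the same expected-utility comparison $u(x)\ge p\,u(y)$, in the same way as in the proof of Proposition~\ref{prop:weakEU}. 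Hence $\rho(A,B)\ge\tfrac12$ if and only if $\rho(C,D)\ge\tfrac12$.

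Monotonicity of $a^{\alpha}$ is immediate. For $\phi$, the derivative is
\[
\phi'(\alpha)=(a+b)^{\alpha}\ln(a+b)-b^{\alpha}\ln b.
\]
This is positive when $b\ge1$, and also when $b<1\le a+b$. The delicate case is $a+b<1$, where I would check the sign directly on $[1-\Delta,1+\Delta]$ or restrict $\Delta$ if necessary.
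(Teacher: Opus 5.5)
\textbf{Parts (1)--(2).} Your reduction to the event $\phi(\alpha_2)>\psi(\alpha_1)$ is right, but the step that decides the sign is left open, and your heuristic for it is misdirected. The missing idea is the paper's key lemma. The function $f(\alpha)=\ln\phi(\alpha)=\ln\big((a+b)^\alpha-b^\alpha\big)$ is strictly concave: the numerator of $f''$ equals $-(a+b)^\alpha b^\alpha\big(\ln\frac{a+b}{b}\big)^2<0$. Meanwhile $\ln\psi(\alpha)=\alpha\ln a$ is linear, and $f(1)=\ln a$.

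With this lemma, your symmetrization closes at once if you pair $(s,t)$ with $(-s,-t)$ rather than working on the diagonal:
\begin{itemize}
\item The inequalities $f(1+t)>(1+s)\ln a$ and $f(1-t)>(1-s)\ln a$ cannot both hold, since adding them gives $f(1+t)+f(1-t)>2f(1)$.
\item Both fail whenever $s\ln a\in\big[f(1+t)-f(1),\,f(1)-f(1-t)\big]$. This interval has positive length for $t\neq0$. For $\ln a\neq0$ and small $t\neq0$ the resulting set of $(s,t)$ contains an open rectangle, so it has positive probability under full support.
\end{itemize}
Hence $\Pr(m_{\mathrm{AB}}>m_{\mathrm{CD}})<\frac12$ for \emph{every} $p\neq1/e$.

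The paper does the same thing geometrically. The boundary is the curve $\alpha_1=f(\alpha_2)/\ln a$ through $(1,1)$, concave for $p<1/e$ and convex for $p>1/e$. The paper compares it with its tangent line, which cuts the centrally symmetric square into halves of equal probability. Dividing by $\ln a<0$ flips both the curvature and the inequality.

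Consequently the direction does \emph{not} depend on whether $a>1$ or $a<1$. Parts (1) and (2) assert the same inequality, $\Pr(m_{\mathrm{CD}}>m_{\mathrm{AB}})>\frac12$ up to ties. The condition $p\neq1/e$ is needed only for strictness: at $a=1$ the event depends on $\alpha_2$ alone and reduces to $\alpha_2>1$. Your diagonal case carries no sign information either, since for $a>1$ and $s=t$ the four indicators sum to exactly $2$.

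\textbf{Strong test.} Your median reduction is the paper's. Your treatment of $a+b\ge1$ is fine; the paper's Case~1 silently assumes $b<1$. Monotonicity of $\phi$ is more than you need. It suffices that $\phi-a$ changes sign only at $\alpha=1$ on $[1-\Delta,1+\Delta]$, and that sign pattern is all the paper's argument uses.

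In the case $a+b<1$ that you flag, however, the gap cannot be closed as the statement stands. The paper's Case~2 has a sign error: $g'(\alpha)=b^\alpha\ln(1/b)-(a+b)^\alpha\ln\big(1/(a+b)\big)\ge0$ iff $\alpha\le x^*$, not iff $\alpha\ge x^*$. So $g=\phi-a$ rises from $g(0)=-a$ to a single peak and falls back to $-a$ as $\alpha\to\infty$. It therefore has a root other than $1$ (or a double root at $1$), and that root can lie in $[1-\Delta,1+\Delta]$.

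Here is a concrete instance. Take $a=0.3$ and $b=0.2$ (i.e.\ $p=e^{-0.3}$, $r=e^{-0.2}$), $\tilde\alpha\sim U[0.5,1.5]$, and $\gamma\ln(y/x)=0.299<a$, so expected utility picks $A$ and $C$. Then $\rho(A,B)=\Pr(0.3^{\tilde\alpha}>0.299)=\Pr(\tilde\alpha<1.0028)>\frac12$. On the other side, $0.5^\alpha-0.2^\alpha$ peaks at about $0.3010$ near $\alpha\approx0.92$ and exceeds $0.299$ only on $(0.814,1.034)$. So $\rho(C,D)\approx0.22<\frac12$, and the strong test reports a common ratio effect.

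An extra hypothesis is therefore required, for example $a+b\ge1$ (i.e.\ $pr\le1/e$), or $\Delta$ small enough to exclude the second root of $\phi-a$.

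Finally, ``median $1$'' alone does not give the converse half of your sandwich. Put masses $0.5,0.2,0.3$ at $1-\Delta,1,1+\Delta$, take $a<1\le a+b$, and set $\gamma\ln(y/x)$ slightly above $a$. This gives $\rho(A,B)=0.5>0.3=\rho(C,D)$. You need something like $\Pr(\tilde\alpha<1)=\Pr(\tilde\alpha>1)$ together with full support, which the paper also assumes tacitly.
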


\begin{remark}
Proposition~\ref{prop:prelec} means that, even though the deviations from expected utility are ``unbiased'' in the sense of being equally likely on each side of expected utility, the valuation test gives a biased conclusion regarding the common ratio effect. We should note that the proof allows for correlated draws of the random coefficients, and that the unbiasedness of the strong paired choice test only requires that the median of $\tilde \alpha$ is 1, not that its distribution be symmetric.

Finally, we should note that the second statement in the proposition is the most relevant case for the existing experimental literature. In 97\% of the studies reported by \cite{Blavatskyy_Panchenko_Ortmann_2023}, it holds that $p>1/e$. Moreover, in 10 out of 15 experimental cases of \mnossc, they have $p>\frac{1}{e}$, in which the sign test is biased against the common ratio effect. 
\end{remark}

\begin{proof}[\textbf{Proof of Proposition~\ref{prop:prelec}}]

Throughout the proof, we use the notation $a=\ln(1/p)$ and $b=\ln(1/r)$, and assume that $a\neq 1$ (i.e., $p\neq \frac{1}{e}$). Note that $a,b>0$. Under the assumptions we have made, $m_{\mathrm{AB}} < m_{\mathrm{CD}}$ if and only if \[0>
\ln m_{\mathrm{AB}} - \ln m_{\mathrm{CD}} = \frac{1}{\gamma}  [\ln \tilde v(p) - \ln \tilde v(pr) + \ln \tilde v(r)].
\] 

Note that
\begin{align*}
  \ln \tilde v(p) - \ln \tilde v(pr) + \ln \tilde v(r)
& = - (\ln(1/p))^{\tilde \alpha_1} + (\ln (1/p) + \ln(1/r))^{\tilde \alpha_2} - (\ln(1/r))^{\tilde \alpha_2}  \\
& = (a+b)^x - b^x - a^y,
\end{align*}
where $x=\tilde \alpha_1$ is the draw relevant for $m_{\mathrm{AB}}$ and $y=\tilde \alpha_2$ the one relevant for $m_{\mathrm{CD}}$. Let
\[
y(x) = \frac{1}{\ln(a)}\ln\left[(a+b)^x - b^x  \right]
\] be the value of $y$ for which we get $ \ln \tilde v(p) - \ln \tilde v(pr) - \ln \tilde v(r)=0$. 

Let $c = a+b$ and $d = b$ and consider $f(x) = \ln(c^x - d^x)$. Note that $c>d$. Note also that the sign of $f''$ equals the sign of $y''$ if and only if $\ln(a)>0$, which occurs exactly when $p<1/e$. For the remainder of the proof, we assume that $\ln(a)>0$ and obtain the first statement in Proposition~\ref{prop:prelec}. The second statement follows by reversing the signs. 

So we have that
\[
f''(x) = \frac{\left(c^x (\ln c)^2 - d^x (\ln d)^2\right) \left(c^x - d^x\right) - \left(c^x \ln c - d^x \ln d\right)^2}{\left(c^x - d^x\right)^2}
\]
The denominator is positive. We only need to study the numerator, $N(x)$.

\begin{align*} 
N(x) 
&= \left[ c^{2x} (\ln c)^2 - c^x d^x (\ln c)^2 - c^x d^x (\ln d)^2 + d^{2x} (\ln d)^2 \right] \\ 
&\quad - \left[ c^{2x} (\ln c)^2 - 2 c^x d^x \ln c \ln d + d^{2x} (\ln d)^2 \right]  \\
&= - c^x d^x (\ln c)^2 - c^x d^x (\ln d)^2 + 2 c^x d^x \ln c \ln d \\ 
&= - c^x d^x \left[ (\ln c)^2 + (\ln d)^2 - 2 \ln c \ln d \right] 
\end{align*}

Thus,
\[N(x) = -c^x d^x (\ln c - \ln d)^2<0\] as $c>d$.
Conclude that $f(x)$, and therefore $y(x)$, is strictly concave. 

Now consider a pair $(\tilde x,\tilde y)\in [1-\Delta,1+\Delta]\times [1-\Delta,1+\Delta]$. If $\tilde y>y(\tilde x)$ then $(a+b)^{\tilde x} - b^{\tilde x}< a^{\tilde y}$ and therefore $\ln \tilde v(p) - \ln \tilde v(pr) - \ln \tilde v(r)<0$. If $\tilde y<y(\tilde x)$ then $(a+b)^{\tilde x} - b^{\tilde x}> a^{\tilde y}$ and therefore $\ln \tilde v(p) - \ln \tilde v(pr) - \ln \tilde v(r)>0$. Since $y(1)=1$ and $y(\cdot)$ is strictly concave, its graph lies below the tangent line at $y(1)$. This line bisects the square $[1-\Delta,1+\Delta]\times [1-\Delta,1+\Delta]$. So the hypograph of $y(\cdot)$ is a proper subset of the area below this tangent. Thus, the area of $ [1-\Delta,1+\Delta]\times [1-\Delta,1+\Delta]$ that lies above the graph of $y(\cdot)$ is strictly greater than the area that lies below. 

Hence, for any symmetric distribution of $(\tilde x,\tilde y)$ on $[1-\Delta,1+\Delta]\times [1-\Delta,1+\Delta]$, we have that the probability that $\ln m_{\mathrm{AB}} - \ln m_{\mathrm{CD}} = \ln \tilde v(p) - \ln \tilde v(pr) - \ln \tilde v(r)<0$ is strictly greater than the probability that $\ln \tilde v(p) - \ln \tilde v(pr) - \ln \tilde v(r)>0$.

To prove the final statement regarding the strong paired choice test, let $T=\frac{\gamma \ln(y/x)}{\ln(1/p)}$. Note that 
\begin{align*}
\rho(A,B) & =\Pr[u(x)>\tilde v(p)u(y)]  = \Pr[(\frac{x}{y})^\gamma>\exp(-(\ln(1/p))^{\tilde \alpha})]\\
&=\Pr[\ln(1/p)^{\tilde{\alpha}-1}>\frac{\gamma \ln(y/x)}{\ln(1/p)}]=\Pr[a^{\tilde{\alpha}-1}>T].
\end{align*}
Similarly, 
\begin{align*}
\rho(C,D) & =\Pr[\gamma\ln(x)-(\ln(1/r))^{\tilde \alpha}>\gamma\ln(y)-(\ln(1/pr))^{\tilde \alpha}] \\
& =\Pr[\frac{(\ln(1/pr))^{\tilde \alpha}-(\ln(1/r))^{\tilde \alpha}}{\ln(1/p)}>\frac{\gamma \ln(y/x)}{\ln(1/p)}]\\
& =\Pr[\frac{(a+b)^{\tilde \alpha} - b^{\tilde \alpha}}{a} >T ].
\end{align*} 
We shall show that $\rho(A, B)\ge \frac{1}{2}$ if and only if $\rho(C, D)\ge \frac{1}{2}$. Note that $a^{\alpha-1}=\frac{(a+b)^{\alpha} - b^{\alpha}}{a}=1$ when $\alpha=1$.  Hence, it is enough to show that 
\[\Pr[a^{\tilde{\alpha}-1}>1]=\frac{1}{2}=\Pr[\frac{(a+b)^{\tilde \alpha} - b^{\tilde \alpha}}{a} >1].\]
To prove the first equality, note that $a^{\alpha-1}$ is strictly monotonic in $\alpha$ (either increasing or decreasing). Hence, when $a>1$, $a^{\alpha-1}>1$ for all $\alpha\in (1, 1+\Delta)$ and $a^{\alpha-1}<1$ for all $\alpha\in [1-\Delta, 1)$. When $a<1$, $a^{\alpha-1}<1$ for all $\alpha\in (1, 1+\Delta)$ and $a^{\alpha-1}>1$ for all $\alpha\in [1-\Delta, 1)$. We obtain $\Pr[a^{\tilde{\alpha}-1}>1]=\frac{1}{2}$ in either case.

To prove the second equality, let us first prove the following useful fact.

\medskip
\noindent\textbf{Fact:} If $g(x)=(a+b)^x-b^x-a$ is strictly monotonic on $[1-\Delta, 1+\Delta]$, then  $\Pr[\frac{(a+b)^{\tilde \alpha} - b^{\tilde \alpha}}{a} >1]=\frac{1}{2}$.

\smallskip
Since $g(1)=0$ and $g$ is strictly monotonic, we have either $g(x)>0$ for all $x\in (1, 1+\Delta]$ and $g(x)<0$ for all $x\in [1-\Delta, 1)$ or $g(x)<0$ for all $x\in (1, 1+\Delta]$ and $g(x)>0$ for all $x\in [1-\Delta, 1)$. Hence, we obtain $\Pr[\frac{(a+b)^{\tilde \alpha} - b^{\tilde \alpha}}{a} >1]=\frac{1}{2}$. \smallskip

We now consider two cases. 

\smallskip
\noindent\textbf{Case 1.} $a+b\ge 1$. 

\smallskip
In this case, $g(x)$ is strictly increasing. To see this, note that $g'(x)=(a+b)^x\,\ln(a+b)-b^x\,\ln(b)$. Since $a+b\ge 1$, we have $(a+b)^x\,\ln(a+b)\ge 0>b^x\,\ln(b)$. Hence, we obtain the desired result from the fact. 

\medskip
\noindent\textbf{Case 2.} $a+b<1$. 

\smallskip
In this case, 
\[g'(x)=(a+b)^x \ln(a+b)-b^x\ln(b)\ge 0 \text{ iff }x\ge x^*:=\frac{\ln\big(\frac{\ln(1/b)}{\ln(1/(a+b))}\big)}{\ln(\frac{a+b}{b})}>0.\]
If $x^*\not\in [1-\Delta, 1+\Delta]$, then $g$ is strictly monotonic on $[1-\Delta, 1+\Delta]$. Hence, we obtain the desired result from the fact. 

We now assume $x^*\in [1-\Delta, 1+\Delta]$. Note that it is not possible to have $x^*>1$. If $x^*>1$, then $g$ is strictly decreasing on $[0, 1]\subset [0, x^*]$. However, we have $g(0)=-a<g(1)=0$, a contradiction.

Hence, we should have $x^*\in [1-\Delta, 1]$. In this case, since $g$ is strictly increasing on $[x^*, 1+\Delta]$ and $g(1)=0$, we have $g(x)>0$ for any $x\in (1, 1+\Delta]$ and $g(x)<0$ for any $x\in [x^*, 1)$. Moreover, since $g$ is strictly decreasing on $(0, x^*)$ and $g(0)=-a<0$, we also have $g(x)<0$ for any $x\in [1-\Delta, x^*)$. To sum up, we have $g(x)>0$ for any $x\in (1, 1+\Delta]$ and $g(x)<0$ for any $x\in [1-\Delta, 1)$, which gives the desired result. 

\end{proof}

\end{document}